\definecolor{darkblue}{rgb}{0.0, 0.0, 0.55}
\algnewcommand\algorithmicinput{\textbf{Input:}}
\algnewcommand\INPUT{\item[\algorithmicinput]}
\algnewcommand\algorithmicoutput{\textbf{Output:}}
\algnewcommand\OUTPUT{\item[\algorithmicoutput]}
\algnewcommand\algorithmiccomplexity{\textbf{Asymptotic complexity:}}
\algnewcommand\COMPLEXITY{\item[\algorithmiccomplexity]}
\renewcommand{\vec}[1]{\boldsymbol{#1}}
\def\ad{^{\dagger}}
\newcommand{\dimg}{\operatorname{dim}(\mathfrak{g})}
\newcommand{\irrep}{\LC_\lambda}
\newcommand{\irrepa}{\LC_{\lambda_a}}
\newcommand{\g}{\mathfrak{g}}
\newcommand{\G}{\mathcal{G}}
\def\gsim{\g{\text - }{\rm sim}}
\newcommand{\adrep}[1]{\Phi_{\lambda}^{\operatorname{ad}}(#1)}
\newcommand{\stab}{{\rm N}(\mathcal{L}_{\lambda})}
\newcommand{\staba}{{\rm N}(\mathcal{L}_{\lambda_a})}
\newcommand{\dya}[1]{\ket{#1}\!\bra{#1}}
\newcommand{\poly}{\operatorname{poly}}
\newcommand{\GC}{\mathcal{G}}
\newcommand{\OC}{\mathcal{O}}
\newcommand{\SC}{\mathcal{S}}
\renewcommand{\geq}{\geqslant}
\renewcommand{\leq}{\leqslant}
\newcommand{\LC}{\mathcal{L}}
\DeclareMathOperator*{\argmax}{arg\,max}
\DeclareMathOperator*{\argmin}{arg\,min}
\renewcommand{\vec}[1]{\boldsymbol{#1}}  
\newcommand{\bs}{\textsf{BS}}
\newcommand{\thv}{\vec{\theta}}
\def\be{\begin{equation}}
\def\ee{\end{equation}}
\def\bs{\begin{split}}
\def\e{\end{split}}
\def\ba{\begin{eqnarray}}
\def\bea{\begin{eqnarray}}
\def\tea{\end{eqnarray}}
\def\ea{\end{eqnarray}}
\def\eea{\end{eqnarray}}
\newcommand\mf[1]{\mathfrak{#1}}
\newcommand\mbb[1]{\mathbb{#1}}
\newtheorem{theorem}{Theorem}
\newtheorem{lemma}{Lemma}
\newtheorem{corollary}{Corollary}
\newtheorem{definition}{Definition}
\newtheorem{result}{Result}
\def\be{\begin{equation}}
\def\te{\end{equation}}
\def\ee{\end{equation}}
\def\ba{\begin{eqnarray}}
\def\bea{\begin{eqnarray}}
\def\tea{\end{eqnarray}}
\def\ea{\end{eqnarray}}
\def\eea{\end{eqnarray}}
\begin{document}

\title{Lie-algebraic classical simulations for quantum computing}

\author{Matthew L. Goh}
\email{matt.goh@merton.ox.ac.uk}
\affiliation{Theoretical Division, Los Alamos National Laboratory, Los Alamos, New Mexico 87545, USA}
\affiliation{Department of Materials, University of Oxford, Parks Road, Oxford OX1 3PH, United Kingdom}

\author{Martin Larocca}
\affiliation{Theoretical Division, Los Alamos National Laboratory, Los Alamos, New Mexico 87545, USA}
\affiliation{Center for Nonlinear Studies, Los Alamos National Laboratory, Los Alamos, New Mexico 87545, USA}

\author{Lukasz Cincio}
\affiliation{Theoretical Division, Los Alamos National Laboratory, Los Alamos, New Mexico 87545, USA}

\author{M. Cerezo}
\affiliation{Information Sciences, Los Alamos National Laboratory, Los Alamos, New Mexico 87545, USA}
 
\author{Fr\'{e}d\'{e}ric Sauvage}
\affiliation{Theoretical Division, Los Alamos National Laboratory, Los Alamos, New Mexico 87545, USA}
\affiliation{Quantinuum, Partnership House, Carlisle Place, London SW1P 1BX, United Kingdom}

\begin{abstract}
The classical simulation of quantum dynamics plays an important role in our understanding of quantum complexity, and in the development of quantum technologies. Efficient techniques such as those based on the Gottesman-Knill theorem for Clifford circuits, tensor networks for low entanglement-generating circuits, or Wick's theorem for fermionic Gaussian states, have become central tools in quantum computing. In this work, we contribute to this body of knowledge by presenting a framework for classical simulations, dubbed `$\gsim$', which is based on the underlying Lie algebraic structure of the dynamical process.
When the dimension of the algebra  grows at most polynomially in the system size, there exists observables for which the simulation is efficient. Indeed, we show that $\gsim$ enables new regimes for classical simulations, is able to deal with certain forms of noise in the evolution, as well as can be used to tackle several paradigmatic variational and non-variational quantum computing tasks.    
For the former, we perform  Lie-algebraic simulations to train and optimize  parametrized quantum circuits (thus effectively showing that some variational models can be dequantized), design  enhanced parameter initialization strategies, solve tasks of quantum circuit synthesis, and train a quantum-phase classifier. For the latter, we report large-scale noiseless and noisy simulations on benchmark problems.  By comparing the limitations of $\gsim$ and certain Wick's theorem-based simulations, we find that the two methods become inefficient for different types of states or observables, hinting at the existence of distinct, non-equivalent, resources for classical simulation.
\end{abstract}
\maketitle

\section{Introduction}\label{sec:intro}
At a fundamental level, the ability to \emph{classically} simulate quantum dynamics in certain regimes helps us refine our understanding of the true nature of quantum complexity. Different approaches allowing for scalable classical simulations have elucidated distinct aspects of quantumness by restricting the set of operations performed, the initial states evolved, and the observables measured. These include, but are not limited to, the role of the discrete-group structure of Clifford operations in the case of stabilizer simulations~\cite{gottesman1997stabilizer}, the importance of entanglement in the case of tensor networks~\cite{Vidal2003Efficient}, and the existence of transformations to non-interacting particles in the case of free fermions~\cite{wick1950evaluation,valiant2001quantum,terhal2002classical}. For all the these approaches,  the computational resources required for simulations scale only \emph{polynomially} in the system size, as opposed to the \emph{exponential} cost of the generic or unrestricted case. Such simulations are deemed classically efficient or scalable.

At a practical level, the ability to efficiently simulate quantum systems facilitates the development of quantum technology. As the size of experimentally demonstrated quantum systems is starting to exceed what can be simulated classically, it is of great importance to rely on faithful classical computational results~\cite{kim2023evidence}, even if they are restricted to special types of quantum evolutions, input states, and observables. In this context, classical simulations can serve the role of benchmarks for currently developed devices and can be used for the purpose of characterization. Furthermore, classical simulations can find many more applications along the development and study of variational quantum computing. 
These encompass variational quantum algorithms (VQAs)~\cite{cerezo2020variationalreview,bharti2021noisy} and quantum machine learning (QML) methods~\cite{schuld2015introduction,biamonte2017quantum,larocca2022group,cerezo2022challenges}, that most typically rely on the ability to optimize over parameters of quantum circuits. Despite the promises of such algorithms and demonstrations for relatively small system sizes, it remains unclear how viable these are when scaled to larger sizes~\cite{mcclean2018barren,cerezo2020cost,anschuetz2022beyond}. 
As such, scalable quantum simulators can serve to study the true potential
and limitations of VQAs and QML models, and to extend their capabilities. For instance, classical simulations can be used to provide exact inputs for learning-based error mitigation strategies~\cite{czarnik2020error,strikis2020learning,google2020observation,montanaro2021error}, to probe the trainability of quantum circuits at large scale~\cite{matos2022characterization},  for the purpose of pre-training the models~\cite{grant2019initialization,verdon2019learning,sauvage2021flip,rad2022surviving,liu2022mitigating,mitarai2022quadratic,ravi2022cafqa,cheng2022clifford,dborin2022matrix,mele2022avoiding,rudolph2022synergy}, but also to dequantize certain architectures~\cite{cerezo2023does,angrisani2023learning,bermejo2024quantum,lerch2024efficient,schreiber2023classical,jerbi2023shadows,shao2023simulating,basheer2023alternating,shaffer2023surrogate,anschuetz2022efficient,mele2024efficient}.

\begin{figure*}
    \centering
    \includegraphics[width=1\textwidth]{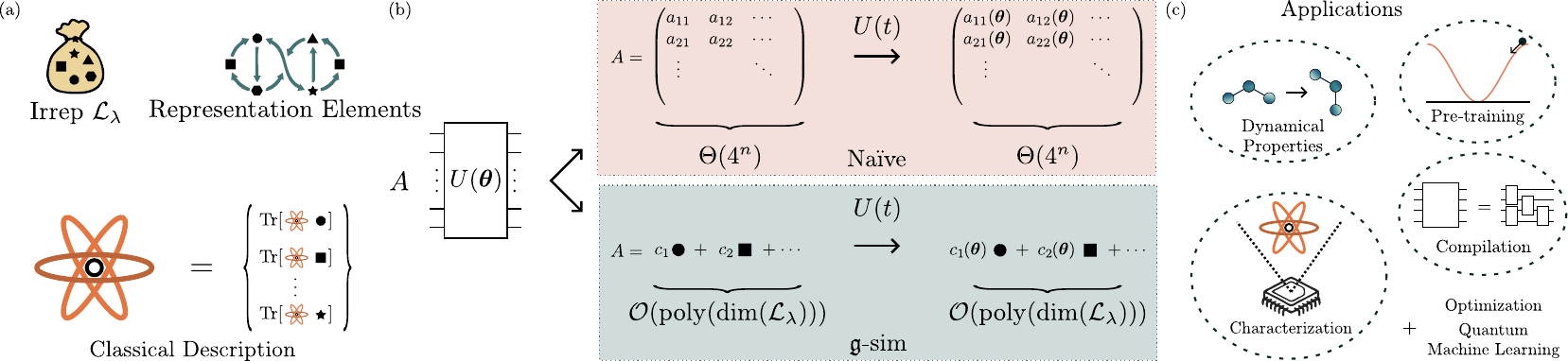}
    \caption{\textbf{Theoretical framework and applications of $\g$-sim.} \textbf{(a)} Essential components of the Lie-algebraic representation of quantum dynamics. First, we need a description of the Lie algebra $\g$ associated with a unitary evolution of interest~\eqref{eq:lie_alg}, as well as an irreducible representation (irrep) $\irrep$: a subspace of the operator space that is invariant under the action of $\mathcal{G}=e^{\g}$ (Definition.~\ref{def:invariant}).
    Here we depict the elements of $\irrep$ as a collection of black geometric shapes. Next, we need to know how these elements inter-connect to each other via the dynamics. This is determined entirely by the so-called representation elements~\eqref{eqn:structure_factors}. The final ingredient of $\gsim$ is a classical description of the input state, i.e., the expectation values with respect to the elements of $\irrep$~\eqref{eq:descr_inout}. \textbf{(b)} Comparison of na\"ive and Lie-algebraic approaches to quantum dynamics in the Heisenberg picture. In the na\"ive approach, an observable $A$ is viewed as a linear operator acting on an $n$-qubit Hilbert space $\mathcal{H}$, with $\operatorname{dim}(\mathcal{H})=2^n$. The observable is represented as a $2^n\times 2^n$ matrix whose coefficients evolve according to the von Neumann equation of motion, and thus the approach \emph{always} scales as $\Theta(4^n)$ and is not classically scalable. In the Lie-algebraic approach, the observable is instead viewed as a linear combination of distinct basis terms in $\g$, whose coefficients couple to each other over time via the representation elements.
    Whenever $\dimg\in\OC(\operatorname{poly}(n))$, $\gsim$ yields a scalable classical simulation framework for irreps with $\dim(\irrep)\in\OC(\operatorname{poly}(n))$. \textbf{(c)} Summary of the potential applications of  $\gsim$. These efficient classical simulations have utility in simulation and optimization of quantum systems, including pre-training of VQA or QML problems, compiling unitary processes to compact quantum circuits, and characterizing QML optimization landscapes. }
    \label{fig:schematic}
\end{figure*}

Among the scalable simulation methods that have been proposed, we focus on those based on the Lie-algebraic structure of the underlying dynamics~\cite{somma2005quantum,somma2006efficient,anschuetz2022efficient}.
Here, simulation complexity scales with the dimension (denoted $\dimg$) of the Lie-algebra (denoted $\mathfrak{g}$) induced by 
 the operators generating 
the dynamics of the system - these notions are defined more precisely soon. 
Crucially, in certain cases 
such dimension can be substantially smaller that the dimension of the 
Hilbert space of the system of interest, allowing for more efficient simulations that would be entailed generically.
In particular, such simulations become scalable whenever $\dimg$ grows polynomially with the system size $n$ (i.e., $\dimg\in\OC(\operatorname{poly}(n))$). For example, this is known to occur in systems with with permutation symmetries \cite{schatzki2022theoretical,anschuetz2022efficient}, underlying free-fermionic algebras \cite{bonet2020nearly,kokcu2022fixed} , continuous-variable or free-bosonic systems~\cite{qvarfort2022solving, barthe2024gate, somma2024shadow} and others~\cite{west2023provably,kazi2024analyzing}. The study and cataloging of the many Lie algebras occurring in quantum systems is an active field of research~\cite{oszmaniec2017universal,pozzoli2022lie,wiersema2024classification,kokcu2024classification,kazi2024analyzing, aguilar2024full}.

In this work, we present a framework for the efficient implementation of Lie-algebraic simulations (referred to as $\gsim$) and their utilization.  
 As a first contribution, we extend techniques from Refs.~\cite{somma2005quantum,somma2006efficient}, significantly widening their scope. We formulate these techniques in a modern language oriented to the quantum computing community, offer new optimizations that improve their efficiency and extend their utility to optimizing circuits (as opposed to merely simulating them). Comparing $\gsim$ against Wick-based simulation techniques we show that these can be efficient for different types of states and observables, hinting at some deep connections with resource theory~\cite{chitambar2019quantum}, and demonstrating that our proposed method enables new classical simulation scenarios beyond those reachable via Wick's theorem.
Our second contribution is to showcase the utility of $\gsim$ in four different tasks pertaining to the development of quantum technologies. These include the study of the optimization landscapes of parametrized quantum circuits, improving on the initializations of such circuits, problems of quantum circuit synthesis, and the training of a binary quantum-phase classifier. A high-level overview of our theoretical framework and applications for this work is provided in Fig.~\ref{fig:schematic}. 

In the first part of this work (Sec.~\ref{sec:framework}) we review the theory grounding Lie-algebraic simulations and provide guidelines for their efficient implementations.
The exposition here is aimed to be pedagogical, with an eye towards generic and efficient algorithms. 
It follows from Refs.~\cite{somma2005quantum,somma2006efficient}, and extends them in terms of scope (increased set of initial states and of observables supported together with the ability to deal with certain forms of noise in the circuits) and implementation (including both the ability to simulate and differentiate evolution). 

In the second part (Sec.~\ref{sec:demo}), we summarize necessary conditions for the scalability of $\gsim$ and demonstrate its capabilities in non-variational benchmark problems. These include both noiseless and noisy simulations of $n=200$ qubits systems and will serve us to highlight fundamental differences with established classical simulations techniques.

In the last part (Sec.~\ref{sec:applications}), $\gsim$ is put into practice in four distinct scenarios. 
As a warm-up example, we provide a large scale study of the overparametrization phenomenon (Sec.~\ref{sec:overparametrization}).
Then, $\gsim$ is applied for the purpose of the initialization of (non-classically-simulable) circuits in two different setups (Sec.~\ref{sec:pretraining}).
Next, tasks of circuit synthesis and dynamical evolution are addressed (Sec.~\ref{sec:compilation}). 
Finally, as an application in the context of QML, we resort to $\gsim$ to train a quantum-phase classifier  (Sec.~\ref{sec:supervised_QML}).
Overall, these examples aim at illustrating the diversity of tasks that can be accomplished via $\gsim$, and also showcase some avoidable pitfalls.

\section{Framework}
\label{sec:framework}

\subsection{Set-up}\label{sec:setup}
Through this article we focus on quantum dynamics realized by means of  quantum circuits (i.e., digital quantum computations), but stress that the same principles can equally be applied to continuously driven systems (i.e., analog quantum computations).
Also, we note that while exposed in the context of the qubit model, the same principles are readily ported to other models of quantum physics.

With this scope in mind, we will henceforth study a system of $n$ qubits with an associated Hilbert space $\mathcal{H}=(\mbb{C}^2)^{\otimes n}$ of dimension $d=2^n$, and we denote as $\mathcal{L}$ the space of operators acting on $\mathcal{H}$.
We further consider the case when the dynamics of the quantum system are determined by a unitary operator
\begin{equation}
    U(\bm{\theta})=\prod_{l=1}^L e^{-i\theta_{l}H_l} \,, \label{eqn:PeriodicStructureAnsatz}
\end{equation}
specified in terms of  angles (or parameters) $\bm{\theta}=[\theta_{1},\hdots, \theta_{L}]$ and Hermitian operators $H_l$ taken from some set of  \emph{gate generators} $\SC$ of the circuit.

In the following we will be concerned with tasks of simulation and optimization of quantum circuits. 
For an observable $O$, an initial state $\rho^{(\operatorname{in})}$, and a circuit $U(\bm{\theta})$, \emph{simulation} consists of evaluating the expectation value of $O$ for the evolved state 
$\rho^{(\operatorname{out})}(\bm{\theta})\equiv U(\bm{\theta}) \rho^{(\operatorname{in})} U^{\dagger}(\bm{\theta})$. That is, by simulation we mean the evaluation of a quantity of the form
\begin{equation}\label{eq:expectation}
    \langle O(\bm{\theta}) \rangle \equiv\Tr[O U(\bm{\theta}) \rho^{(\operatorname{in})} U^{\dagger}(\bm{\theta})]= \Tr[O \rho^{(\operatorname{out})}(\bm{\theta})]\,.
\end{equation}

In addition to simulation, one is often interested in actively \emph{optimizing} the system's dynamics to achieve a certain objective. 
For instance, this is at the core of most problems found in the field of VQAs and QML. These rely on optimizing circuit parameters to decrease the value of a loss function that can be evaluated through expectation values. 
As an example, in a variational quantum eigensolver (VQE)~\cite{peruzzo2014variational} aiming at preparing the ground state of an Hamiltonian $H$, this loss would be the expectation value $\langle H(\bm{\theta}) \rangle$. Similarly, 
in the context of QML, the same circuit unitary may be applied to a dataset of quantum states, and the loss to be minimized could be a function of  expectation values of a set of observables $\{O_i\}$ that are obtained for each state. 
In both cases, optimization of the circuit parameters can be enhanced by  the ability to compute gradients of the form $\partial_{\bm \theta} \langle O(\bm{\theta}) \rangle$. 

\subsection{Lie theory}
\label{sec:lie}
Here we recall elementary results of Lie theory that are relevant to this work. For a general treatment of Lie groups and Lie algebras, we recommend standard textbooks ~\cite{hall2013lie,kirillov2008introduction}. A more specific presentation of Lie theory in the context of quantum control can be found in Refs.~\cite{dalessandro2010introduction,zeier2011symmetry}, or in the context of quantum circuits in Refs.~\cite{larocca2021diagnosing,larocca2021theory}. 
We start by reviewing the concepts of dynamical Lie algebra (Definition~\ref{def:dynamical_lie_algebra}) and dynamical Lie group (Definition~\ref{def:dynamical_lie_group}) which characterize the expressiveness of quantum circuits of the form in Eq.~\eqref{eqn:PeriodicStructureAnsatz}.

\begin{definition}[Dynamical Lie algebra]\label{def:dynamical_lie_algebra} Consider an ansatz of the form in Eq.~\eqref{eqn:PeriodicStructureAnsatz}. 
Its dynamical Lie algebra $\mathfrak{g}$ is defined as the vector space spanned by all the possible nested commutators of $i\SC=\{ iH_1,\dots,iH_L\}$. That is, 
\begin{equation}\label{eq:lie_alg}
\mathfrak{g}={\rm span}_{\mathbb{R}}\left\langle \{iH_1, \ldots,iH_L \}\right\rangle_{\operatorname{Lie}} \subseteq \mathfrak{su}(2^n)\,,
\end{equation}
where $\left\langle \cdot\right\rangle_{\operatorname{Lie}}$ denotes the Lie closure, i.e., the set of elements obtained by repeatedly taking nested commutators until no new linearly independent elements are obtained.  We  denote as $\{iG_\gamma\}_{\gamma=1}^{\dimg}$ a set of anti-Hermitian operators which form a Schmidt-orthonormal basis of $\g$.
\end{definition}
The dynamical Lie algebra $\g$ constitutes a  Lie subalgebra of the special unitary Lie algebra $\mathfrak{su}(2^n)$, the space of linear anti-Hermitian operators acting on the $n$-qubit system. For details on numerical algorithms to compute the Lie closure we refer to Refs.~\cite{zeier2011symmetry,larocca2021diagnosing,wiersema2020exploring}.

In correspondence to the dynamical Lie algebra $\mathfrak{g}$, we can also define the dynamical Lie group $\mathcal{G}$ as follows.

\begin{definition}[Dynamical Lie group]\label{def:dynamical_lie_group} The dynamical Lie group $\mathcal{G}$ of a circuit of the form Eq.~\eqref{eqn:PeriodicStructureAnsatz} is determined by its dynamical Lie algebra $\mathfrak{g}$ and defined as
\begin{equation}
\mathcal{G}=e^{\mathfrak{g}}\equiv\{e^{iA},\quad iA\in\mathfrak{g}\} \,.
    \label{eqn:DLAfromDLG}
\end{equation}
\end{definition}
Notably, the dynamical Lie group  $\mathcal{G}$ corresponds to all possible unitaries that can be implemented by circuits of the form in Eq.~\eqref{eqn:PeriodicStructureAnsatz}. 
That is, for every $ V \in \mathcal{G}$ there exists (at least) one choice of parameter values $\bm{\theta}$ for a sufficiently large, but finite, number of layers $L$ such that $U(\bm{\theta})=V$~\cite{dalessandro2010introduction}. Overall, $\g$ determines the group $\G$ of unitaries that could be realized, and we can take their dimensions in correspondence: ${\rm dim}(\G) \text{ (as a smooth differentiable manifold)}  = \dimg$.

Having defined the Lie algebra and group associated with the dynamical process in Eq.~\eqref{eqn:PeriodicStructureAnsatz}, we can now study how they act on the operator space $\mathcal{L}$.
Our strategy to evaluate Eq.~\eqref{eq:expectation} will consist in evolving operators -- either the state $\rho^{(\operatorname{in})}$ or the observable $O$ -- through conjugation by $U(\bm{\theta})$. Hence we are interested in the action $A\mapsto VAV^{\dagger}$ of the group unitaries $V \in \mathcal{G}$ onto operators  $A \in \mathcal{L}$.
For arbitrary $V$, evaluating such action would scale polynomially with $\dim(\mathcal{L})=d^2$.
However, we can exploit the structure of $\mathcal{G}$ to decompose $\mathcal{L}$ into smaller subspaces in order to simplify the problem. These are the invariant subspaces that are now defined.

\begin{definition}[Invariant subspace]\label{def:invariant}  We say that an operator subspace $\irrep \subset \mathcal{L}$ is invariant under the action of $\mathcal{G}$ whenever 
\begin{equation}
    VAV^\dag \in \irrep, \forall A\in \irrep\text{ and } \forall V \in \mathcal{G}.
\end{equation}
Furthermore we say that $\irrep$ is irreducible (and call it an irrep) when it is invariant and when there exists no non-trivial $\mathcal{L}_{\mu}\subset\irrep$ that is also invariant. 
\end{definition}

For compact groups acting on finite dimensional spaces, as considered throughout, one can always decompose $\mathcal{L}$ in terms of invariant subspaces $\mathcal{L}_{\lambda}$ as
\begin{equation}\label{eq;irreps-operator-space}
 \LC\cong\bigoplus_\lambda \irrep.
\end{equation}
The maximal of such decomposition, whereby each $\irrep$ does not further contain invariant subspaces, is called the irreducible (irrep) decomposition. 
The benefits of such decomposition is that the complexity of evaluating Eq.~\eqref{eq:expectation} becomes related to $\dim(\irrep)$ that can be substantially smaller than $ \dim(\mathcal{L})$.

In what follows we will generically refer to the $\mathcal{L}_{\lambda}$ as \emph{the irreps}, but we note that all results hold true provided that $\irrep$ are invariant subspaces, not necessarily irreducible ones.
For instance, in the demonstration and application sections of Sec.~\ref{sec:demo} and~\ref{sec:applications}, we will focus on the case where $\irrep = i \g$. Although not irreducible in general, $i\g$ is an invariant subspace, as shown in Appendix~\ref{appendix:unit_evolve}.
Furthermore, we will denote as $B_\alpha^{(\lambda)}$ (with $\alpha=1, \hdots,\dim(\irrep)$) the Hermitian operators such that $\{B_\alpha^{(\lambda)} \}_\alpha$ forms a   Schmidt-orthonormal basis of $\irrep$. We say that an observable $A$ is \emph{supported} by the irrep $\irrep$ whenever $A \in \irrep$, such that $A$ can be entirely decomposed in the \emph{basis of observables} $\{B_\alpha^{(\lambda)} \}_\alpha$. 

The fact that operator space decomposes into irreps allows us to precisely characterize how a given operator transform under the action of a unitary in the dynamical Lie group.  For this purpose, we begin by recalling that given an operator $iG_\gamma\in\mathfrak{g}$, its adjoint action on an irrep $\irrep$ is  fully characterized by taking a Hermitian basis of $\irrep$  and computing the  \textit{representation elements} $f^{(\lambda)\gamma}_{\alpha\beta}= \Tr[iB_\beta^{(\lambda)}\left[iG_\gamma,iB_\alpha^{(\lambda)},\right]] \in \mbb{R}$, through
\begin{equation}
\left[iG_\gamma,iB_\alpha^{(\lambda)}\right]=\sum_{\beta=1}^{\dim(\irrep)}f^{(\lambda)\gamma}_{\alpha\beta} iB_\beta^{(\lambda)}\,.
    \label{eqn:structure_factors}
\end{equation}
By linearity, these constants fully capture the action of any Lie-algebra elements over any operators belonging to $\irrep$, in terms of $\dimg$ matrices of sizes $\dim(\irrep)\times \dim(\irrep)$. We leverage the following definition.

\begin{definition}[adjoint representation
of $\g$ in the $\lambda$-th irrep]
Given an  operator  $iG_\gamma$ in $\g$, its  adjoint representation on the irrep $\irrep$  is obtained via the map $\Phi_{\lambda}^{\operatorname{ad}}:\mathfrak{g}\mapsto \mathbb{R}^{\dim(\irrep)\times \dim(\irrep)}$ and is defined by
\begin{equation}    
    \left(\Phi_{\lambda}^{\operatorname{ad}}(iG_\gamma)\right)_{\alpha \beta}\equiv f^{(\lambda)\gamma}_{\alpha \beta} = \Tr[iB_\beta^{(\lambda)}\left[iG_\gamma,iB_\alpha^{(\lambda)}\right]] \,.
    \label{eqn:adjoint_representation_defn}
\end{equation}
\label{defn:adjoint_representation}
\end{definition}
We note that if the underlying algebra  is compact, and if the basis observables $B_\alpha^{(\lambda)}$ of the irreps are Schmidt-orthonormal, then the adjoint representation is faithful~\cite{somma2005quantum} (i.e., the map $\Phi_{\lambda}^{\operatorname{ad}}$ is injective). It is clear that since the adjoint representation is linear, Definition~\ref{defn:adjoint_representation} implies that knowledge of the adjoint  representations of the $\dimg$ basis elements is sufficient to obtain the representation of any element of $\g$. That is, for any $A =\sum_\gamma (\bm{w})_\gamma G_\gamma$ (with  $\bm{w}\in \mathbb{R}^{\dimg}$) supported by the algebra, one has  $\Phi_{\lambda}^{\operatorname{ad}}(A)=\sum_\gamma (\bm{w})_\gamma \Phi_{\lambda}^{\operatorname{ad}}(G_\gamma)$.

Next, by means of the exponential map, we can  obtain the adjoint representation of the elements of the Lie group $\G$. 
\begin{definition}[Adjoint representation of $\G$ in the $\lambda$-th irrep]
The adjoint representation of $\mathfrak{g}$ in $\irrep$ induces the adjoint representation $\Phi_{\lambda}^{\operatorname{Ad}}$ of $\mathcal{G}$. This representation is a linear map $\Phi_{\lambda}^{\operatorname{Ad}}:\G\mapsto \mbb{GL}(\mbb{R}^{\dim(\irrep)})\subset\mathbb{R}^{\dim(\irrep)\times \dim(\irrep)}$ from the group $\GC$ to the group of invertible linear operators $\mbb{GL}(\mbb{R}^{\dim(\irrep)})$, defined as
\begin{equation}
    \label{eqn:adjoint_rep_exponentiation}
    \Phi_{\lambda}^{\operatorname{Ad}}( U=e^{iA} )=e^{i \bar{\bm{A}}} \,,
\end{equation}
 for all $iA \in \g$ (or any $U \in \G$), and with $\bar{\bm{A}}\equiv \Phi_{\lambda}^{\operatorname{ad}}(A)$.
 \label{defn:adjoint_representation_unit}
\end{definition}

The main appeal of such representation is that  we can evaluate  the action of any $U \in \G$ over any basis element $B_\alpha^{(\lambda)}$ of any $\irrep$ as 
\begin{equation}
    U^\dagger B_\alpha^{(\lambda)} U = \sum_{\beta}\left(\Phi_{\lambda}^{\operatorname{Ad}}(U)\right)_{\alpha \beta}B_\beta^{(\lambda)}\,.
    \label{eqn:adj_heis}
\end{equation}
We refer the reader to Appendix~\ref{appendix:unit_evolve} for more details on Eq.~\eqref{eqn:adj_heis}.  
As further detailed below, these adjoint representations allow us to  perform (Heisenberg) evolution of any operator in an irrep $\irrep$ under the adjoint action of any $U \in \G$ with a time complexity scaling with $\OC(\poly(\dim(\irrep)))$ as opposed to $\Theta(4^n)$.

Before moving further, we highlight that faithfulness of a Lie algebra representation does not imply faithfulness of its corresponding Lie group representation. To understand this, we need to recall what the center $Z(\mathcal{G})$ of $\mathcal{G}$ is.
\begin{definition}[Center of a group]\label{def:center}
    The center     of a group $\mathcal{G}$ is the subset of $\mathcal{G}$ that simultaneously commutes with all elements of $\mathcal{G}$. That is,
    \begin{equation}
        Z(\mathcal{G})\equiv \{W\in \mathcal{G} \mid \forall U\in\mathcal{G},WU=UW \}\,.
    \end{equation}
\end{definition}
Specifically, one can readily verify that for any $U=e^{iA}$ and  $W\in Z(\mathcal{G})$, we have that $\forall \lambda$
\begin{equation}\label{eq:center-adjoint}
    \Phi^{\operatorname{Ad}}_{\lambda}( WU)=\Phi^{\operatorname{Ad}}_{\lambda}( UW)=\Phi^{\operatorname{Ad}}_{\lambda}( U)\,, 
\end{equation}
showing that a non-trivial center $Z(\mathcal{G})$ results in unfaithfulness of the $\Phi^{\operatorname{Ad}}_{\lambda}$. 
    Even though the Lie algebras $\mathfrak{g}$ considered in this work are centerless, the Lie groups $\mathcal{G}=e^{\mathfrak{g}}$ can still have a nontrivial center, leading to unfaithfulness of $\Phi^{\operatorname{Ad}}_{\lambda}$. While not an issue for most situations encountered, we will see later in Sec.~\ref{sec:compilation} that this introduces additional considerations for unitary compilation.

\subsection{$\mathfrak{g}$-sim principles}
\label{sec:gsim_principles}

In this section we present the main results which comprise the foundation of $\gsim$. These include noiseless simulations with observables supported by one or multiple irreps (Sec.~\ref{sec:gsim_simulation}) or products of observables (Sec.~\ref{sec:gsim_correlator}), noisy simulations (Sec.~\ref{sec:noisy_sim}), and optimizations (Sec.~\ref{sec:gsim_optim}). Additional details are provided in App.~\ref{appendix:proofs},~\ref{appendix:efficient_impl} and~\ref{appendix:gradients}.

\subsubsection{Simulation with observables supported by a single irrep}
\label{sec:gsim_simulation}

Let us first consider the case of  simulation problems where the observable of interest is supported by a single irrep. That is, we want to evaluate Eq.~\eqref{eq:expectation} when $O\in \irrep$ for some $\lambda$.

Let us define $\dim(\irrep)$-dimensional vectors of expectation values that captures the description of the input and output states $\rho^{(\operatorname{in/out})}$ over the basis of observables $B_{\alpha}^{(\lambda)}$:
\begin{equation}\label{eq:descr_inout}
\left(\bm{e}^{(\operatorname{in/out})}_\lambda\right)_\alpha\equiv\Tr[B_{\alpha}^{(\lambda)}\rho^{(\operatorname{in/out})}]\,.
\end{equation}
The following result holds.
\begin{result}[Simulation of observables in the $\lambda$-th irrep]
Consider a circuit of the form in Eq.~\eqref{eqn:PeriodicStructureAnsatz}, and let $O$ be an observable with support in $\irrep$ such that $O=\sum_\alpha (\bm{w})_\alpha B_\alpha^{(\lambda)}$ for $\bm{w}\in \mathbb{R}^{\dim(\irrep)}$.  Then, given $\bm{e}^{(\operatorname{in})}_\lambda$, we can compute
\begin{equation}
    \langle O(\bm{\theta}) \rangle=\bm{w}\cdot\bm{e}^{(\operatorname{out})}_\lambda \,,
    \label{eqn:expectations_from_e_vec}
\end{equation}
with the vector of output expectation values obtained as
\begin{equation}
    \bm{e}^{(\operatorname{out})}_\lambda=\left(\prod_{l=1}^Le^{-i\theta_{l} \bar{\bm{H}}_{l}}\right)\cdot\bm{e}^{(\operatorname{in})}_\lambda, 
    \label{eqn:gsim_ansatz_evolution}
\end{equation}
where $\bar{\bm{H}}_{l}\equiv \Phi_{\lambda}^{\operatorname{ad}}(H_{l})$. 
\label{thm:basic_evolution}
\end{result}

Result~\ref{thm:basic_evolution} indicates that, in order to compute $\langle O(\bm{\theta}) \rangle$, it suffices to have a decomposition of $O$ in the Hermitian basis of the irrep, the adjoint representations $\bar{\bm{H}}_{k}$ of the gate generators, and the vector of expectation values of the basis observables $\{B_\alpha^{(\lambda)}\}_{\alpha=1}^{\dim(\irrep)}$ for the input state. We stress that although Eq.~\eqref{eqn:gsim_ansatz_evolution} resembles unitary evolution in the Hilbert space $\mathcal{H}$, it differs subtly. The vectors $\bm{e}^{(\operatorname{in/out})}_\lambda$ are not state vectors in the usual sense, but vectors of expectation values of observables, and consequently the phases (signs) are indeed physical. Since the $\bar{\bm{H}}_{l}$ are purely imaginary Hermitian matrices, the gate representations $e^{-i\theta_l\bar{\bm{H}}_{l}}$ are real-valued, and describe linear coupling between observables induced by unitary evolution.

An immediate consequence of Result~\ref{thm:basic_evolution} is that we can compute expectation values of observables supported by the irrep with a time complexity scaling as $\OC( L\dim(\irrep)^3)$, with the cubic power arising from matrix exponentiation. 
However, as detailed in Appendix \ref{sec:efficient_implementation_gsim}, we provide  a more efficient algorithm based on pre-computation of the eigendecomposition of the $\bar{\bm{H}}_{k}$ matrices. Further improvements for algebras with a Pauli basis are provided in Appendices \ref{appendix:sparsity} and \ref{appendix:eigendecompositions}. This leads us to our first main contribution, which improves on the time complexity of $\gsim$.
\begin{theorem}\label{theo:scaling}
    Computing expectation values of observables supported by a given irrep $\irrep$ using $\gsim$ has a time complexity in $\OC(L\dim(\irrep)^{2})$ for circuits of the form in Eq.~\eqref{eqn:PeriodicStructureAnsatz}.
\end{theorem}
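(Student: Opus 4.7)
The plan is to refine the cost accounting of the evolution formula from Result~\ref{thm:basic_evolution}. Recall that it reduces the computation of $\langle O(\bm{\theta})\rangle$ to evaluating $\bm{e}^{(\operatorname{out})}_\lambda = \prod_{l=1}^{L} e^{-i\theta_l \bar{\bm{H}}_l}\,\bm{e}^{(\operatorname{in})}_\lambda$ followed by the inner product $\bm{w}\cdot\bm{e}^{(\operatorname{out})}_\lambda$. The final inner product costs $\OC(\dim(\irrep))$ and is never the bottleneck. A naive implementation of the product would form each matrix exponential $e^{-i\theta_l \bar{\bm{H}}_l}$ explicitly via diagonalization, at cost $\OC(\dim(\irrep)^3)$ per layer, yielding the sub-optimal scaling $\OC(L\dim(\irrep)^3)$. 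The goal is therefore to shave off one factor of $\dim(\irrep)$ in the $L$-dependent part.

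The key observation to exploit is that the per-layer matrices $\bar{\bm{H}}_l$ are not arbitrary: each is drawn from the fixed finite set $\{\Phi_{\lambda}^{\operatorname{ad}}(H):H\in\SC\}$ of adjoint representations of the gate generators, whose cardinality $|\SC|$ is independent of $L$. I would therefore pre-compute, once, the unitary diagonalization $\bar{\bm{H}}_l = P_l D_l P_l^{\dagger}$ of each distinct generator. This is well-defined because the $\bar{\bm{H}}_l$ are Hermitian, which follows directly from Eq.~\eqref{eqn:adjoint_representation_defn} together with the Schmidt-orthonormality of the basis $\{B_\alpha^{(\lambda)}\}$ and the anti-Hermiticity of the $iH_l$. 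The total pre-processing cost is $\OC(|\SC|\dim(\irrep)^3)$, which is $L$-independent.

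Given the stored pairs $(P_l,D_l)$, each layer is applied to the running vector $\bm{v}$ as the composition $\bm{v}\mapsto P_l^{\dagger}\bm{v}\mapsto e^{-i\theta_l D_l}(P_l^{\dagger}\bm{v})\mapsto P_l\bigl(e^{-i\theta_l D_l}P_l^{\dagger}\bm{v}\bigr)$. The two outer operations are dense matrix--vector products costing $\OC(\dim(\irrep)^2)$ each, while the middle step acts diagonally and costs $\OC(\dim(\irrep))$ (including the on-the-fly exponentiation of $\dim(\irrep)$ eigenvalue phases). Summing over the $L$ layers yields the claimed $\OC(L\dim(\irrep)^2)$ bound.

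The main subtle point --- more a matter of careful bookkeeping than a genuine obstacle --- is the accounting of the $L$-independent pre-processing: the statement should be read either as a bound on the per-simulation incremental cost after an amortized one-time diagonalization of the ansatz generators, or under the mild assumption $L=\Omega(|\SC|\dim(\irrep))$ so that the $\OC(|\SC|\dim(\irrep)^3)$ pre-computation is subdominant. A secondary sanity check is that the purely imaginary Hermitian structure of the $\bar{\bm{H}}_l$ noted after Result~\ref{thm:basic_evolution} ensures the evolved vector stays real throughout, so no extra complex-arithmetic overhead appears. Further sparsity-based refinements for Pauli-basis algebras, which can lower the per-layer cost below $\dim(\irrep)^2$, are deferred to Appendices~\ref{appendix:sparsity} and~\ref{appendix:eigendecompositions}, but are not needed for the stated bound.
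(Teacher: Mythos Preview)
Your proposal is correct and follows essentially the same route as the paper: pre-compute the eigendecomposition of each distinct gate-generator adjoint representation $\bar{\bm{H}}_l$ once, then apply each layer to the running vector via two $\OC(\dim(\irrep)^2)$ matrix--vector products and one $\OC(\dim(\irrep))$ diagonal scaling, treating the $L$-independent diagonalization cost as amortized pre-processing. Your bookkeeping remarks (amortization of the one-time $\OC(|\SC|\dim(\irrep)^3)$ cost, reality of the evolved vector) match the paper's discussion in Appendix~\ref{sec:efficient_implementation_gsim} and Algorithm~\ref{alg:evolution}.
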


Result~\ref{thm:basic_evolution} and Theorem~\ref{theo:scaling} can be naturally extended to expectation values of operators supported in multiple irreps $\irrep$ due to linearity. In particular, we have that the following result holds. 

\begin{result}[Simulation of operators in multiple irreps.]
Consider a circuit of the form in Eq.~\eqref{eqn:PeriodicStructureAnsatz}, and let $O$ be an observables with support in a set of irreps $\{\LC_{\lambda_i}\}_i$ such that $O=\sum_iO^{(\lambda_i)}$ where  $O^{(\lambda_i)}=\sum_\alpha(\bm{w}^{(\lambda_i)})_\alpha B_\alpha^{(\lambda_i)}$ for $\bm{w}^{(\lambda_i)}\in \mathbb{R}^{\dim(\irrep)}$.   Then, given the vectors $\bm{e}^{(\operatorname{in})}_{\lambda_i}$, we can compute
\begin{equation}
    \langle O(\bm{\theta}) \rangle=\sum_i\bm{w}^{(\lambda_i)}\cdot\bm{e}^{(\operatorname{out})}_{\lambda_i} \,,
\end{equation}
with each vector of output expectation values is obtained as in Eq.~\eqref{eqn:gsim_ansatz_evolution}. 
\label{thm:multi_evolution}
\end{result}
Here, we can see that Theorem~\ref{theo:scaling} directly implies the following corollary.
\begin{corollary}\label{corro:one}
    Computing expectation values of an observable with support in a constant set of irreps $\{\LC_{\lambda_i}\}_i$  has a time complexity in $\OC(L\max_i\{\dim(\LC_{\lambda_i})^{2}\})$.
\end{corollary}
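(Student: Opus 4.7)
The plan is to reduce Corollary~\ref{corro:one} directly to Theorem~\ref{theo:scaling} by exploiting the linear decomposition across irreps guaranteed by Result~\ref{thm:multi_evolution}. First, I would observe that the hypothesis $O = \sum_i O^{(\lambda_i)}$ with $O^{(\lambda_i)} \in \LC_{\lambda_i}$ lets us treat each summand in isolation: since $\langle O(\bm{\theta}) \rangle = \sum_i \bm{w}^{(\lambda_i)}\cdot \bm{e}^{(\operatorname{out})}_{\lambda_i}$, the total running time is at most the sum of the per-irrep running times plus the cost of the final $|\{\lambda_i\}|$ inner products.

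Second, I would apply Theorem~\ref{theo:scaling} to each irrep $\LC_{\lambda_i}$ separately. For each fixed $i$, computing $\bm{e}^{(\operatorname{out})}_{\lambda_i}$ via Eq.~\eqref{eqn:gsim_ansatz_evolution} (using the eigendecomposition-based implementation discussed before the theorem) costs $\OC(L\, \dim(\LC_{\lambda_i})^2)$. Taking the final dot product with $\bm{w}^{(\lambda_i)}$ adds only a subleading $\OC(\dim(\LC_{\lambda_i}))$ cost, so the per-irrep cost is dominated by the evolution step.

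Third, I would sum these contributions. The total time is
\begin{equation}
\sum_i \OC\bigl(L\,\dim(\LC_{\lambda_i})^{2}\bigr) \;=\; \OC\!\left( L\,|\{\lambda_i\}| \cdot \max_i \dim(\LC_{\lambda_i})^{2} \right).
\end{equation}
Since the hypothesis asserts that $|\{\lambda_i\}|$ is a constant (independent of $n$ and $L$), it absorbs into the $\OC$ notation, yielding the stated bound $\OC\!\left(L\, \max_i\{\dim(\LC_{\lambda_i})^{2}\}\right)$.

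There is essentially no serious obstacle here: the corollary is a direct packaging of Theorem~\ref{theo:scaling} under the linear decomposition of Result~\ref{thm:multi_evolution}, and the only thing to be careful about is (i) stating explicitly that the per-irrep adjoint representations $\Phi_{\lambda_i}^{\operatorname{ad}}(H_l)$ and eigendecompositions can be precomputed once (so that the $L$-dependence is linear rather than quadratic in $L$, as already established in Theorem~\ref{theo:scaling}), and (ii) noting that the constant-size assumption on $\{\lambda_i\}$ is what prevents the sum over irreps from contributing a hidden factor that could scale with $n$.
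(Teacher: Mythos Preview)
Your proposal is correct and matches the paper's approach: the paper simply states that ``Theorem~\ref{theo:scaling} directly implies the following corollary'' without further elaboration, and you have spelled out exactly the intended details---applying Theorem~\ref{theo:scaling} per irrep via the decomposition in Result~\ref{thm:multi_evolution}, summing, and absorbing the constant number of irreps into the $\OC$ notation.
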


\subsubsection{Simulation with products of observables each supported by a single irrep}
\label{sec:gsim_correlator}
Taking a step further, we can simulate expectation values of product of observables each supported by a single irrep.
First, let us focus on simulating correlators of the form
 $ \langle O^{(1)} O^{(2)}(\thv) \rangle=\Tr[O U(\bm{\theta}) \rho^{(\operatorname{in})} U^{\dagger}(\bm{\theta})]$ for $O=O^{(1)}O^{(2)}$ with $O^{(1)} \in \mathcal{L}_{\lambda_1}$, and $O^{(2)}\in \mathcal{L}_{\lambda_2}$. For these, we define a $(\dim (\mathcal{L}_{\lambda_1})\times \dim (\mathcal{L}_{\lambda_2}))$-dimensional matrix of expectation values that captures the description of the states $\rho^{(\operatorname{in/out})}$ over products of the basis observables
  \begin{equation}
(\bm{E}^{(\operatorname{in/out})})_{\alpha\beta}\equiv \Tr[B_{\alpha}^{(\lambda_1)} B_{\beta}^{(\lambda_2)}\rho^{(\operatorname{in/out})}]\,.        \label{eqn:initial_final_correlators}
    \end{equation}
 As shown in Appendix~\ref{appendix:simul_product}, we have the following result.

\begin{result}[Simulation of product of two operators]
Consider a circuit of the form in Eq.~\eqref{eqn:PeriodicStructureAnsatz}, and let $O^{(1)}$ and $ O^{(2)}$ be  observables with support in $\mathcal{L}_{\lambda_1}$ and $\mathcal{L}_{\lambda_2}$ respectively such that $O^{(1)}=\sum_\alpha(\bm{w}^{(1)})_\alpha B_{\alpha}^{(\lambda_1)}$ and $O^{(2)}=\sum_{\beta}(\bm{w}^{(2)})_{\beta}B_{\beta}^{(\lambda_2)}$ for $\bm{w}^{(1)} \in \mathbb{R}^{\dim(\mathcal{L}_{\lambda_1})}$ and $\bm{w}^{(2)} \in \mathbb{R}^{\dim(\mathcal{L}_{\lambda_2})}$.  Then, given $\bm{E}^{(\operatorname{in})}$, we can compute
\begin{equation}
    \langle O^{(1)} O^{(2)}(\thv) \rangle=(\bm{w}^{(1)})^T\cdot  \bm{E}^{(\operatorname{out})}\cdot \bm{w}^{(2)}\,,
\end{equation}
with the superscript $T$ denoting the matrix transpose, and with the matrix of output expectation values obtained as 
\small
\begin{equation}
\bm{E}^{(\operatorname{out})}\!\!=\left(\prod_{l=1}^Le^{-i\theta_{l} \bar{\bm{H}}_{l}}\right) \bm{E}^{(\operatorname{in})} \left(\prod_{l=1}^L e^{-i\theta_{l} \bar{\bm{H}}_{l}}\right)^T\!\!.\label{eqn:adjoint_correlator_evolution}
\end{equation}
\normalsize
\label{thm:second_order_correlator_evolution}
\end{result}

The complexity of such simulations is provided in the following Corollary, and we note that it is the same as Corollary.~\ref{corro:one}.
\begin{corollary}\label{corro:two}
    Computing expectation values of products of observables  $O=O^{(1)}O^{(2)}$ with $O^{(1)} \in \mathcal{L}_{\lambda_1},O^{(2)}\in \mathcal{L}_{\lambda_2}$ has a time complexity in $\OC(L\max_i\{\dim(\LC_{\lambda_i})^{2}\})$.
\end{corollary}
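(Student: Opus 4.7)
The plan is to avoid forming the full evolved correlator matrix $\bm{E}^{(\operatorname{out})}$, which would cost $\OC(L \max_i \dim(\LC_{\lambda_i})^3)$ naively, and instead evolve the coefficient vectors $\bm{w}^{(1)}$ and $\bm{w}^{(2)}$ in the Heisenberg picture on each side, reducing the problem to two independent applications of Theorem~\ref{theo:scaling}.

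More concretely, start from Result~\ref{thm:second_order_correlator_evolution}. Writing $d_i = \dim(\LC_{\lambda_i})$ and $\bar{U}_l^{(\lambda_i)} = e^{-i\theta_l \Phi_{\lambda_i}^{\operatorname{ad}}(H_l)}$, the expectation value equals
\begin{equation}
\langle O^{(1)} O^{(2)}(\thv)\rangle = (\bm{w}^{(1)})^{T} \Bigl(\prod_{l=1}^L \bar{U}_l^{(\lambda_1)}\Bigr)\, \bm{E}^{(\operatorname{in})}\, \Bigl(\prod_{l=1}^L \bar{U}_l^{(\lambda_2)}\Bigr)^{T} \bm{w}^{(2)}.
\end{equation}
By associativity, regroup this as $\tilde{\bm{w}}^{(1)T} \bm{E}^{(\operatorname{in})} \tilde{\bm{w}}^{(2)}$, where $\tilde{\bm{w}}^{(i)} \equiv \bigl(\prod_{l=1}^L \bar{U}_l^{(\lambda_i)}\bigr)^{T} \bm{w}^{(i)}$ lives in $\mathbb{R}^{d_i}$. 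Since the $\Phi_{\lambda_i}^{\operatorname{ad}}(H_l)$ are purely imaginary Hermitian (so the $\bar{U}_l^{(\lambda_i)}$ are real), taking the transpose merely flips the sign of the angles, and computing $\tilde{\bm{w}}^{(i)}$ is structurally identical to the single-irrep simulation handled by Result~\ref{thm:basic_evolution}.

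Next, invoke Theorem~\ref{theo:scaling} to evolve each $\tilde{\bm{w}}^{(i)}$: using the precomputed eigendecompositions of $\Phi_{\lambda_i}^{\operatorname{ad}}(H_l)$ for every generator and every irrep label, each of the $L$ layers acts as a basis change, diagonal phase, and inverse basis change on a $d_i$-dimensional vector, costing $\OC(d_i^2)$ per layer. Thus $\tilde{\bm{w}}^{(1)}$ and $\tilde{\bm{w}}^{(2)}$ are obtained in $\OC(L d_1^2)$ and $\OC(L d_2^2)$ time respectively.

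The remaining contraction $\tilde{\bm{w}}^{(1)T} \bm{E}^{(\operatorname{in})} \tilde{\bm{w}}^{(2)}$ is a single matrix-vector-vector product costing $\OC(d_1 d_2)$, which is dominated by $\max(d_1, d_2)^2$. Summing all contributions yields $\OC\bigl(L(d_1^2 + d_2^2) + d_1 d_2\bigr) = \OC\bigl(L \max_i \dim(\LC_{\lambda_i})^{2}\bigr)$, establishing the claim. I do not anticipate a real obstacle here: the only conceptual step is recognizing that one should propagate the two coefficient vectors independently through their respective adjoint representations rather than evolving the whole $d_1 \times d_2$ correlator matrix; once this is noticed, the bound is an immediate corollary of Theorem~\ref{theo:scaling}.
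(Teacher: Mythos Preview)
Your proof is correct and follows essentially the same approach as the paper. In Appendix~\ref{appendix:simul_product} the paper likewise evolves the coefficient vectors $\tilde{\bm{w}}^{(a)}$ via matrix--vector products (rather than propagating the full $d_1\times d_2$ matrix $\bm{E}$) and then contracts against the precomputed input tensor, yielding the same $\OC(L\max_i d_i^2)$ bound for $M=2$.
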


Going further, $\gsim$ can be extended to compute expectation values of correlators of any orders, such as a $M$-th order product $O^{(1)}\dots O^{(M)}$ with each $O^{(m)}$ supported by potentially distinct $\irrep$. However, as detailed in Appendix~\ref{appendix:simul_product}, simulating an $M$-th order correlator for arbitrary initial state has complexity $\mathcal{O}(\max_i\{\dim(\LC_{\lambda_i})^{M}\})$ that becomes impractical for large-order $M$. In this work, however we require only $M\leq 2$.

Finally, we note that, in general, a given observable supported by an irrep $\irrep$ can also be decomposed as a product of two or more observables supported in different irreps. Hence, depending on the decomposition adopted, its simulation could be performed through Result~\ref{thm:basic_evolution} or through the results of this section. Different approaches entail different computational scalings such that, in implementations, one may be favored to the other.

\subsubsection{Noisy simulations}\label{sec:noisy_sim}
From the previous discussion, one can already see how some restricted forms of noise can be introduced in the simulations.
Over-- and under-- rotations in the circuits gates of Eq.~\eqref{eqn:PeriodicStructureAnsatz} are readily captured through updates of the angles $\theta_l \rightarrow \theta_l + \varepsilon_l$ by some perturbations $\varepsilon_l$.
Fluctuations or uncertainty in such angles are modeled by repeatedly sampling perturbations $\varepsilon_l$ and averaging over the different realizations.
Going further, one can also incorporate noise channels consisting of probabilistic occurrences of elements of $\mathcal{G}$.
All these, however, may be rather limited as we are forced to consider noise that is generated by $\g$, and introduce some overhead due to repeated sampling of the noise realizations.
However, as we now discuss (and detail further in App.~\ref{app:noise}), often we can do much more and do it more efficiently. 

To capture the noise that can be simulated in $\gsim$, we introduce the notion of the normalizer of an irrep. 
\begin{definition}[Normalizer of $\irrep$]\label{def:normalizer}
Given an irrep $\irrep$, we define its normalizer as all the operators that leaves the subspace invariant under conjugation: 
\begin{equation}
    \stab := \{A \in \mathcal{L}\, | \, A^{\dagger} X A \in \irrep, \, \forall X \in \irrep \}.
\end{equation}
\end{definition}
We stress that by definition of the irreps, as per Definition.~\ref{def:invariant}, we have $\G \subset {\rm N}(\irrep)$ for any $\irrep$. However, 
the normalizer may contain many more operators $A \notin \G$. Notably, whenever the Lie algebra $\g$ has a basis of Pauli operators, any of the exponentially many Pauli operators will belong to the normalizer, and that independently of $\dimg$. As we will soon see, this means that action of Pauli noise channels can be dealt with in $\gsim$. A demonstration of such possibilities is provided in Sec.~\ref{sec:demo}.

Let us consider a noise channel $\Lambda$ with decomposition
\begin{equation}\label{eq:main_kraus_dec}
    \Lambda[\cdot] = \sum_k E_{k} \cdot E^{\dagger}_{k},
\end{equation}
in terms of Kraus operators $E_k$ satisfying $\sum_k E_k^{\dagger} E_k = I$.
Whenever $E_k \in \stab$ \emph{for all} $k$, and if $O \in \irrep$, we have the guarantee that $\Lambda[O] \in \irrep$.
Akin to Definition~\ref{defn:adjoint_representation_unit}, we can define the adjoint representation of any $A\in \stab$, and by linearity of $\Lambda$, as $\dim(\irrep) \times \dim(\irrep)$ matrices.
The latter fully captures the action of $\Lambda$ on $\irrep$: given the adjoint representation $\Phi^{\rm Ad}_{\lambda}(\Lambda)$ of a channel $\Lambda$ and an observable $O=\sum_\alpha (\bm{w})_\alpha B_\alpha^{(\lambda)} \in \irrep$, we get that $\Lambda (O) = \sum_\alpha (\bm{w}')_\alpha B_\alpha^{(\lambda)}$ with updated weights $\bm{w}' = \Phi^{\rm Ad}_{\lambda}(\Lambda)\cdot \bm{w}$.

In noisy simulations with $\gsim$ adjoint representations of the noise channels are simply interleaved in between adjoint representations of the unitary gates and does only incur constant overhead in the simulations. This is formalized in Result~\ref{thm:basic_noisy_evolution} of Appendix~\ref{app:noisy_one_irrep} and captured by the following corollary:
\begin{corollary}\label{corro:three}
    Provided that the noise channels have a Kraus decomposition as per Eq.~\eqref{eq:main_kraus_dec} such that all $E_k \in \stab$, then 
    Corrolary~\ref{corro:one} extends to exact noisy simulations with the same complexities.
\end{corollary}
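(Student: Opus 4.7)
The plan is to establish two things: first, that whenever every Kraus operator $E_k$ of the noise channel $\Lambda$ lies in $\stab$, the action of $\Lambda$ restricted to the irrep $\irrep$ admits a well-defined linear representation as a $\dim(\irrep)\times\dim(\irrep)$ matrix $\Phi^{\rm Ad}_\lambda(\Lambda)$; and second, that applying this matrix inside the $\gsim$ simulation loop preserves the per-layer complexity of Corollary~\ref{corro:one}.

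For the first point, I would invoke Definition~\ref{def:normalizer} to conclude that $E_k^\dagger X E_k\in\irrep$ for every $E_k\in\stab$ and every $X\in\irrep$. Linearity of the Kraus sum in Eq.~\eqref{eq:main_kraus_dec} then immediately yields $\Lambda[X]\in\irrep$ (up to the Heisenberg/Schr\"odinger orientation convention, which only swaps $E_k\leftrightarrow E_k^\dagger$ inside the sum). Expanding $X=\sum_\alpha(\bm{w})_\alpha B_\alpha^{(\lambda)}$ and $\Lambda[X]=\sum_\beta(\bm{w}')_\beta B_\beta^{(\lambda)}$ in the Schmidt-orthonormal basis of $\irrep$, there exist unique coefficients $(\Phi^{\rm Ad}_\lambda(\Lambda))_{\beta\alpha}$ relating $\bm{w}'=\Phi^{\rm Ad}_\lambda(\Lambda)\bm{w}$; by Schmidt orthogonality they can be read off as $(\Phi^{\rm Ad}_\lambda(\Lambda))_{\beta\alpha}=\sum_k\Tr[B_\beta^{(\lambda)}E_k B_\alpha^{(\lambda)}E_k^\dagger]$, a one-off preprocessing computation that does not enter the per-simulation runtime.

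For the second point, a noisy circuit is written as an alternating product of unitary gates $e^{-i\theta_l H_l}$ and channels $\Lambda$. In the $\gsim$ representation this becomes an alternating product of the precomputed matrices $e^{-i\theta_l\bar{\bm{H}}_l}$ and $\Phi^{\rm Ad}_\lambda(\Lambda)$, each of size $\dim(\irrep)\times\dim(\irrep)$, applied to the $\dim(\irrep)$-dimensional vector $\bm{e}^{(\operatorname{in})}_\lambda$. Theorem~\ref{theo:scaling} gives an $\OC(\dim(\irrep)^2)$ cost per unitary layer via the precomputed eigendecomposition trick, while a dense matrix-vector multiplication by a noise matrix is also $\OC(\dim(\irrep)^2)$. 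As long as a constant number of channels is interleaved per circuit layer---the standard setting for gate-level noise---the total operation count stays $\Theta(L)$, and the full noisy simulation costs $\OC(L\dim(\irrep)^2)$. Extending to a constant set of irreps by linearity as in Result~\ref{thm:multi_evolution} and taking the maximum recovers the bound stated in Corollary~\ref{corro:one}.

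The main obstacle I anticipate is interpretational rather than mathematical: the hypothesis ``$E_k\in\stab$ for all $k$'' is \emph{Kraus-decomposition dependent}, since a unitary reshuffling $E_k\mapsto\sum_j U_{kj}E_j$ yields the same channel but may move the operators outside $\stab$. The statement should therefore be read as requiring the \emph{existence} of at least one Kraus decomposition whose individual operators all lie in the normalizer, and this subtlety deserves to be flagged explicitly. Once that is made clear, the remainder of the argument reduces to linearity plus Theorem~\ref{theo:scaling}, with no further technical hurdles.
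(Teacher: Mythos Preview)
Your proposal is correct and follows essentially the same route as the paper: define the matrix representation of each Kraus operator (and hence of the channel) on the irrep via the normalizer property, then interleave these matrices with the gate representations in Eq.~\eqref{eqn:gsim_ansatz_evolution} and count $\OC(\dim(\irrep)^2)$ per layer. Your remark that the hypothesis is Kraus-decomposition dependent and should be read as ``there exists a decomposition with all $E_k\in\stab$'' is a useful clarification that the paper leaves implicit.
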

We see that whenever the noise channels admit a Kraus decompositions with operators supported by $\stab$, we can perform \emph{exact} noisy simulations for observables with support in a constant set of irreps, with the same time and memory complexity as the noiseless simulations.
This is in stark contrast with typical noisy simulations of quantum systems that are either approximate (based on sampling many random realizations of the noise), or exact but incurring a quadratic increase in memory and computing requirements. 
However, as detailed In Appendix.~\ref{app:noisy_product_irrep}, the case of noisy simulations for product of observables needs to be performed though sampling of noise trajectories. Given $S$ of such trajectories, this incurs a complexity $\mathcal{O}(S\max_i\{\dim(\LC_{\lambda_i})^{2}\})$ for an error scaling as $\mathcal{O}(\sqrt{S}^{-1})$.

\subsubsection{Simulation of gradients}
\label{sec:gsim_optim}
In addition to simulating parametrized quantum circuits, in a variational quantum computing setting one aims to \textit{optimize} the parameters to minimize a loss function. In order to benefit from gradient-based training schemes one needs to compute derivatives of expectation values with respect to the circuit parameters. 

Given that Lie-algebraic techniques were originally envisioned for simulating fixed unitary dynamics and not for their optimization, there are no existing methods that use $\gsim$  to compute partial derivatives. However, due to the  form of the evolution in the adjoint representation of Eq.~\eqref{eqn:gsim_ansatz_evolution}, we can port reverse-mode differentiation methods~\cite{jones2020efficientcalculation} to $\gsim$ yielding an efficient algorithm for gradient computation. Such an algorithm is presented in detail in Appendix~\ref{appendix:observable_gradients} and, here, we only remark on the complexity entailed.  
\begin{theorem}[Gradient calculations in $\mathfrak{g}$-sim]
    Computing the gradient of the expectation value of an observable supported by an irrep $\irrep$ using $\gsim$ has a time complexity in $\OC(L\dim(\irrep)^{2})$ for circuits of the form in Eq.~\eqref{eqn:PeriodicStructureAnsatz}.
\end{theorem}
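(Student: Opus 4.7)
The plan is to port reverse-mode automatic differentiation to the $\gsim$ adjoint evolution of Eq.~\eqref{eqn:gsim_ansatz_evolution}, following the approach developed for state-vector simulations in Ref.~\cite{jones2020efficientcalculation}. Setting $D\equiv \dim(\irrep)$ and $\bar{\bm{U}}_l \equiv e^{-i\theta_l \bar{\bm{H}}_l}$, the expectation value reads $\langle O(\bm{\theta})\rangle = \bm{w}^{T} \bar{\bm{U}}_L \cdots \bar{\bm{U}}_1\, \bm{e}^{(\operatorname{in})}_\lambda$, so differentiating the gate at position $k$ yields
\begin{equation}
\partial_{\theta_k} \langle O(\bm{\theta})\rangle = -i\, \bm{w}^{T} \bar{\bm{U}}_L \cdots \bar{\bm{U}}_{k+1}\, \bar{\bm{H}}_k\, \bar{\bm{U}}_k \cdots \bar{\bm{U}}_1\, \bm{e}^{(\operatorname{in})}_\lambda \,.
\end{equation}
The goal is to compute all $L$ such derivatives within a combined budget of $\OC(LD^{2})$.

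First I would carry out a forward sweep that records the intermediate vectors $\bm{f}_k \equiv \bar{\bm{U}}_k \cdots \bar{\bm{U}}_1 \bm{e}^{(\operatorname{in})}_\lambda$ for every $k=0,1,\ldots,L$. Using the pre-computed eigendecompositions of the $\bar{\bm{H}}_l$ that already underlie the proof of Theorem~\ref{theo:scaling}, each update $\bm{f}_k = \bar{\bm{U}}_k \bm{f}_{k-1}$ is realised by an $\OC(D^{2})$ matrix--vector routine (rotate into the eigenbasis, apply the diagonal phase, rotate back). The forward sweep thus costs $\OC(LD^{2})$ in time and $\OC(LD)$ in memory. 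Next I would run the analogous backward sweep on the adjoint row vector $\bm{b}_k^{T} \equiv \bm{w}^{T} \bar{\bm{U}}_L \cdots \bar{\bm{U}}_{k+1}$, initialised with $\bm{b}_L = \bm{w}$ and propagated by $\bm{b}_{k-1}^{T} = \bm{b}_k^{T} \bar{\bm{U}}_k$, again at $\OC(D^{2})$ per step. The derivative at layer $k$ is then recovered from the stored pair as $\partial_{\theta_k}\langle O(\bm{\theta})\rangle = -i\, \bm{b}_k^{T} \bar{\bm{H}}_k\, \bm{f}_{k-1}$, which is evaluated in $\OC(D^{2})$ via one matrix--vector product followed by an inner product. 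Summing the three contributions over the $L$ layers yields the advertised $\OC(LD^{2})$ total.

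The main obstacle is avoiding a hidden $D^{3}$ factor that would appear if one naively re-exponentiated the $\bar{\bm{H}}_l$ at each derivative query. This is resolved exactly as in the simulation case: the eigendecompositions of the $|\SC|$ distinct generator representations are diagonalised once, as an $\OC(|\SC|D^{3})$ preprocessing step whose cost is amortised outside the per-gradient budget, after which every exponential acts on a vector in $\OC(D^{2})$. A secondary book-keeping subtlety is that the $\bar{\bm{H}}_l$ are purely imaginary while the $\bm{e}^{(\operatorname{in})}_\lambda$ are real; one must verify that $\bm{b}_k^{T}\bar{\bm{H}}_k\bm{f}_{k-1}$ is purely imaginary so that the explicit factor $-i$ returns a real gradient, consistent with the reality of $\langle O(\bm{\theta})\rangle$ for Hermitian $O$.
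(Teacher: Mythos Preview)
Your proposal is correct and uses essentially the same reverse-mode differentiation idea as the paper (Appendix~\ref{appendix:observable_gradients}, Algorithm~\ref{alg:gradients}), including the same reliance on pre-computed eigendecompositions to avoid a hidden $D^3$ cost. The one implementation difference worth noting: you store all forward intermediates $\bm{f}_k$ at $\OC(LD)$ memory and read them off during the backward sweep, whereas the paper's Algorithm~\ref{alg:gradients} keeps only the current forward and backward vectors (memory $\OC(D)$) and reconstructs the needed $\bm{f}_{k-1}$ on the fly by applying the inverse gate $\bar{\bm{U}}_k^{-1}$ to $\bm{\eta}$ in lockstep with the backward update of $\bm{\phi}$; this is possible because the $\bar{\bm{U}}_l$ are orthogonal. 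Both variants achieve the stated $\OC(LD^{2})$ time complexity, so your argument establishes the theorem.
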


Overall, the previous results form the basis of $\gsim$ comprising noisy and noiseless simulations and optimizations of quantum ciruits. In Appendices~\ref{appendix:analog_computing} and~\ref{appendix:analog_computing_gradients} we detail how this is extended to analog quantum computing. In the following we demonstrate situations where $\gsim$ leads to scalable computations when we specialize to the case where $\mathcal{L}_{\lambda}=i\g$.

\section{Scalability and comparison}\label{sec:demo}
In this section, we aim to demonstrate the capabilities of $\gsim$ and at contrasting it to other established simulation techniques. 
In Sec.~\ref{sec:gsim_scalability}, we collect and summarize conditions for scalability. In Sec.~\ref{sec:wick_comparison}, we contrast $\gsim$ to simulation techniques based on Wick's theorem, including the standard treatment of free-fermion systems. In Sec.~\ref{sec:polynomial_lie_algebra}, we introduce a Lie algebra of interest $\g_{0}$. This Lie algebra will form the basis of both the numerical examples of this section and of many of the tasks tackled later on in the application section (Sec.~\ref{sec:applications}). In Sec.~\ref{sec:magic_state_demonstration_and_benchmark} we report results for noiseless and noisy simulations for system sizes of $n=200$ qubits and discuss distinction of $\gsim$ compared to related free-fermion methods. Additionally, we present details of a benchmark test to numerically verify polynomial resource scalings.

\subsection{Scalability of $\gsim$}
\label{sec:gsim_scalability}
As discussed in the previous section, $\gsim$ recasts quantum evolution as linear algebra problems on vectors in $\mathbb{R}^{\dim(\irrep)}$ and matrices in $\mathbb{R}^{\dim(\irrep) \times \dim(\irrep)}$, as well as linear combinations thereof. Although such techniques  can be applied to \emph{any} system, most dynamical Lie algebras, and their associated irreps, have dimension scaling as $\Theta(4^n)$ (e.g., randomly sampled anti-Hermitian operators generate the whole special unitary algebra $\mf{su}(d)$~\cite{lloyd1995almost}), and thus this framework does not in general yield an asymptotic advantage. Yet, as pointed out earlier, special cases exist where such dimension scales as $\OC(\poly(n))$, such that $\gsim$ can be used to perform classically efficient simulations despite the exponential dimension of $\mathcal{H}$. 

For convenience, we explicitly reiterate the conditions required for classically efficient simulations via $\gsim$.
\begin{enumerate}
    \item The Lie closure of the gate generators must lead to an algebra $\mathfrak{g}$ such that there exists irreps $\irrep$ with $\dim(\irrep)\in\OC(\poly(n))$.
    \item We must know  a Schmidt-orthonormal basis of the polynomially-large $\irrep$, as well as the non-zero representation elements. 
    \item Observables of interest must be supported by polynomially-large irreps, or products of terms, each supported by polynomially-large irreps, up to some constant order $M$.
    \item The expectation values of the irreps basis elements (or their products up to some fixed order $M$), over the initial state must be known. 
\end{enumerate}
Let us make some brief comments regarding these requirements. First, we recall that while most evolutions lead to exponentially sized algebras, there exists examples where the algebra grows polynomially with the number of qubits~\cite{schatzki2022theoretical,anschuetz2022efficient,bonet2020nearly,kokcu2022fixed,qvarfort2022solving, barthe2024gate, somma2024shadow,west2023provably}, including free-fermions, free-bosons systems with permutation symmetries and more.
We further note that studies of Lie-algebra, originally motivated by problems of controllability, have received  renewed attention lately. While recent works~\cite{wiersema2024classification,kokcu2024classification,kazi2024analyzing, aguilar2024full}, have been mostly limited to qubit systems and to Lie algebras generated by Pauli operators, we expect that it would extend to more families of Lie-algebras. 
In turn, these could offer many more applications for the techniques presented.

\subsection{Comparison of $\gsim$ to other Wick-based simulation methods; connections to resource theory}\label{sec:wick_comparison}

At this point, we find it convenient to briefly compare $\gsim$ to a different Lie-algebraic simulation method: Wick's theorem. Commonly, the latter is used to simulate non-interacting free-fermionic evolutions~\cite{mattuck2012guide}, i.e., unitaries generated by Hamiltonians expressed as a combination of quadratic products of fermion creation and annihilation operators. We refer the reader to Appendix~\ref{app:wick}, but also to Refs.~\cite{somma2005quantum} for additional details on this method. In particular, one can show that there exist polynomially-sized Lie algebras $\g$ for which Wick's theorem enables the efficient classical simulation of expectation values for a $O$ in any $\irrep$ that is expressed as a product of polynomially-many products of elements in $i\g$~\cite{wick1950evaluation}, so long as the initial state is either a generalized coherent state~\cite{gilmore1974properties,perelomov1977generalized,zhang1990coherent,delbourgo1977maximum} (i.e., $\rho^{(\operatorname{in})}=V\dya{\rm hw}V\ad$ for $V\in e^{\g}$ and $\ket{\rm hw}$ the \emph{highest weight state} of $\g$), or a linear combination of polynomially-many generalized coherent state. As such, simulation techniques based on Wick's theorem will become exponentially expensive when the initial state has to be decomposed into an exponential number of generalized coherent states. 
On the other hand, when using $\gsim$, the simulation is efficient for general states (provided we can get the associated expectation values in the basis of $\irrep$) but for observables in irreps whose dimension scales only polynomially with the system size, as per Theorem~\ref{eqn:PeriodicStructureAnsatz}. Broadly speaking, we can summarize these results as the following constraints:
\begin{enumerate}
    \item Wick-based simulation: General observable, special state.
    \item $\gsim$-based simulation: General state, special observable.
\end{enumerate}

The previous distinction between Wick-based techniques and  $\gsim$ shows that the two methods have merit
for different types of states. On the one hand, Wick-based simulations fail for states with exponential ``extent''~\cite{aaronson2004improved,heimendahl2021stabilizer,reardon2024fermionic}, or, in the language of resource theory~\cite{chitambar2019quantum}, for states that are expressed as an exponential combination of free states. On the other hand, if the input state has zero component on the poly-size dimensional irreps, then its dynamics therein are trivial (i.e., $\gsim$ would predict zero for all observables taken from such irreps). Hence, to perform non-trivial simulations, one would have to implement $\gsim$ on the exponentially large irreps, in which case it becomes inefficient. In fact, it has been recently proposed that the irrep decomposition of a quantum state is a measure of its  resourcefulness~\cite{bermejo2025characterizing}, and that highly-resourceful states have little-to-no component in the smaller irreps. Crucially, since this notion of resourcefulness is non-equivalent to the  extent~\cite{bermejo2025characterizing} (i.e., states with both order-one and exponentially large extent can fail to have component in the smaller dimensional irreps), we can see that the distinction between Wick-based techniques and $\gsim$ appears to arise on a more fundamental level --- both methods allow for efficient simulations for low-resource states, with the nature of the resource quantified differently for each method. Hence, $\gsim$ can enable new scenarios for simulation beyond those achievable with Wick-based techniques.

\subsection{A polynomially-large Lie-algebra}
\label{sec:polynomial_lie_algebra}
As a specific example of an algebra with $\dimg\in\operatorname{poly}(n)$, we consider a special model with free-fermion mappings where $\dim(\LC_{\g})\in\OC(n^2)$. This algebra, or its subalgebras, will be used in the numerical simulations performed. The algebra under consideration is given by
\small
\begin{equation}
    \mathfrak{g}_{0}={\rm span}_{\mathbb{R}}\left\langle\left(\bigcup_{\mu,\nu=x,y}\{i\sigma^\mu_j \sigma^\nu_{j+1}\}_{j=1}^{n-1}\right)\cup \{i\sigma^z_j\}_{j=1}^n \right\rangle_{\operatorname{Lie}}\!\!\!\!\!,
    \label{eqn:g_tfxy}
\end{equation}
\normalsize
with $\sigma^\alpha_j$ denoting a single-qubit Pauli operator labelled by $\alpha \in \{x, y, z\}$ acting on the qubit $i$.
A basis for $\mathfrak{g}_{0}$ is given by the set of Paulis~\cite{kokcu2022fixed}
\begin{align}\label{eq:algebra-g0}
i\{ \sigma^z_j, \widehat{\sigma^x_i \sigma^x_j},\widehat{\sigma^y_i \sigma^y_j},\widehat{\sigma^x_i \sigma^y_j},\widehat{\sigma^y_i \sigma^x_j}\,|\, 1\leq i <j\leq n\}\,,
\end{align}
where we used the notation $\widehat{A_iB_j}=A_i \sigma^z_{i+1} \cdots \sigma^z_{j-1} B_j$. This algebra  is a representation of $\mathfrak{so}(2n)$, and hence its  dimension is $\operatorname{dim}(\g_0)=n(2n-1)$~\cite{kokcu2022fixed,kazi2022landscape}. All the representation elements can be obtained with a time complexity of $\OC(n^5)$; in the GitHub repository of Ref.~\cite{gsim_github} we have reported these representation elements together with the basis.

Expectation values of the basis elements in Eq.~\eqref{eq:algebra-g0} (or their products) can be efficiently computed on a classical computer for many families of initial states. These include the highest weight states~\cite{somma2005quantum}, any product states or stabilizer states, and even superpositions of polynomially many of these. Interestingly, as we explore in the following section, these also include initial states that are magic states for the class of circuits under consideration, allowing $\gsim$ to efficiently evolve initial states that typical free-fermionic methods could not support --- even some with exponentially large fermionic extent. 
One could even envision the situation where a generic input state is provided as a physical quantum state or as a description of its preparation. One would then estimate these expectation values by using a quantum computer, preparing or accessing $\rho^{(\operatorname{in})}$, and making measurements of all the irrep basis elements. This procedure yields a form of ``classical description''~\cite{anschuetz2022efficient} of the input state that $\gsim$ can use.

\subsection{Demonstration and resource benchmark of $\gsim$}
\label{sec:magic_state_demonstration_and_benchmark}
To demonstrate the power of $\gsim$, we first showcase its application to a quantum simulation task which, to the best of our knowledge, \emph{would not be possible with any other known classical simulation method}. We seek to simulate the dynamics of the transverse field XY (TFXY) model with random magnetic fields, whose Hamiltonian reads
\begin{equation}
    H_{\text{TFXY}}=\sum_{j=1}^{n-1}(\sigma^x_j\sigma^x_{j+1}+\sigma^y_j\sigma^y_{j+1})+\sum_{j=1}^n b_j \sigma^z_j\,,
    \label{eqn:hamiltonian_tfxy_randomfields}
\end{equation}
where the coefficients $b_j$ are randomly drawn from a Gaussian distribution $N(0,\xi^2)$. We note that Hamiltonian dynamics in $\gsim$ can be implemented either by directly exponentiating the adjoint representation of the Hamiltonian \eqref{eqn:adjoint_rep_exponentiation} or by simulating an appropriate quantum circuit \eqref{eqn:gsim_ansatz_evolution}. Here, we apply the latter as it will allow us to insert noise channels in between gates for subsequent noisy demonstration. The circuit is obtained resorting to a first-order Trotterization of the dynamics with a total of 300 Trotter steps with stepsize $\Delta t = 2$.

For our simulations, we initialize the system in the magic state
\begin{equation}
    \ket{\psi}=\left[\frac{\ket{0000}+\ket{0011}+\ket{1100}+e^{i\tau}\ket{1111}}{2}\right]^{\otimes \frac{n}{4}},
    \label{eqn:initial_magic_state}
\end{equation}
for $n=200$ qubits, and aim at computing the dynamics of observables $O \in \irrep=i\g_0$. Although this state has exponential fermionic extent (i.e., it can only be decomposed into an exponential number of Gaussian states)~\cite{dias2023classical,cudby2023gaussian}, the vector of expectation values needed for $\gsim$ can be trivially classically computed, such that simulations are scalable.
This highlights the differences between $\gsim$ and Wick-based methods, as using Wick's theorem along with the state in Eq.~\eqref{eqn:initial_magic_state} would lead to an exponential computational complexity, whereas the cost of $\gsim$ remains polynomial. Hence, this is an example of a simulation that can be performed efficiently with $\gsim$, but not with closely related methods. However, we note that such simulations with $\gsim$ were possible as we limited ourselves to observables supported by a polynomially-large irrep.

\begin{figure*}
    \centering
    \includegraphics[width=0.9\textwidth]{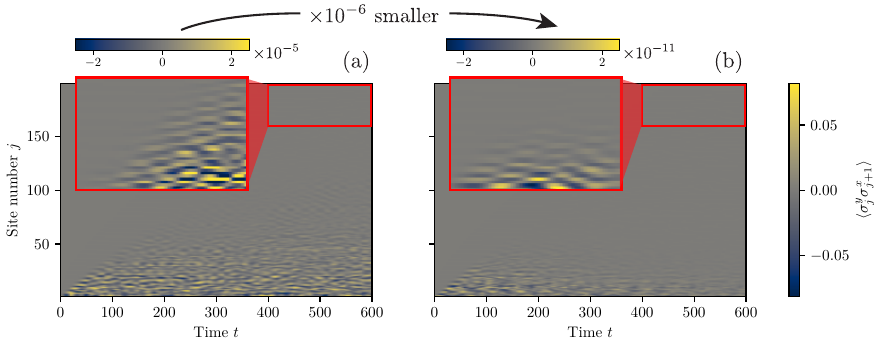}
    \caption{\textbf{Noiseless and noisy dynamics of a 200-qubit magic state using $\gsim$.} We simulate the Trotterized dynamics of an initial magic state \eqref{eqn:initial_magic_state} under a TFXY spin-chain Hamiltonian \eqref{eqn:hamiltonian_tfxy_randomfields}, for $n=200$ qubits and $\tau=2.81$ in $\gsim$. We report values for  correlators of the form $\langle \sigma^y_j \sigma^x_{j+1} \rangle$ along the dynamics. \textbf{(a)} In the absence of noise, we see that the correlations propagate across the whole chain. This simulation ran in 16 minutes on a single CPU core. \textbf{Inset:} Small oscillations in the correlators reach the far end of the chain, confirming that the light cone of the dynamics reaches from the first to last qubit. \textbf{(b)} Including noise in the dynamics, in the form of random 2-qubit Pauli channels~\eqref{eq:pnc} acting after each entangling gate with a fault probability of $p=3\times10^{-4}$, we see that the dissipation weakens the correlations by $6$ orders of magnitude.}
    \label{fig:magic_state}
\end{figure*}

In Figure \ref{fig:magic_state}(a) we explore the dynamics of the system in the absence of noise. In particular we explore the propagation of correlations across the system by evaluating the $\langle \sigma^y_j \sigma^x_{j+1} \rangle $ correlators at different evolution times $t = 0, \hdots, 600$. 
As can be seen, over the time scales probed the correlations propagate to the end of the chain, albeit at small values.
Resolving these values is possible due to the exact nature of the simulations performed.
The entire simulation was completed in 16 minutes on a single CPU core, demonstrating scalability of $\gsim$.

For the Figure \ref{fig:magic_state}(b), we repeat this simulation under the presence of noise using the methods outlined in Section \ref{sec:noisy_sim}. 
Given the irrep of interest, we readily see that any  Pauli operator belongs to the normalizer ${\rm N}(\mathcal{L}_{\lambda}=i \g_0)$ as per Definition.~\ref{def:normalizer}.
Notably, albeit of polynomial dimension, $\irrep$ admits a normalizer containing an exponential number of linearly independent elements.
In turn, this allows us to simulate arbitrary Pauli noise channels.
In our numerical explorations, following every entangling gate, we apply a random 2-qubit Pauli noise channel
\begin{equation}\label{eq:pnc}
    \rho \to (1-p)\rho + \sum_{\mathclap{\nu_1, \nu_2 \in \{1,x,y,z\} \atop (\nu_1, \nu_2) \neq (1,1)}} w_{\nu_1,\nu_2}\sigma_j^{\nu_1} \sigma_k^{\nu_2} \rho\ \sigma_k^{\nu_2} \sigma_j^{\nu_1}
\end{equation}
where the $w_{\nu_1,\nu_2}$ are drawn from a uniform distribution $\mathcal{U}(0,1)$ and then normalized such that they sum to $p$. For the simulation depicted here, we choose an overall per-gate fault probability of $p=3\times 10^{-4}$. Under the presence of noise, the correlations dissipate, although are not entirely eliminated: oscillations can still be seen at the far end of the chain, albeit $6$ orders of magnitude smaller than for the noiseless case. 

These results highlight the exact nature of noisy simulations with $\gsim$: Any method relying on sampling noise realizations would have required of the order of $10^{11}$ many repetitions to resolve such oscillations.
Similarly, classical simulations of noisy dynamics involving the truncation of low-weight contributions in the Heisenberg picture~\cite{fontana2023classical,cirstoiu2024fourier} also requires simulating many trajectories.

\begin{figure}
    \centering
    \includegraphics[width=1\columnwidth]{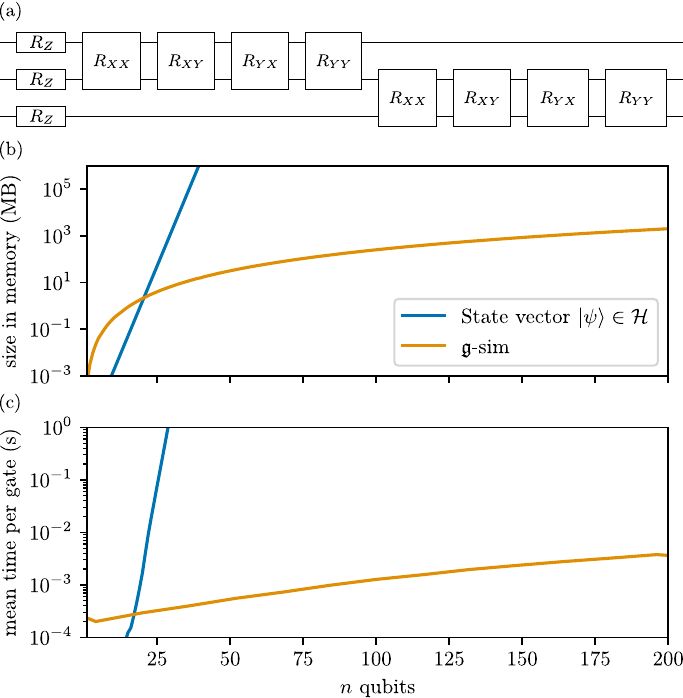}
    \caption{\textbf{Performance benchmarks of $\mathfrak{g}$-sim.} (a) Circuit used for the benchmark at $n=3$ qubits. The set of generators are presented in Eq.~\eqref{eqn:g_tfxy}. Here, $R_Z$ denotes a rotation about the $Z$ axis, and $R_{\mu\nu}$ a rotation generated by the Pauli operator $\sigma^\mu\sigma^\nu$.  We compare the memory (b) and compute time (c) requirements for state vector simulations (blue) and $\mathfrak{g}$-sim (orange) for the unitary shown in panel (a). }
    \label{fig:benchmarks}
\end{figure}

With a non-trivial application of $\gsim$ demonstrated, we now consider its performance and scalability more systematically with a benchmarking task, aiming to verify its polynomial resource scaling numerically. We take the circuits $U(\thv)$ to be composed of a single layer of gates generated by the set of operators $\left(\bigcup_{\mu,\nu=x,y}\{\sigma^\mu_j \sigma^\nu_{j+1}\}_{j=1}^{n-1}\right)\cup \{\sigma^z_j\}_{j=1}^n $.
In Fig.~\ref{fig:benchmarks}(a) we schematically show the circuit for $n=3$.

For the benchmarking task, we evaluate expectation values of random observables in the algebra $\g_0$  for the state obtained by applying $U(\bm{\theta})$, with parameters uniformly sampled in the interval $[0,2\pi]$, to an input state $\rho^{(\operatorname{in})}=\dya{0}^{\otimes n}$. As was the case for the state in Eq.~\eqref{eqn:initial_magic_state}, we can easily compute  expectation values of $\rho^{(\operatorname{in})}$ for any of the Pauli operators forming the basis of $\g_0$.

In Fig.~\ref{fig:benchmarks}(b) we show  benchmark results in terms of memory and compute time for $\gsim$ and for full  state-vector simulations methods. The benchmarks are computed on a single CPU core - using TensorFlow Quantum for state vector methods, and our Python implementation for $\gsim$. Therein we can see that  both memory and compute time scale as $\Omega(2^{n})$ for state vector methods, and as $\mathcal{O} (\poly(n))$ for $\mathfrak{g}$-sim. The exponential scaling of state-vector simulations makes simulations with $n\geq 30$ qubits intractable on the device used. However, using $\gsim$ we are able to simulate systems of up to 200 qubits on a single core of a modern CPU. Further comments on the benchmarking procedure are given in Appendix~\ref{appendix:benchmarking}.

These demonstrations highlight the capabilities and scalability of our framework for simulations, and in particular its differences and advantages relative to similar methods. These did not however require any optimization; in the next section, we make use of optimization and explore applications of $\gsim$ to more sophisticated key problems in variational quantum computing.

\section{Applications}\label{sec:applications}
Having presented the framework for $\gsim$, we now explore its benefits in concrete problems. These include a study of the landscape of VQAs highlighting the  overparametrization phenomenon (Sec.~\ref{sec:overparametrization}), designing improved initialization of quantum circuits parameters (Sec.~\ref{sec:pretraining}), problems of circuit synthesis (Sec.~\ref{sec:compilation}), and demonstration of the training of a quantum-phase classifier (Sec.~\ref{sec:supervised_QML}).
These illustrate the broad range of applications that can be addressed with $\gsim$. 

We note that in all cases, we will implement $\gsim$ based on either $\g_0$ or subalgebras $\g\subseteq \g_0$ obtained by removing some of the generators from Eq.~\eqref{eqn:g_tfxy}. In each section, we will specify which subalgebra of $\g_0$ we will be working with, and a more detailed outline of the relevant subalgebras is given in Appendix \ref{appendix:algebra_and_subalgebras}. We further remark that by definition, we will always be simulating circuits composed of single- and two-qubit gates with local connectivity - that is, circuits that could be implemented on most quantum hardware without incurring large compilation overhead. Details of these circuits are summarized in Table~\ref{tab:ansatze} of Appendix~\ref{appendix:ansatz}.

\subsection{Characterizing VQAs}
\label{sec:overparametrization}
VQAs aim to enable near-term quantum advantage by means of a hybrid quantum-classical training loop, where some of the problem difficulty is offloaded to an optimization problem on a classical co-processor. However, this optimization problem is difficult in general~\cite{bittel2021training}, and much remains to be learned about its scalability. Analytic study of optimization landscapes is difficult and limited in scope, while computational study is hindered by the exponential resource costs of state vector simulation. 
In this context, scalable classical simulations allow one to dequantize certain architectures~\cite{cerezo2023does}, but also to  characterize trainability of VQAs at system sizes that would otherwise be intractable. A previous study has used free-fermion simulations for this purpose \cite{matos2022characterization}, but using $\mathfrak{g}$-sim expands the set of systems that can be studied. As a warm-up problem, we demonstrate the onset of overparametrization in VQE problems for the TFXY model at up to $n=50$ qubits.

\subsubsection{Overparametrization in VQE}

Overparametrization \cite{neyshabur2018role} is a surprising phenomenon in classical neural networks, where training a neural network with a capacity larger than that which is necessary to represent the training data distribution may lead to improved performance \cite{zhang2021understanding,allen2019learning,du2019gradient,buhai2020empirical} and even provable convergence results \cite{du2018gradient,brutzkus2018sgd}, rather than to the overfitting and training difficulties one may na\"ively expect. This phenomenon was generalized to quantum circuits in Ref.~\cite{larocca2021theory}, 
where it was shown that the model capacity of circuits of the form Eq.~\eqref{eqn:PeriodicStructureAnsatz} can be quantified by $\dimg$. In turn, overparametrization is characterized by a `computational phase transition' happening at a critical number of parameters $N_p^{(\text{crit})}\leq \dimg$, below which the circuit is hard to train and above which it becomes easy to train. Exact values of this critical threshold $N_p^{(\text{crit})}$ are state-dependent, and often hard to assess analytically as they rely on the conjugation relationship between the initial state and the Lie algebra $\mathfrak{g}$ of the circuit. Thus, exact details of this phenomenon are best probed numerically. However, initial numerical demonstrations of the phenomenon were only provided for systems of 2-10 qubits~\cite{larocca2021theory}. Here we demonstrate the phenomenon in problems of up to 50 qubits.

\subsubsection{Simulation results}
To probe overparametrization, we consider a VQE task where the goal is to prepare the ground state of the TFXY model of Eq.\eqref{eqn:hamiltonian_tfxy_randomfields}.
As an ansatz for $U(\thv)$ we consider a  Hamiltonian variational ansatz~\cite{wecker2015progress,wiersema2020exploring} of the form in Eq.~\eqref{eqn:PeriodicStructureAnsatz}  with gate generators taken from the set $\{\sigma^x_j\sigma^x_{j+1},\sigma^y_j\sigma^y_{j+1}\}_{j=1}^{n-1}\cup\{\sigma^z_j\}_{j=1}^n$. We refer the reader to Appendix~\ref{appendix:ansatz} for additional details on this circuit. One can verify that the dynamical Lie algebra associated with this set of generators is precisely $\g_0$ in Eq.~\eqref{eqn:g_tfxy}, such that we can use the representation elements reported  in Ref.~\cite{gsim_github}. Given that $\operatorname{dim}(\g_0)=n(2n-1)$ and that the gates in $U(\thv)$ can be parallelized, then overparametrization is achievable with linear circuit depth.
To solve the VQE problem, the parameters are optimized using the L-BFGS algorithm with the gradient evaluated according to Appendix~\ref{appendix:observable_gradients} to minimize the energy
\begin{equation}
    E_{\operatorname{TFXY}}(\bm{\theta})=\bra{\bm{0}}U^{\dagger}(\bm{\theta})H_{\operatorname{TFXY}}U(\bm{\theta})\ket{\bm{0}}.
\end{equation}
Note that the measurement operator $H_{\operatorname{TFXY}}$ is, by definition, fully supported within $\g_0$. Moreover, since the initial state is the all zero state we can readily construct the vector $\bm{e}^{(\operatorname{in})}$. That is, we have all the ingredients for $\gsim$.

For each of the system sizes probed, from $n=20$ to $50$ qubits, we perform optimization over circuits with a varied number of layers $L$ chosen to result in a number $N_p$ of circuit parameters spanning the range $[1,\operatorname{dim}(\mathfrak{g}_0)]$, and random magnetic field amplitude $\xi=0.1$. For any of the circuits studied, optimizations are repeated over $50$ randomized initial parameter values and fields. Results of this study are reported in Fig.~\ref{fig:overparametrization}.

For a fixed system size of $n=50$ qubits, we display in Fig.~\ref{fig:overparametrization}(a) all the optimization traces. We observe the convergence behavior of the approximate training error $\epsilon_{\operatorname{TFXY}}(\bm{\theta})=(E_{\operatorname{TFXY}}(\bm{\theta})-E_{\operatorname{min}})/\|H_{\operatorname{TFXY}}\|_{\operatorname{HS}_1}$, where  $E_{\operatorname{min}}$ is the lowest energy discovered by VQE across all depths and $\|\cdot\|_{\operatorname{HS}_1}=\|\cdot\|_{\operatorname{HS}}/2^n$ for $\|\cdot\|_{\operatorname{HS}}$ the Hilbert-Schmidt norm (i.e., we normalize the error relative to the energy scale of the Hamiltonian). Optimization traces exhibit a clear change in the hardness of the optimization problem as the depth of the circuits is varied. 
Below a critical threshold of $L\approx 6$ layers, we can see severe trainability issues where none of the optimization manages to converge to the ground state energy. 
However, for circuits with $L>6$ layers a sudden transition to trainability is observed and solutions converge towards the minimum energy. This phase transition in computational complexity indicates the onset of overparametrization.

More systematically, in Fig.~\ref{fig:overparametrization}(b), we study this phenomenon across varying system sizes $n$, from 20 to 50 qubits, and report the probability of convergence to $\epsilon_{\operatorname{TFXY}}<10^{-4}$ as a function of the number of circuit parameters in units of $\operatorname{dim}(\mathfrak{g}_0)$. 
We observe that the transition to the trainable region (large convergence probability) occurs consistently across all system sizes at $N_p^{(\text{crit})}\approx 0.3\dimg$ irrespective of $n$. 
This supports the analytic results of Ref.~\cite{larocca2021theory}, and demonstrates the phenomenon of overparametrization at system sizes beyond what could be simulated with state vector simulations. Overall, such detailed analysis of VQA trainability at scale, that requires repetitions over many optimizations and many circuit sizes, is made possible by $\gsim$.

\begin{figure}
    \centering
    \includegraphics[width=0.48\textwidth]{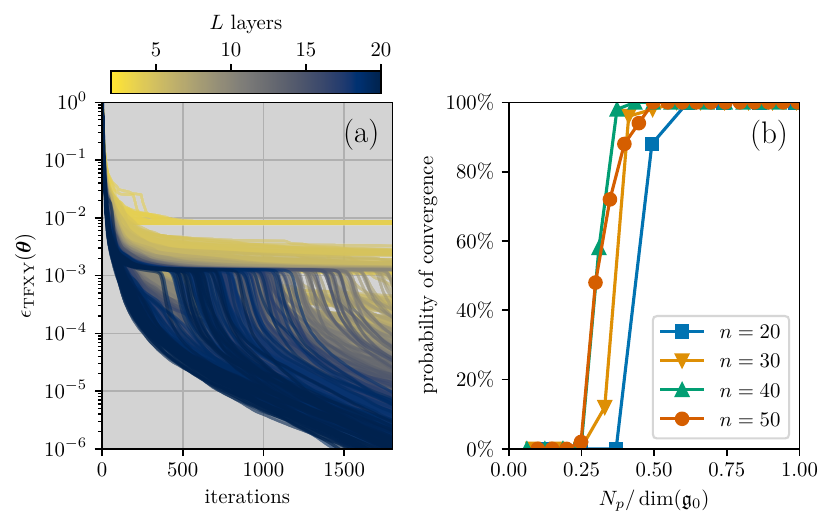}
    \caption{\textbf{Overparametrization in large system sizes.} \textbf{(a)} Convergence traces at $n=50$ qubits of the approximate training error $\epsilon_{\operatorname{TFXY}}(\bm{\theta})$. \textbf{(b)} The probability of converging to $\epsilon_{\operatorname{TFXY}}<10^{-4}$ for uniform random initialization of $\bm{\theta}$ as measured by 50 samples, at varied circuit depths, with corresponding number of parameters $N_p$ reported as a fraction of $\operatorname{dim}(\g_0)$, and for $n$ ranging from 20 to 50 qubits. }
    \label{fig:overparametrization}
\end{figure}

Overparametrization is crucial throughout the remainder of this work. It is provably achieved for circuits with $N_p \geq \dimg$ parameters, and thus the tractably overparmeterizable models are those with $\dimg\in\OC(\operatorname{poly}(n))$.  
In cases where the relevant loss function depends entirely on observables supported by the algebra of the circuit (e.g., VQE with Hamiltonian variational ansatz~\cite{wecker2015progress,ho2019efficient,wiersema2020exploring,cade2020strategies} or QAOA), circuits overparametrizable in polynomial depth \emph{are exactly those that can be efficiently simulated with $\mathfrak{g}$-sim}. 
Hence, this phenomenon ensures that for any problem that is efficiently simulable in $\gsim$, we can guarantee good trainability using a circuit of only polynomial gate-count. Crucially, this ensures a correctly trained prior model for our pre-training scheme (Sec.~\ref{sec:pretraining}), the ability to efficiently approximately compile unitaries in $\mathcal{G}$ to linear-depth circuits (Sec.~\ref{sec:compilation}), and good trainability of QML models (Sec.~\ref{sec:supervised_QML}).

\subsection{Pre-training quantum circuits}
\label{sec:pretraining}
While the overparametrization regime guarantees trainability from \emph{arbitrary} initial parameter values, in general cases where $\dimg\in\Theta(4^n)$ (i.e., cases where classical simulations are not possible anymore) it does not result in a scalable strategy. Indeed, trying to access the overparametrized regime  would require constructing and training exponentially deep circuits, which is intractable for large problem sizes. 
Still, it is hoped that with an \emph{adequate} choice of initial parameters one could train the circuits before the onset of overparametrization. 
Moreover, issues of barren plateaus~\cite{mcclean2018barren,cerezo2021cost,holmes2021connecting,larocca2021diagnosing} (another crucial aspect of trainability) indicate exponentially vanishing gradients \textit{on average}, but do not imply that the entire optimization landscape is flat. 
In fact, these are always accompanied by `narrow gorges' in which minima are surrounded by regions of large gradient~\cite{arrasmith2021equivalence}. 
Hence, improvement in the trainability of quantum circuits can be achieved by means of `pre-training' an ansatz such that its initial parameter values are sufficiently close to the optimal solution. As such, initialization by means of pre-training has received significant attention lately \cite{grant2019initialization,verdon2019learning,sauvage2021flip,rad2022surviving,liu2022mitigating,mitarai2022quadratic,ravi2022cafqa,cheng2022clifford,dborin2022matrix,mele2022avoiding,rudolph2022synergy}, and are proposed as one of the most promising methods to requantize certain VQAs~\cite{cerezo2023does}.

In this section, we demonstrate the use of $\gsim$ for the initialization of VQAs. 
The idea (Sec.~\ref{sec:pre_training_strategy}) is to  perform pre-training on a related auxiliary problem that induces a scalable Lie algebra, and to transfer the solution found as initial parameters for the circuit addressing the target problem. 
It is expected that the closer the auxiliary problem is to the target one, the more efficient such a transfer will be. As a first example (Sec.~\ref{sec:LTFIM}), we study ground state preparation of the longitudinal-transverse field Ising model (LTFIM) via solving the transverse field Ising model (TFIM) in the first place.
LTFIM only differs from TFIM by a few additional generators (the longitudinal fields) and we find substantial improvement both in terms of the fidelity of the ground states prepared and the magnitude of the initial gradients when utilizing pre-training.
More surprisingly, we also show that in problems of QAOA (Sec.~\ref{sec:QAOA}), for which target and auxiliary tasks differ substantially, improved performances can still be achieved over a significant number of (but not all) problem instances. 

\subsubsection{Pre-training strategy with $\mathfrak{g}$-sim}
\label{sec:pre_training_strategy}

Let us consider again here a VQE task where the goal is to prepare the ground state of an Hamiltonian $H$ that can be decomposed as $H=\sum_j c_j h_j$, where $c_j$ are real valued coefficients. We then define  the Lie algebra $\mathfrak{g}_{\text{target}}=\langle\{ih_j\}\rangle_{{\rm Lie}}$, i.e., the algebra generated by the individual terms in $H$. The goal is to construct a circuit $U(\thv)$ to prepare the ground state of $H$ that is generated by some elements of $\mathfrak{g}_{\text{target}}$. In what follows, we will  assume that $\operatorname{dim}(\mathfrak{g}_{\text{target}})\notin \OC(\poly(n))$ (else the full VQE problem could be efficiently simulated with $\mathfrak{g}$-sim), meaning that $U(\thv)$ should not be constructed from a generating set of $\mathfrak{g}_{\text{target}}$. Hence, the scheme is as follows:
\begin{enumerate}
    \item Identify a subset of the operators in $i\mathfrak{g}_{\text{target}}$, denoted as $\{h_k\}_k$ such that their Lie closure  $\mathfrak{g}_{\text{aux}}=\langle\{ih_k\}\rangle_{{\rm Lie}}$, is a subalgebra $\mathfrak{g}_{\text{aux}}\subset \mathfrak{g}_{\text{target}}$ with $\operatorname{dim}(\mathfrak{g}_{\text{aux}})\in\OC(\operatorname{poly}(n))$. Construct a proxy Hamiltonian $H_{\text{aux}}=\sum_k c_k h_k$. As we will see in the examples below, the choice of $c_k$ is informed by the task at hand.    
    \item On a classical computer, use $\gsim$ to solve VQE on $H_{\text{aux}}$ using an ansatz generated by terms of $\mathfrak{g}_{\text{aux}}$ and with a number of parameters allowing for overparametrization. 
    \item Extend the trained ansatz with \emph{new} gates generated by (some) of the terms in $\mathfrak{g}_{\text{target}}\setminus\mathfrak{g}_{\text{aux}}$. These new gates are initialized to the identity. 
    \item On a quantum computer, solve the VQE for $H$ starting from the ansatz constructed in step 3.
\end{enumerate}
Although presented in the context of VQE, similar strategies could be applied to QML problems.
\subsubsection{Pre-training VQE for the LTFIM}
\label{sec:LTFIM}

We begin by recalling that the LTFIM is a paradigmatic spin-chain model providing a prototypical example of a quantum phase transition. Its Hamiltonian reads
\begin{equation}
    H_{\text{LTFIM}}=h_{xx}\sum_{j=1}^{n-1}\sigma^x_j\sigma^x_{j+1}+h_z\sum_{j=1}^n \sigma^z_j+h_x\sum_{j=1}^n \sigma^x_j \,,
    \label{eqn:ltfim}
\end{equation}
where $h_{xx},h_z,h_x\in\mathbb{R}$. It can be verified that the algebra $\mathfrak{g}_{\text{target}}$ obtained from the terms in $H_{\text{LTFIM}}$ has exponential dimension $\mathfrak{g}_{\text{LTFIM}}\in\Theta(4^n)$ \cite{larocca2021diagnosing}. This renders it intractable in $\mathfrak{g}$-sim, induces a barren plateau for deep Hamiltonian variational circuits~\cite{larocca2021diagnosing}, and precludes efficient overparametrization \cite{larocca2021theory}, thus making application of the VQE to identify the ground state of Eq.~\eqref{eqn:ltfim} highly non-trivial.

Instead, by dropping some terms in $H_{\text{LTFIM}}$ we obtain the TFIM Hamiltonian,  given by
\begin{equation}
    H_{\text{TFIM}}=h_{xx}\sum_{j=1}^{n-1}\sigma^x_j\sigma^x_{j+1}+h_z\sum_{j=1}^n \sigma^z_j\,.
    \label{eqn:TFIM_hamiltonian}
\end{equation}
Notably, taking the Lie closure of the terms in $H_{\text{TFIM}}$ we obtain $\mathfrak{g}_{\text{aux}}\subseteq \g_0$.  Thus, we have successfully identified a subset of operators in $i\mathfrak{g}_{\text{target}}$ leading to a an algebra whose dimension is in $\OC(\operatorname{poly}(n))$, making it overparametrizable~\cite{larocca2021theory} and classically tractable in $\gsim$.

The pre-training strategy is now applied to ground state preparations of $H_{\text{LTFIM}}$.
Following Sec.~\ref{sec:pre_training_strategy}, we begin by constructing an ansatz with $L=\operatorname{dim}(\mathfrak{g}_{0})$ and gates generated by terms appearing in $H_{\text{TFIM}}$:
\begin{equation}
    U(\bm{\theta})=\prod_{l=1}^{\operatorname{dim}(\mathfrak{g}_{0})}e^{-i\theta_{l,1}\sum_{j=1}^{n-1}\sigma^x_j\sigma^x_{j+1}}e^{-i\theta_{l,2}\sum_{j=1}^n \sigma^z_j},\label{eq:u-tfim}
\end{equation}

Parameters of the ansatz are initialized with random uniform values, and we train them to prepare the ground state of $H_{\text{TFIM}}$ using $\mathfrak{g}$-sim. Note that since the dynamical Lie algebra $\mathfrak{g}_{\text{aux}}$ associated with Eq.~\eqref{eq:u-tfim} is a subalgebra of $\g_0$, we can use the representation elements that we have pre-computed for $\g_0$.
This can be seen by noting that $e^{-i\theta_{l,1}\sum_{j=1}^{n-1}\sigma^x_j\sigma^x_{j+1}}=\prod_{j=1}^{n-1}e^{-i\theta_{l,1}\sigma^x_j\sigma^x_{j+1}}$, and that each gate generator is an element $\sigma^x_j\sigma^x_{j+1}$ of  $\g_0$. We detail further how to best make use of $\gsim$ with generators that are sums of Pauli operators in Appendix~\ref{appendix:sparsity}.

Once  the parameters in Eq.~\eqref{eq:u-tfim} are trained to prepare the ground state of  $H_{\text{TFIM}}$,  we modify the circuit by inserting new gates generated by the remaining term of $H_{\text{LTFIM}}$, yielding the ansatz
\begin{align}
    U(\bm{\theta},\bm{\phi})=\prod_{l=1}^{\operatorname{dim}(\mathfrak{g}_{0})} \Big[ &e^{-i\theta_{l,1}\sum_{j=1}^{n-1}\sigma^x_j\sigma^x_{j+1}}e^{-i\theta_{l,2}\sum_{j=1}^n \sigma^z_j} \nonumber\\ &\quad  e^{-i\phi_{l}\sum_{j=1}^n \sigma^x_j} \Big] \,.
    \label{eqn:LTFIM_ansatz}
\end{align}

The new gates are initialized to the identity by setting $\bm{\phi}=(0,0,\ldots)$, while the other gates retain their pre-trained values. Finally, we proceed by training the full ansatz of Eq.~\eqref{eqn:LTFIM_ansatz} to minimize the expectation value of $H_{\text{LTFIM}}$. Since $\dim(\mathfrak{g}_{\operatorname{LTFIM}})\in\Theta(4^n)$, this last step must be performed using state vector simulations (we use TensorFlow Quantum \cite{broughton2020tensorflow}) thus limiting the system sizes that can be probed. 

\begin{figure}
    \centering
    \includegraphics[width=0.48\textwidth]{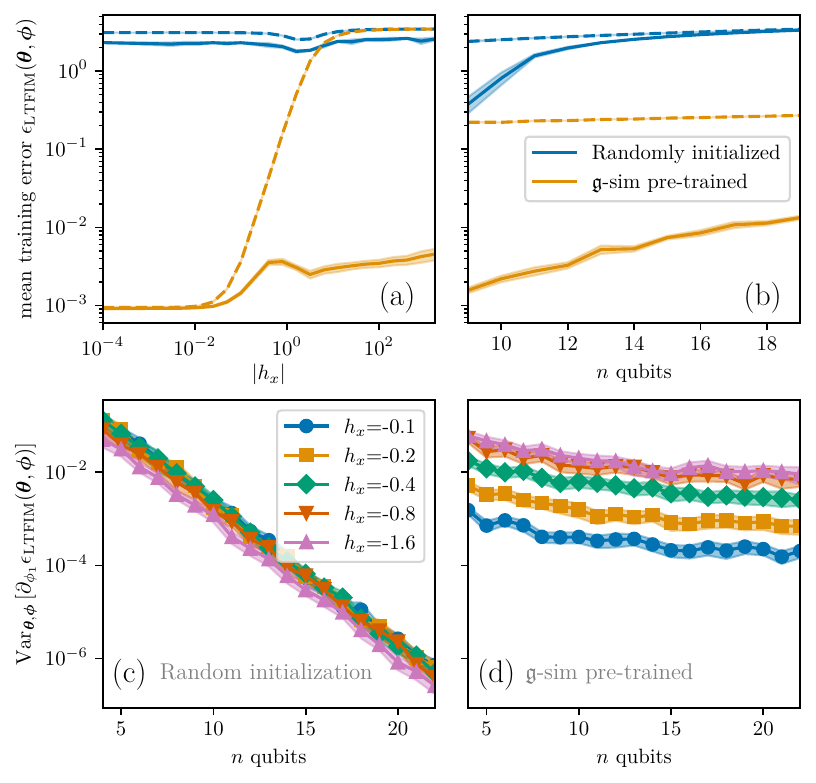}
    \caption{\textbf{Pre-training VQE for LTFIM using $\mathfrak{g}$-sim.} \textbf{(a, b)} Mean training error $\epsilon_{\operatorname{LTFIM}}(\bm{\theta},\bm{\phi})$ (Eq.~\eqref{eqn:error_ltfim}) for VQE  with random initialization (blue) and pre-trained with $\mathfrak{g}$-sim (orange), (a) at $n=12$ qubits and varying longitudinal field strengths $h_x$ and (b) at field strength $h_x=-1.0$ and varying system sizes $n$. Dashed lines represent initial ansatz configurations, while solid lines represent trained ans\"{a}tze. \textbf{(c, d)} Comparison of gradient variances at varying system size $n$ and longitudinal field strengths $h_x$ for (c) uniform random parameter initialization and (d) $\mathfrak{g}$-sim pre-training. Shaded bars represent bootstrapped 95\% confidence intervals.}
    \label{fig:pretraining_ltfim}
\end{figure}
In Fig.~\ref{fig:pretraining_ltfim}, we report a comparison of the pre-training strategy discussed versus uniform random initialization of the circuit parameters, for Hamiltonians with  $h_z=-1$ and $h_{xx}=1$. As can be seen in Figures \ref{fig:pretraining_ltfim}(a, b), pre-training yields improvements by several orders of magnitude in terms of the error 
\begin{equation}
    \epsilon_{\operatorname{LTFIM}}(\bm{\theta},\bm{\phi})=\frac{\bra{\bm{0}}U^\dagger(\bm{\theta},\bm{\phi})H_{\text{LTFIM}}U(\bm{\theta},\bm{\phi})\ket{\bm{0}}-E_{\text{min}}}{\|H_{\text{LTFIM}}\|_{\operatorname{HS}_1}},
    \label{eqn:error_ltfim}
\end{equation}
where $E_{\text{min}}$ is the ground state energy of $H_{\text{LTFIM}}$ obtained by exact diagonalization. 

In Fig.~\ref{fig:pretraining_ltfim}(a), we compare these errors as a function of the longitudinal field strength $|h_x|$. 
The prepared state resulting from the pre-training strategy (with errors depicted as an orange dashed line) becomes a poorer approximation to the ground state as $|h_x|$ increases, and even sometimes at par with random initialization (blue dashed line). Still, after the final step of optimization (with errors depicted as plain lines) we found pre-training to consistently enable accurate ground state preparation of $H_{\text{LTFIM}}$. 
Across all the values of $h_x$ studied, the pre-training strategy is the most favorable initialization.

In Fig.~\ref{fig:pretraining_ltfim}(b), we report final errors scaling with the system size $n$, noting that the randomly initialized circuits quickly become untrainable as $n$ increases while the pre-trained ansatz only exhibits mild decline in trainability.

In addition to these improved errors, we observe a mitigation of the barren plateau effect. Due to the exponential dimension of $\mathfrak{g}_{\operatorname{LTFIM}}$,  in the case of random initialization one would expect gradient variance scaling as \cite{larocca2021diagnosing}
\begin{equation}
    \operatorname{Var}_{\bm{\theta},\bm{\phi}}\left[\partial_{\phi_m}\epsilon_{\text{LTFIM}}(\bm{\theta},\bm{\phi})\right]\in\mathcal{O}\left(\frac{1}{2^n}\right),
\end{equation}
thus necessitating $\Theta(2^n)$ circuit repetitions to distinguish small gradient values from statistical shot noise. This is verified numerically in Fig.~\ref{fig:pretraining_ltfim}(c) which, over varied values of the field $h_x$, showcases exponentially vanishing gradients. 
However, in the case of $\mathfrak{g}$-sim pre-training, we can see in Fig.~\ref{fig:pretraining_ltfim}(d), that the gradient variances vanish at a much slower rate in $n$, effectively mitigating appearance of the barren plateau effect and thus extending the scalability of VQE on this system.
\subsubsection{Pre-training QAOA}
\label{sec:QAOA}
The quantum approximate optimization algorithm (QAOA) is a VQA that attempts to solve combinatorial optimization problems \cite{farhi2014quantum}. Specifically, we consider here its use to approximate solutions of MaxCut problems. We recall that given a graph $G$ with edges $E$ and vertices $V$, the maximum cut (MaxCut) problem is to find a partition of V into two sets $S$ and $T$ that maximizes the number of edges $e \in E$ having endpoints in both $S$ and $T$.    Encoding a partition as a bitstring $\bm{z}\in\{0,1\}^n$, its fitness for the MaxCut problem can be quantified by the approximation ratio $r$ defined as
    \begin{equation}
        r(\bm{z})\equiv\frac{C(\bm{z})}{\argmax_{\bm{z}}C(\bm{z})}, \; C(\bm{z})\equiv\sum_{(m,l)\in E}z_m(1-z_l) \,.
    \label{eqn:classical_approx_ratio}
    \end{equation}
For general graph problems, finding an exact MaxCut solution is NP-hard~\cite{karp1972reducibility}. 
Still, the Goemans-Williamson (GW) algorithm \cite{goemans1995improved} allows one to efficiently find an approximation with a ratio of at least $r_{\rm GW}\approx0.878$. To obtain quantum advantage with QAOA, one must outperform this threshold.

QAOA recasts a MaxCut problem as a VQE problem, where the goal is to find the ground state of the phase Hamiltonian
\begin{equation}
    H_G = \frac{1}{2}\sum_{(m,l)\in E}(\sigma^z_m\sigma^z_l-I) \,,
    \label{eqn:ham_maxcut}
\end{equation}
that depends on the underlying graph $G$. 
To prepare such a ground state, one applies a circuit
\begin{equation}
U(\bm{\beta},\bm{\gamma})=\prod_{m=1}^p e^{-i\beta_m H_M}e^{-i\gamma_m H_G} \,,
\label{eqn:standard_qaoa_ansatz}
\end{equation}
where $H_M=\sum_{j=1}^n \sigma^x_j$ is the so-called mixing Hamiltonian, to an initial state $\ket{+}^{\otimes n}$. 
By optimizing the variational parameters ($\bm{\beta}$ and $\bm{\gamma}$) to minimize the expectation value $\langle +|^{\otimes n}U\ad(\bm{\beta},\bm{\gamma})H_PU(\bm{\beta},\bm{\gamma})|+\rangle^{\otimes n}$ and measuring the resulting state in the computational basis, one may construct approximate solutions to MaxCut. In correspondence to Eq.~\eqref{eqn:classical_approx_ratio}, the approximation ratio of the solution generated by QAOA is defined as
\begin{equation}
    r(\bm{\beta},\bm{\gamma})\equiv\frac{\bra{+}^{\otimes n} U^\dagger(\bm{\beta},\bm{\gamma}) H_G U(\bm{\beta},\bm{\gamma})\ket{+}^{\otimes n}}{\bra{\psi_{\operatorname{GS}}} H_G \ket{\psi_{\operatorname{GS}}}} \,,
\end{equation}
where $\ket{\psi_{\operatorname{GS}}}$ is the ground state of the Hamiltonian $H_G$. 

While optimal parameters can be identified for $p=1$~\cite{ozaeta2022expectation}, optimizing them for larger depth (where improved solutions can be found) remains a challenge. Indeed,
several works show that general problem instances are likely to experience unfavorable optimization landscapes in the absence of any special underlying structure \cite{zhou2020quantum,kossmann2022deep}. These point toward the necessity of finding pre-training strategies for deep-circuit QAOA.

It has been reported that for most choices of $G$ the  Lie algebra associated with Eq.~\eqref{eqn:standard_qaoa_ansatz} will have dimension in $\Omega(2^n)$~\cite{larocca2021diagnosing,kazi2022landscape}. Hence, to apply our pre-training strategy, we  need to identify an algebra with polynomial dimension that is related to the original problem. For the circuit of Eq.~\eqref{eqn:standard_qaoa_ansatz}, this is the case for the path graph $G=P_n$ on $n$-vertices. For such a choice, the Lie algebra  $\mathfrak{g}_{\text{QAOA,}P_n}={\rm span}\langle \{iH_{P_n}, iH_M \}\rangle_{\operatorname{Lie}}$ is such that (up to a change of basis, where the Pauli $\sigma^x$ and $\sigma^z$ are interchanged) $\mathfrak{g}_{\text{QAOA,}P_n}\subset \g_0$, and therefore has $\operatorname{dim}(\mathfrak{g}_{\text{QAOA,}P_n})\in\OC(\operatorname{poly}(n))$ (see Appendix~\ref{appendix:algebra_and_subalgebras}). Hence, we can again use the representation elements of $\g_0$, and start by training such circuit for $p=\operatorname{dim}(\mathfrak{g}_{\text{QAOA,}P_n})$. 

The pre-trained parameters $\bm{\beta}$ and $\bm{\gamma}$  can then be  used to initialize the modified ansatz
\begin{align}
\begin{split}
    U_G(\bm{\alpha},\bm{\beta},\bm{\gamma})= \prod_{l=1}^{p} \Big[ & e^{-i\beta_l H_M}e^{-i\gamma_l H_{P_n}} 
    e^{-i\alpha_l H_G}  \Big]
    \label{eqn:modified_qaoa_ansatz}
\end{split}
\end{align}
for any general graph $G\neq P_n$, with the new parameters initialized to $\bm{\alpha}=(0,0,\ldots)$. We then train the parameters $\bm{\alpha}$, $\bm{\beta}$, $\bm{\gamma}$  to minimize the expectation value of $H_G$.

\begin{figure}
    \centering
    \includegraphics[width=0.48\textwidth]{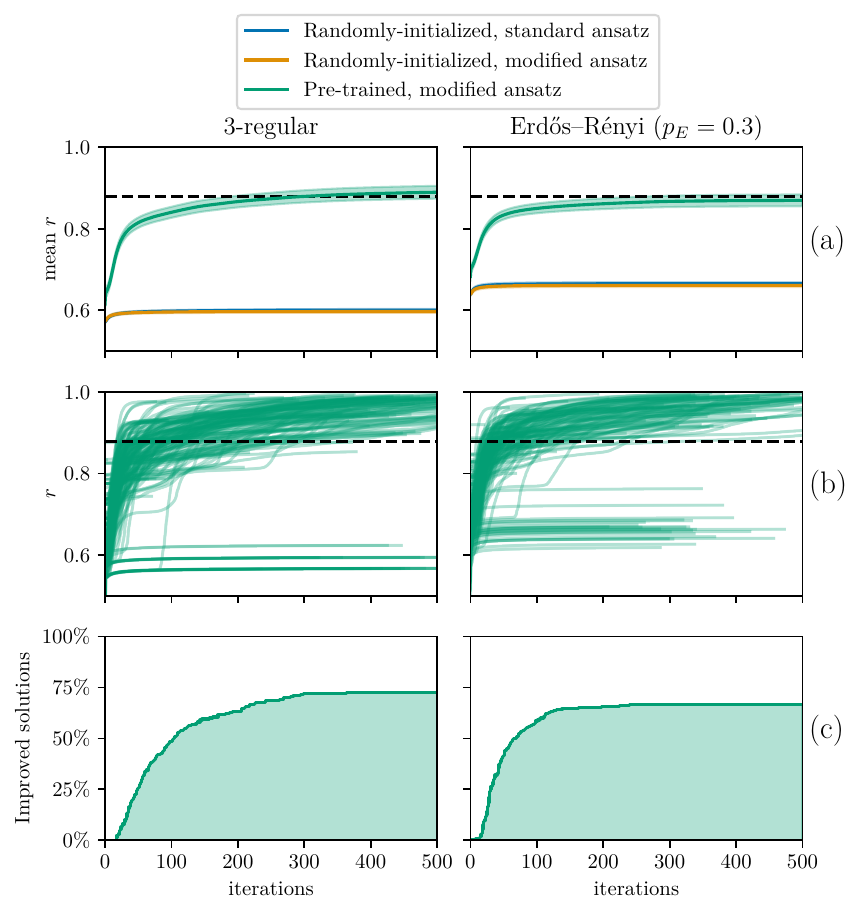}
    \caption{\textbf{Pre-training QAOA using $\mathfrak{g}$-sim.} Performances of QAOA on MaxCut at $n=16$ qubits for the standard randomly-initialized QAOA ansatz (blue), the modified randomly-initialized ansatz (orange), and the modified ansatz pre-trained with $\mathfrak{g}$-sim (green). Comparisons are made across ensembles of 200 random 3-regular graphs (left panels) and 200 Erd\"{o}s-R\'{e}yni graphs with edge probability $p_E=0.3$ (right panels). We compare \textbf{(a)} the mean approximation ratios $r$ (shaded bars are bootstrapped 95\% confidence intervals), \textbf{(b)} individual training trajectories, and \textbf{(c)} the fraction of solutions outperforming the GW threshold (reported as an horizontal dashed line). 
    }
    \label{fig:qaoa}
\end{figure}
We study the benefits of this pre-training at $n=16$ qubits on random 3-regular graphs and Erd\"{o}s-R\'{e}nyi graphs with edge probability $p_E=0.3$.
Results are reported in Fig.~\ref{fig:qaoa}, showing that pre-training significantly outperforms randomly-initialized  QAOA circuits.

In terms of the mean approximation ratio $r$ (Fig.~\ref{fig:qaoa}(a)), both randomly-initialized strategies vastly under-perform the GW threshold $r_{\rm GW}$ (horizontal dashed line), while the pre-training strategy achieves comparable average performance. More strikingly, when looking at details of the approximation ratios $r$ of the individual graph problems (Fig.~\ref{fig:qaoa}(b)), we find a majority of individual graphs achieve an approximation ratio $r > r_{\rm GW}$, with the rest at par with random initialization.

Overall, the fraction of solutions that improve on the GW threshold (Fig.~\ref{fig:qaoa}(c)) converges to $72.5\%$ for 3-regular graphs, and $66.5\%$ for $p_E=0.3$ Erd\"{o}s-R\'{e}nyi graphs. On the other hand, no randomly-initialized circuits achieved better than this threshold. This indicates that $\mathfrak{g}$-sim pre-training is advantageous for QAOA on a substantial fraction of random graphs, even for a relatively crude initialization strategy. We note that even in the case of pre-training, convergence to the improved solutions often requires performing a couple of hundreds of optimization steps. This remains challenging in current quantum devices, and even more so when accounting for noise. 
Nonetheless starting closer to the solution is always a desirable feature, and further improvements could likely be achieved by more carefully aligning the path graph $P_n$ along $G$, adaptively transforming the ansatz and cost, or resorting to another polynomially-size algebra for the pre-training.

\subsection{Circuit synthesis}
\label{sec:compilation}
Until now, we have been concerned with tasks of state-preparation. We now address more difficult problems of \emph{unitary compilation}.
Here, we seek to use $\gsim$ to identify the parameters of an ansatz circuit $U(\bm{\theta})\in\mathcal{G}$ of the form Eq.~\eqref{eqn:PeriodicStructureAnsatz} to implement a target unitary $V\in\mathcal{G}$. Despite the fact that both the target and circuit must belong to a unitary Lie group whose associated Lie algebra is of   polynomial dimension, such strategy can already enlarge the reach of $\gsim$. In particular, previous uses of $\gsim$  (as reported in Results \ref{thm:basic_evolution} \& \ref{thm:second_order_correlator_evolution}) were restricted to certain initial states and observables, but our circuit synthesis compilation goes beyond these cases.  Now, $\gsim$ can be used to classically compile a polynomial-gate-count circuit, and then implement it on a quantum computer to evaluate the evolution of \emph{any} observable or state. More generally, the unitary to be synthesized could be part of a larger protocol such that the initial state or observable may not even be known beforehand.
Finally, we note that there may be utility in compiling random unitaries corresponding to polynomially sized Lie algebras, as sampling from the output of some of these circuits have strong hardness guarantees that can be used to demonstrate quantum advantage~\cite{oszmaniec2022fermion}.

The scheme proposed is presented in Sec.~\ref{sec:compilation_basis}. We probe its convergence properties for random targets in Sec.~\ref{sec:compilation_random} displaying polynomial optimization effort in most cases. 
However, as documented in Sec.~\ref{sec:compilation_faithfulness}, given that we work in a reduced representation of the algebra, issues of faithfulness can arise and compromise compilation. Still, we provide a strategy to overcome this issue. Resorting to this strategy we demonstrate exact compilations with $\gsim$ in a task of dynamical evolution in Sec.~\ref{sec:compiling_dynamics}.

\subsubsection{Variational compilation of unitaries in $\gsim$}\label{sec:compilation_basis}
In variational circuit compilation, we typically seek to train the parameters of an ansatz circuit $U(\bm{\theta})$ to approximate some target unitary $V$. Existing techniques for compiling variational circuits to target unitaries usually seek to minimize the cost
\begin{equation}
\mathcal{L}_{\text{HST}}(U(\bm{\theta}),V)\equiv 1-\frac{1}{d^2}|\Tr(U^\dagger(\bm{\theta}) V)|^2,\label{eqn:hilbertschmidttest}
\end{equation}
that has computational complexity growing quadratically with $d=2^n$ rendering it quickly classically intractable, and thus would require evaluation on a quantum computer for even modest system sizes. 
In this situation, evaluation of Eq.~\eqref{eqn:hilbertschmidttest} could rely on the Hilbert-Schmidt test \cite{khatri2019quantum,sharma2019noise} or on estimation via state sampling~\cite{caro2022outofdistribution,gibbs2022dynamical}. 
In any case, these assume implementation of the target $V$ in the first place, which is often unrealistic.

In contrast, provided a description of $V\in\mathcal{G}$ with dimension of the associated Lie algebra $\dimg\in\OC(\operatorname{poly}(n))$, the present approach to compilation can be performed entirely on a classical computer with $\OC(\operatorname{poly}(n))$ resources. To that intent, we propose the loss function
\begin{align}
    \mathcal{L}_{\mathfrak{g}}(U(\bm{\theta}),V)&\equiv \frac{1}{2\dimg}\|\bar{\bm{U}}(\bm{\theta})-\bar{\bm{V}}\|^2_{\operatorname{HS}}\nonumber\\
    &=1-\frac{1}{\dimg}\operatorname{Re}\left[\Tr(\bar{\bm{U}}^T(\bm{\theta})\bar{\bm{V}})\right],\label{eqn:adjointspaceunitarycostfunc}
\end{align}
where $\|\cdot\|_{\operatorname{HS}}$ is again the Hilbert-Schmidt norm and $\bar{\bm{U}}(\bm{\theta})\equiv \Phi_{\lambda}^{\operatorname{Ad}}(U(\bm\theta))$, $\bar{\bm{V}}\equiv \Phi_{\lambda}^{\operatorname{Ad}}(V)$ are the adjoint representations of $U(\bm{\theta})$ and $V$, respectively. Gradients of $\mathcal{L}_{\mathfrak{g}}$ can be calculated efficiently, with implementation details provided in Appendix~\ref{appendix:compilation_gradients}.

In effect, $\mathcal{L}_{\mathfrak{g}}$ measures how accurately $U(\bm{\theta})$ approximates the evolution of the expectation values $\langle G_\alpha\rangle$ under evolution by $V$ for all initial states. Although the Hilbert-Schmidt loss $\mathcal{L}_{\text{HST}}$, in Eq.~\eqref{eqn:hilbertschmidttest}, and its adjoint-space version $\mathcal{L}_{\mathfrak{g}}$, in Eq.~\eqref{eqn:adjointspaceunitarycostfunc}, are similar in form, we highlight that $\mathcal{L}_{\mathfrak{g}}$ is sensitive to global phase differences between $\bar{\bm{U}}(\bm{\theta})$ and $\bar{\bm{V}}$, whereas $\mathcal{L}_{\text{HST}}$ is not sensitive to global phase differences between $U(\bm{\theta})$ and $V$. This is because global phase differences in the full Hilbert space are nonphysical, whereas global phase differences in the adjoint representation correspond to a sign flip on all basis observables, which is indeed physical. As we shall see (Sec.~\ref{sec:compilation_faithfulness}) related subtleties may compromise our ability to perform faithful compilation. For now we leave these aside, and assess convergence of optimizations with the loss in Eq.~\eqref{eqn:adjointspaceunitarycostfunc} for random target unitaries.

\subsubsection{Compiling random unitaries}
\label{sec:compilation_random}
We first benchmark our compilation scheme against random unitaries in $\mathcal{G}_{0}=e^{\g_0}$, which is the most general and demanding task for this scheme. Any unitary in $\mathcal{G}$ may be written in the form
\begin{equation}
    V=e^{-iT\sum_{\alpha}^{}(\vec{w})_\alpha G_{\alpha}} \,,
    \label{eqn:randomunitaries}
\end{equation}
where we fix $|\vec{w}|_2=1$ such that $T$ parametrizes the effective duration of the corresponding dynamics. To generate the weights we sample a matrix from a Haar distribution over the orthogonal group and set the weight vector $\vec{w}$ equal to one of its columns (potentially padding with zeros to account for elements of the basis not present in the sum).

\begin{figure}
    \centering
    \includegraphics[width=0.48\textwidth]{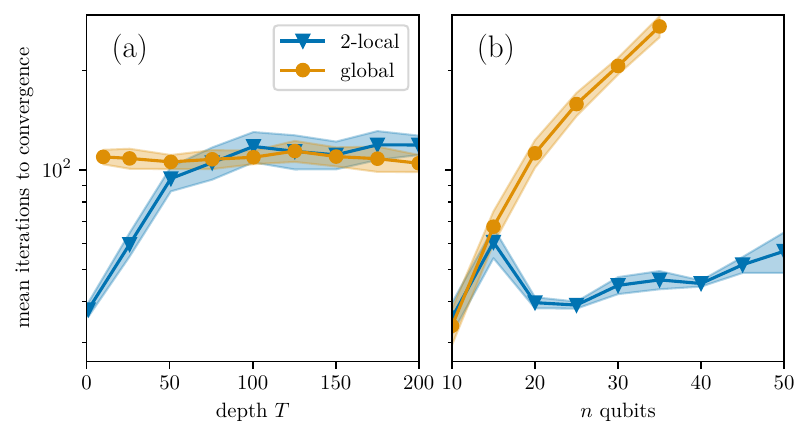}
    \caption{\textbf{Scaling properties of circuit compilation with $\mathfrak{g}$-sim.} We compare the mean number of iterations to converge to a loss (defined in Eq.~\eqref{eqn:adjointspaceunitarycostfunc}) $\mathcal{L}_{\mathfrak{g}}<10^{-3}$ for random target unitaries (defined in Eq.~\eqref{eqn:randomunitaries}) with varied levels of locality, including 2-local and global generators. Results for \textbf{(a)} varying duration $T$ at $n=20$ qubits, and  for \textbf{(b)} varying system size $n$ at duration $T=10$. The  
    Shaded bars depict bootstrapped 95\% confidence intervals.    The scaling of the iterations number is found to be polynomial in $n$ and $T$. 
    }
    \label{fig:unitary_training}
\end{figure}

In general, Eq.~\eqref{eqn:randomunitaries} involves highly non-local interactions, since many elements of $\mathfrak{g}_{0}$ are non-local Pauli operators (see Eq.~\eqref{eq:algebra-g0}).  
By excluding greater-than-$k$-local $G_\alpha$ from Eq.~\eqref{eqn:randomunitaries}, we can refine our study to $k$-local Hamiltonian dynamics. 

We test our scheme by compiling global and 2-local dynamics, both with the 2-local ansatz of Fig.~\ref{fig:benchmarks}(a), (see Appendix~\ref{appendix:ansatz} for more details), using $L=\left\lceil{2\dimg/K}\right\rceil$ layers, with $K$ the number of generators, ensuring overparametrization. In Fig. \ref{fig:unitary_training}, we show that the compilation of random unitaries $V \in \mathcal{G}_{0}$ performs and scales well. At fixed $n=20$ qubits (Fig. \ref{fig:unitary_training}(a)), the convergence requirements (measured as the number of optimization iterations required for convergence) plateau to a constant value irrespective of $T$, indicating that our approach allows fixed-circuit-depth Hamiltonian simulation for arbitrary evolution time. We note that this phenomenon was demonstrated in Ref.~\cite{kokcu2021fixed} on related systems, but with different methods and only targeting local dynamics.

\begin{figure*}
    \centering
    \includegraphics[width=0.8\textwidth]{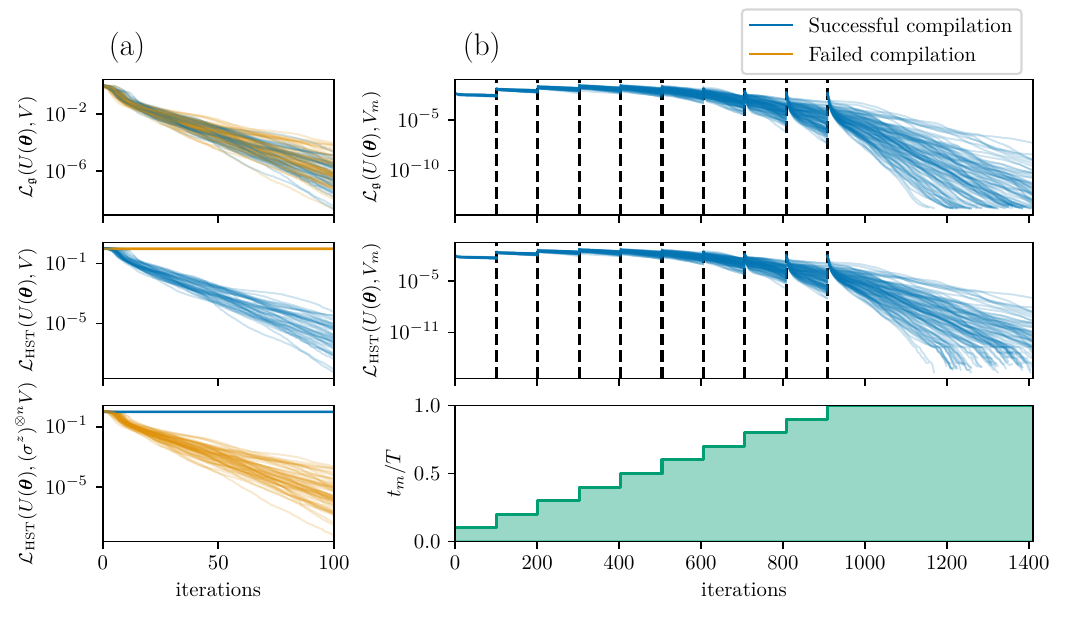}
    \caption{\textbf{Successfully compiling circuits in an algebra with non-trivial group center elements.} \textbf{(a)} Although the classical loss function $\mathcal{L}_{\mathfrak{g}}(U(\bm{\theta}),V)$ can be efficiently minimized (Fig. \ref{fig:unitary_training}), this is a necessary but insufficient condition to minimize the Hilbert-Schmidt loss function $\mathcal{L}_{\text{HST}}$, and thus to faithfully compile a target unitary $V$. We train a linear-depth ansatz $U(\bm{\theta})$ (see Fig.~\ref{fig:benchmarks}(a)) initialized with uniform random parameters to minimize $\mathcal{L}_{\mathfrak{g}}(U(\bm{\theta}),V)$ at $T=1$ and $n=10$. Although $\mathcal{L}_{\mathfrak{g}}$ is successfully minimized in all instances (top panel),  $\mathcal{L}_{\text{HST}}$ is only minimized in approximately half the instances, and rather \textit{maximized} otherwise (middle panel). 
    In all cases where the compilation fails, the optimized ansatz instead approximates $(\sigma^z)^{\otimes n}V$ (bottom panel). \textbf{(b)} The scheme outlined in Sec.~\ref{sec:compilation_faithfulness} successfully rectifies this issue. The ansatz is initialized at $\bm{\theta}=\bm{0}$, and compiled to a sequence of intermediate targets $V_m$ corresponding to increased $t_m$ from 0 to T (bottom panel), with the parameters obtained from compiling a target used to initialize the next . For sufficiently small time steps $\Delta t_m$, successfully minimizing $\mathcal{L}_{\mathfrak{g}}$ (top panel) ensures minimization of $\mathcal{L}_{\text{HST}}$ (middle panel) and thus faithful compilation of the final target $V$. }
    \label{fig:unitary_distinguishing}
\end{figure*}

At fixed $T$ (Fig. \ref{fig:unitary_training}(b)), the number of steps to converge appears constant in $n$ for 2-local targets, and polynomial in $n$ for global ones. The case of global targets with small duration $T$ is detailed further in Appendix~\ref{appendix:shallow_compilation}. Overall, this demonstrates that our methods can efficiently minimize the loss $\mathcal{L}_{\mathfrak{g}}$ on an overparametrized ansatz for a range of random unitaries in $\mathcal{G}_{0}$ with varying locality, system size $n$, and dynamics duration $T$, providing a strong foundation for our compilation scheme.

\subsubsection{Faithfulness of compilation}
\label{sec:compilation_faithfulness}
Although we saw consistent convergence with respect to $\mathcal{L}_{\mathfrak{g}}$ (Fig.  \ref{fig:unitary_training}), one should question whether minimizing $\mathcal{L}_{\mathfrak{g}}$ is sufficient for unitary compilation. It can be seen from Eq.~\eqref{eqn:adjointspaceunitarycostfunc} that $\mathcal{L}_{\mathfrak{g}}$ is a faithful loss function for unitary training iff $\Phi_{\lambda}^{\operatorname{Ad}}$ is a faithful representation. 

As discussed in Sec.~\ref{sec:lie}, faithful representation $\Phi^{\operatorname{ad}}_{\mathfrak{g}}$ of the Lie algebra does not guarantee faithful representation $\Phi_{\lambda}^{\operatorname{Ad}}$ of the Lie group. 
In particular, we have already seen in Eq.~\eqref{eq:center-adjoint} that for unitaries $W\in Z(\G)$ (i.e., for unitaries in the center of the group), it is the case that $\Phi_{\lambda}^{\operatorname{Ad}}(W)=I$ such that $\Phi_{\lambda}^{\operatorname{Ad}}(W V)=\Phi_{\lambda}^{\operatorname{Ad}}(V)$. That is, in the adjoint representation one cannot distinguish $V$ from $WV$.
For the case of $\mathfrak{g}_{0}$, the center is  $Z(\G)= \{(\sigma^z)^{\otimes n}, I\}$ up to a global phase (as detailed in Appendix~\ref{appendix:unfaithfulness}).

Such an issue can indeed be seen in our numerics. In the first row of Fig.~\ref{fig:unitary_distinguishing}(a) we report systematic success in minimizing $\mathcal{L}_{\mathfrak{g}}$. 
However, as can be distinguished by the loss $\mathcal{L}_{\text{HST}}$, only half of the optimizations yields the correct target $V$ while the other half rather yields the unitary $(\sigma^z)^{\otimes n}V$ (second and third row). 
To guarantee successful compilation of $V$, one must either ensure convergence to the manifold of correct solutions, or be able to flag when an error has occurred such that it could be corrected (i.e., by applying an additional unitary $(\sigma^z)^{\otimes n}$). We now detail a strategy achieving the former.

Any unitary target $V\in\mathcal{G}$ may be written in the form $V=e^{-iTH}$ with $iH\in\mathfrak{g}$ as in Eq.~\eqref{eqn:randomunitaries}. Rather than directly aiming for the compilation of $V$, we consider a family of intermediary targets $V_m=e^{-iHt_m}$ for increasing steps $t_m\in[0,T]$, with $t_0=0$. We then solve the corresponding compilation problems  $\bm{\theta}_m=\argmin_{\bm{\theta}}\mathcal{L}_{\mathfrak{g}}(U(\bm{\theta}),V_m)$ sequentially, with $\bm{\theta}$ initialized to $\bm{\theta}_{m-1}$ at each step.
For sufficiently small $\Delta t_m = t_m - t_{m-1}$, it is expected that no jump between solution manifolds will occur. 
Crucially, given that $V_0=I$, we can ensure that we start in the correct manifold by setting $\bm{\theta}_0=\bm{0}$ such that $U(\bm{\theta}_0)=I$ (rather than $(\sigma^z)^{\otimes n}$) . 

Viability of the proposed scheme is confirmed numerically and reported in Fig. \ref{fig:unitary_distinguishing}(b). For all the random unitaries $V$ assessed, the optimization concludes with parameters replicating accurately the desired target, as evidenced by the low values of $\mathcal{L}_{\text{HST}}(U(\bm{\theta}),V)<10^{-6}$. This demonstrates that faithful compilation is possible even when the cost function $\mathcal{L}_{\g}$ is not faithful.

\subsubsection{Application to dynamical simulation}
\label{sec:compiling_dynamics}
As noted in Sec.~\ref{sec:compilation_random}, given a Hamiltonian supported by a Lie algebra $\mathfrak{g}$ with $\dimg\in\OC(\operatorname{poly}(n))$, our scheme enables compilation in polynomial-depth circuits of the corresponding time-evolution operators. This allows the study of dynamics of observables \textit{not} supported by $\gsim$ and arbitrary states, which are in general not classically tractable. 
Here we demonstrate the utility of our scheme by synthesizing circuits for the Hamiltonian time-evolution of the TFXY spin chain of Eq.~\eqref{eqn:hamiltonian_tfxy_randomfields} with open boundary conditions and random local magnetic fields $b_j$ drawn from $N(0,\xi^2)$. 
This Hamiltonian supports the phenomenon of Anderson localization~\cite{anderson1958absence}, and has previously been utilized to demonstrate related compilation techniques based on Cartan decomposition \cite{kokcu2021fixed}.

We aim to train an ansatz $U(\bm{\theta}_{t_m})$ to approximate $V_{t_m}\equiv e^{-i{t_m}H_{\text{TFXY}}}$ at a range of discrete times $t_m\in[0,200]$. For each $t_m$, the ansatz has a structure
\begin{align}
    U(\bm{\theta})=\prod_{l=1}^{L} \Big[ & \prod_{j=1}^{n-1}e^{-i\theta_{l,(n+2j+1)}\sigma^x_j\sigma^x_{j+1}}e^{-i\theta_{l,(n+2j)}\sigma^y_j\sigma^y_{j+1}} \nonumber\\ &\quad  \prod_{j=1}^{n}e^{-i\theta_{l,j}\sum_{j=1}^n \sigma^z_j}\Big] \,,
    \label{eqn:LTFIM_ansatz-2}
\end{align}
One can verify that the dynamical Lie algebra associated with this circuit is again $\g_0$ as defined in Eq.~\eqref{eq:algebra-g0}, meaning that we can again utilize the representation elements already computed. 
At this point we note that while $U(\bm{\theta})$ has the exact same structure as that of a first-order Trotterization of any of the $V_{t_m}$ unitaries, each term of the Hamiltonian is associated with an independent trainable parameter in $U(\bm{\theta})$. This fact is important as $U(\bm{\theta})$ is not trying to learn a Trotterized version of $V_{t_m}$. In fact, because we use $\gsim$, we do not need to ever perform a Trotterization of the target unitary, as $V_{t_m}$ can be compiled exactly for all evolution times $t_m$. This is due to the fact that one can efficiently compute the adjoint representation $\Phi_{\lambda}^{\operatorname{Ad}}(V_{t_m})=e^{-it_m\Phi^{\operatorname{ad}}_{\mathfrak{g}}(H_{\text{TFXY}})}$ on a classical computer. This is advantageous compared to variational compilation schemes for time evolution that necessitate the target to be implementable on a quantum device in the first place, which is often achieved by means of a Trotter approximation~\cite{cirstoiu2020variational,gibbs2022dynamical,lin2021real,berthusen2022quantum,gibbs2022dynamical,goh2024direct}, and therefore introduces approximation errors.

\begin{figure}
    \centering
    \includegraphics[width=0.5\textwidth]{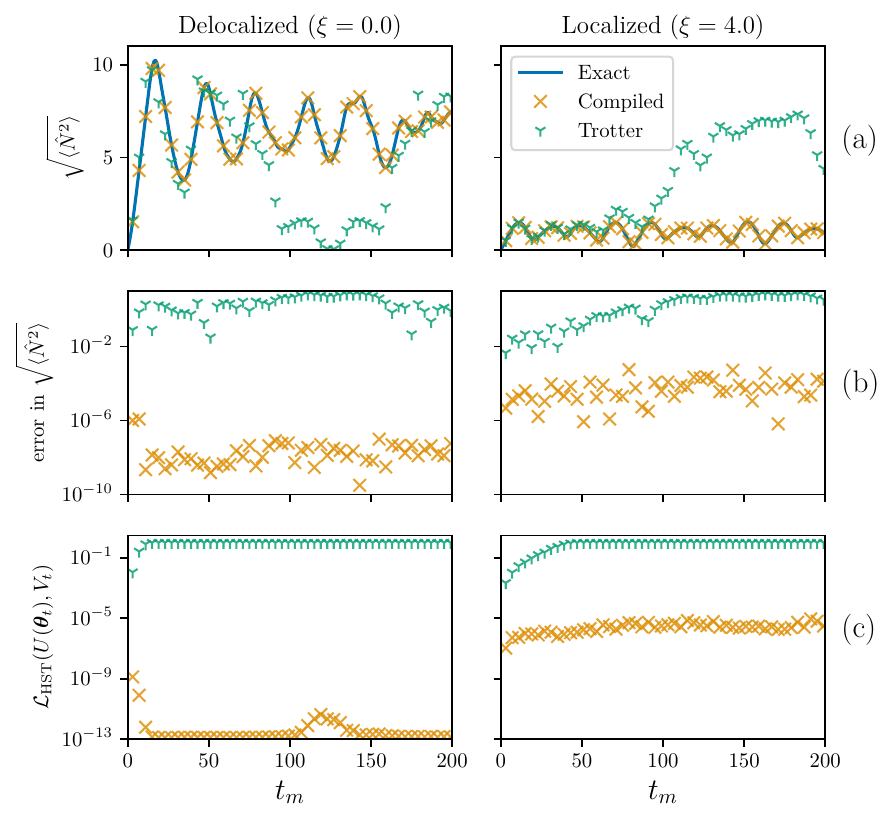}
    \caption{\textbf{Compiling Hamiltonian dynamics with $\mathfrak{g}$-sim.} Dynamics of a single-spin-flip excitation in a TFXY spin chain (Eq.~\eqref{eqn:hamiltonian_tfxy_randomfields} with $n=12$) in the absence of magnetic fields ($\xi=0$, left panels) and with random fields inducing Anderson localization ($\xi=4$, right panels). The exact dynamics (blue) is compared to the dynamics yielded by our compilation scheme (orange X), and first-order Trotterization (green Y). The compilation scheme accurately showcases Anderson localization, reproducing the RMS position $\sqrt{\hat{N}^2}$ of the excitation (a) with low errors (b), while Trotterization quickly diverges beyond small simulation times. Furthermore, comparison of the compiled unitaries to the exact time-evolution operators by means of the Hilbert-Schmidt test $\mathcal{L}_{\text{HST}}$, show small errors at all times (c), guaranteeing faithful dynamics of all observables.}
    \label{fig:anderson}
\end{figure}

For the sake of concreteness, we focus on the dynamics of a single-spin-flip initial state $\ket{\downarrow \uparrow \uparrow \uparrow \uparrow \uparrow \uparrow \uparrow \uparrow \uparrow \uparrow \uparrow }$ for $n=12$ qubits. 
In the absence of a magnetic field ($b_j=0$ for all $j$) this excitation diffuses throughout the system, but a disordered field ($\xi > 0$) restricts this diffusion (Anderson localization). Following the example of Ref.~\cite{kokcu2021fixed}, we study the position operator for the excitation
\begin{equation}
    \hat{N}=\sum_{j=1}^n (j-1)\frac{1-\sigma^z_j}{2} \,,
\end{equation}
whose moments provide a key signature of Anderson localization. In particular, for this system the $p$-th moment $\langle|\hat{N}|^p\rangle$ admits a time-independent upper bound \cite{kokcu2021fixed,bucaj2016kunz}. 
Note that $\hat{N}^2$ is expressible as a linear combination of elements in the algebra ($\sigma_j^z$) and product of elements in the algebra ($\sigma_j^z\sigma_{j'}^z$). Moreover, given that the algebra is composed of Pauli operators, and since the initial state is separable, we can readily compute the entries of the vector $\bm{e}^{(\operatorname{in})}$ and the matrix $\bm{E}^{(\operatorname{in})}$. Hence, the $p=2$ moment can be efficiently computed with  
$\gsim$. As noted above, larger moments quickly become intractable. 
Furthermore, we stress that our choice of a classically-tractable system size ($n=12$ qubits) is purely 
to enable us to compute the loss $\mathcal{L}_{\text{HST}}( U(\bm{\theta}_{t_m}),V_{t_m})$ for verification of correct unitary compilation.
In general, our scheme can achieve compilation at larger system sizes (e.g., $n=50$ qubits in Fig.~\ref{fig:unitary_training}).

We compile the time-evolution operators for 
the case of no magnetic field ($\xi=0$) and of a disordered magnetic field ($\xi=4$) with a Hamiltonian renormalized as $H_{\text{TFXY}}\to \frac{H_{\text{TFXY}}}{\|H_{\text{TFXY}}\|_{\operatorname{HS}_1}}$ to eliminate any norm dependence of the dynamics. 
Our ansatz is determined by Eq.~\eqref{eqn:LTFIM_ansatz-2} and is composed of $L=17$ layers, which is overparametrized to ensure trainability. 
In line with the strategy previously discussed, the circuit is first initialized with $\bm{\theta}_{t_0}=\bm{0}$, and thereafter initialized with $\bm{\theta}_{t_m}$ according to the parameters found in the ${t_{m-1}}$ step of optimization. 

In Fig.~\ref{fig:anderson}, we report a comparison between the compiled dynamics and a first-order Trotterization at the same depth (i.e., identical circuit structure with $\theta_{lk}=t/L$). We find that, despite using identical circuit structures, our scheme outperforms Trotterization by several orders of magnitude. 
Looking at the dynamics of the position operator $\sqrt{\hat{N}^2}$ (Fig.~\ref{fig:anderson}(a)) one can see that Trotterization fails to reproduce results beyond $t\approx60$ while our scheme continues to track them accurately at any of the times considered, with errors smaller than $10^{-6}$ (Fig.~\ref{fig:anderson}(b)). More generally, we find that the compiled unitaries reflect accurately the true time-evolution with $\mathcal{L}_{\operatorname{HST}}(U(\bm{\theta}_t),V_t) <10^{-5}$ (Fig.~\ref{fig:anderson}(c)), thus guaranteeing that the compiled circuits can faithfully reproduce the dynamics of \emph{all} observables, not just those supported by $\mathfrak{g}$ and products thereof. Furthermore, the errors entailed by the compiled circuit do not appear to vary substantially in the duration of the dynamics. Overall, this shows that for time-evolution operators whose associated Lie algebra $\mathfrak{g}$ has $\dimg \in \OC(\operatorname{poly}(n))$, our $\mathfrak{g}$-sim compilation schemes can determine a much more efficient circuit implementation than standard Trotterization.

\subsection{Supervised quantum machine learning}
\label{sec:supervised_QML}
As a final demonstration, we employ $\gsim$ for the training of a quantum-phase classifier, showcasing its applicability in the context of QML. 
Despite being trained fully classically on tractable states, such a classifier could then be employed to classify unknown quantum states.
Provided that meaningful training data points and quantum models can be found in algebras with polynomial dimension, such schemes could find utility in real experiments. In fact, this scenario of QML on quantum data is very similar to quantum metrology: protocols can be developed through classical simulations and still have merit when implemented through quantum technology. Additionally, in the spirit of Sec.~\ref{sec:pretraining}, our $\gsim$ approach could form the basis of approximate quantum models that are then refined on a quantum computer.

\subsubsection{Supervised QML}
In general problems of supervised QML one assumes repeated access to a training dataset $\mathcal{S}=\{(\rho_s,y_s)\}_{s=1}^N$ consisting of  of $N$ pairs of states $\rho_s$ together with labels $y_s=F(\rho_s)$ that have been assigned by an unknown underlying function $F$. 
The task is to learn parameters $\bm{\theta}$ of a function  $h_{\bm{\theta}}$ aiming at approximating $F$ as accurately as possible. 
Upon successful training, it is then possible to accurately predict the labels of previously unseen states.

As typical in tasks of QML, we consider a model $h_{\bm{\theta}}$ that relies on the expectation value of an observable $O$ after application of a circuit $U(\bm{\theta})$. That is, on $\ell_{\bm{\theta}}(\rho_s)=\Tr[O U(\bm{\theta})\rho_s U^\dagger(\bm{\theta})]$. 
Training relies upon minimization of a mean-squared error loss function, defined here as 
\begin{equation}
    \mathcal{L}(\bm{\theta})=\frac{1}{N}\sum_{s=1}^N(y_s-\ell_{\bm{\theta}}(\rho_s))^2\,.
    \label{eqn:QML_loss_function}
\end{equation}

\subsubsection{Training a binary quantum-phase classifier}
For our numerical study, we apply the $\gsim$ framework to the classification of ground states across a phase transition of the TFIM, in  Eq.~\eqref{eqn:TFIM_hamiltonian}, at system size $n=50$ qubits. We consider parameters $h_z,h_{xx}\in[0,1]$ in order to focus on a single phase transition (binary classification). For $h_z/h_{xx}>1$, the ground state is in the disordered phase, while for $h_z/h_{xx}<1$ it is in the antiferromagnetic phase. Furthermore, to ensure that the problem is non-trivial, we `disguise' the TFIM according to a random $V\in \mathcal{G}$ (Eq.~\eqref{eqn:randomunitaries} with $T=10$) such that
\begin{equation}
    H_{\text{disguised}}=VH_{\text{TFIM}}V^\dagger.
    \label{eqn:disguised_hamiltonian}
\end{equation}

To generate the training and test datasets, we start by randomly sampling values of $h_z$ and $h_{xx}$ and assign labels $y_s=-1$ ($+1$) to Hamiltonian parameters corresponding to the disordered (antiferromagnetic) phase. 
Each set of parameters corresponds to a distinct instance of $H_{\text{TFIM}}$ and, for each of them, we take the corresponding state $\rho_s$ to be the (approximate) ground state of this Hamiltonian instance. A $\gsim$ classical representation of $\rho_s$ consists of the vector of expectation values $(\bm{e}^{(s)})_\alpha=\Tr[\rho_s G_\alpha]$. Since exact diagonalization is intractable for $n=50$ qubits, to compute these, we resort to VQE in $\gsim$ with an overparametrized ansatz to obtain a circuit $U(\bm{\theta})$ such that $\rho_s\approx U(\bm{\theta}) \ket{0}\bra{0}^{\otimes n} U^\dagger(\bm{\theta})$. We then compute the classical representation $\bm{e}^{(s)}$ by applying Eq.~\eqref{eqn:gsim_ansatz_evolution} to a classical representation of the computational zero state. Finally, we transform this classical representation of the ground state of $H_{\text{TFIM}}$ to the corresponding `disguised' ground state of $H_{\text{disguised}}$ by applying $V$. Such procedure is repeated to generate a training dataset of 200 labeled states, equally divided in the two phases. To evaluate the performances of the classifier, we apply the same procedure to create a test dataset with the same number of states but that have not been seen by the optimizer during training.

\begin{figure}
    \centering
    \includegraphics[width=0.5\textwidth]{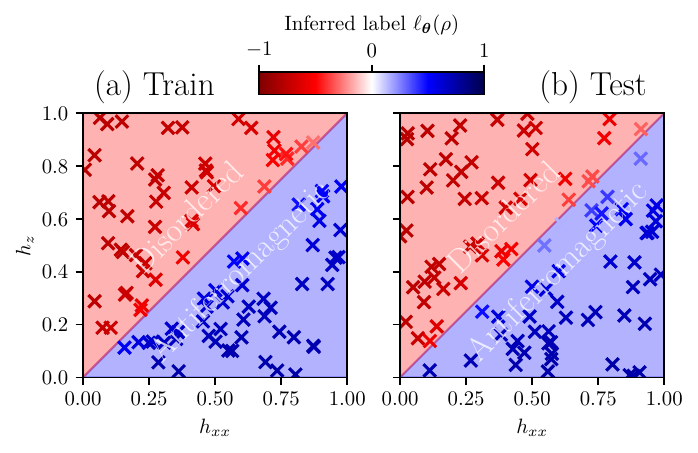}
    \caption{\textbf{Supervised QML with a classically-efficient circuit.} \textbf{(a)} Label assignments from a 50-qubit classifier on a training dataset of disguised TFIM ground states for $T=10$ (details in the main text), across the disordered-antiferromagnetic phase transition. The classifier is trained to minimize the loss function in Eq.~\eqref{eqn:QML_loss_function}. \textbf{(b)} Inference on a new set of data drawn from the same distribution. The model achieves 100\% classification accuracy on the test dataset. 
    }
    
    \label{fig:classifier}
\end{figure}

The quantum classifier is realized as 21 layers of the 2-local ansatz in Eq.~\eqref{eqn:LTFIM_ansatz-2} (see also  Appendix~\ref{appendix:ansatz}) and a measurement observable $O=\sigma^z_1\in i\g_0$. We train it to minimize the loss function in Eq.~\eqref{eqn:QML_loss_function} over the training dataset. Once the parameters trained, we assess accuracy of the classifier on the test dataset. 
Given trained parameters $\bm{\theta}^*$, to perform inference we assign labels as
\begin{equation}
    h_{\bm{\theta}^*}(\rho_s)=\operatorname{sgn}(\ell_{\bm{\theta}^*}(\rho_s))\in\mathcal{Y} \,.
    \label{eqn:assign_label}
\end{equation}
Results are depicted in Fig.~\ref{fig:classifier}(a), showing that for this problem the classifier achieves 100\% classification accuracy on the new data. Consistent classification accuracy is also verified across multiple random instances of the unitary disguise $V$, thus demonstrating successful application of $\gsim$ to a supervised QML problem. 

We note that in Appendix~\ref{appendix:implementing_classifier}, we discuss how the $\gsim$ phase classifier can be efficiently implemented on a quantum device.

\section{Conclusions}
Efficient classical simulations of quantum circuits are a valuable tool in scaling towards quantum advantage. In this work, we have presented $\mathfrak{g}$-sim, a classical simulation and optimization framework that relies on the Lie-algebraic structure of the circuits. Reformulating existing results on Lie-algebraic simulation \cite{somma2005quantum,somma2006efficient} into a modern presentation aimed at the quantum computing community, we further extended the scope of such simulations to a broader set of initial states, observables, or noise, and improved on the efficiency of their implementations. Moreover by comparing the scope of applicability of $\gsim$ to that of Wick-based simulations, we argue that our proposed methods enable new regimes for classical simulability that would be otherwise intractable. 
Of course, such classical simulations are only possible in restricted situations. Conditions allowing for such scalability with $\gsim$ were laid out, and demonstrations, with system sizes of up to $n=200$ qubits, highlighting distinctions compared to other classical simulation techniques were provided.

By introducing circuit optimization to this framework, we enabled the scalable classical study of paradigmatic examples relevant for variational quantum computing, demonstrating utility in studying scaling behaviors of VQA problems, improving trainability and mitigating barren plateaus via classical pre-training, and implementing supervised QML problems. These results expand the growing insights in classical pre-training of VQAs \cite{grant2019initialization,verdon2019learning,sauvage2021flip,rad2022surviving,liu2022mitigating,mitarai2022quadratic,ravi2022cafqa,cheng2022clifford,dborin2022matrix,mele2022avoiding,rudolph2022synergy}, Lie-algebraic study of trainability \cite{larocca2021diagnosing,larocca2021theory}, and the usefulness of classically simulable variational quantum algorithms~\cite{cerezo2023does,angrisani2023learning,bermejo2024quantum,lerch2024efficient,schreiber2023classical,jerbi2023shadows,shao2023simulating,basheer2023alternating,shaffer2023surrogate,anschuetz2022efficient,mele2024efficient}.

Furthermore, we expanded the framework of $\gsim$ to include compilation and circuit synthesis, which is a novel direction for simulation schemes of this type. By constructing and optimizing a circuit fidelity cost function that can be computed entirely in $\gsim$, we demonstrated that one can synthesize linear-depth circuits for unitary transformations in $e^{\g_0}$, where $\mathfrak{g}_0$ \eqref{eqn:g_tfxy} is the algebra used for simulations throughout this work. Compiling such rotations has already proven to be of great utility to the community, with e.g., the use of Givens rotation decomposition to prepare Hartree-Fock states with linear circuit depth \cite{kivlichan2018quantum,arute2020hartree}.
These can be used, e.g., as initial states for more refined circuit preparation~\cite{mizukami2020orbital}. However, our scheme does not necessarily require an underlying free-fermion structure, and therefore expands the class of compilable transformations to include \emph{any} system corresponding to a $\operatorname{poly}(n)$-dimensional algebra.

To date, Lie-algebraic considerations have been pivotal in many topics of quantum science including quantum error correction~\cite{zanardi1997noiseless,eastin2009restrictions}, controllability of quantum systems~\cite{khaneja2001time,dalessandro2010introduction,zeier2011symmetry,zimboras2015symmetry,marvian2022restrictions}, efficient measurements~\cite{yen2021cartan}, dynamical simulations~\cite{steckmann2021simulating,kokcu2022fixed}, and studying trainability properties of parametrized quantum circuits~\cite{larocca2021diagnosing,larocca2021theory}. Our work has demonstrated further utility in existing applications (dynamical simulations and studying trainability properties), and expanded this list to include classical pre-training of VQAs, efficient circuit compilation, and supervised QML. We anticipate that the $\gsim$ framework will provide helpful new perspectives and tools in the development of variational quantum computing.

\subsection{Future work}
In this work, we have explored a variety of applications of $\mathfrak{g}$-sim. Nonetheless, several applications beyond the scope of this work are apparent.

One of such applications is quantum error mitigation (QEM) \cite{cai2022quantum}, a class of techniques which seek to minimize noise-induced biases in quantum algorithm outputs without resorting to full-fledged quantum error correction. 
In particular, learning-based QEM methods \cite{czarnik2020error,strikis2020learning,google2020observation,montanaro2021error} require access to pairs of noisy circuit outputs (obtained from quantum hardware) and corresponding noiseless outputs (obtained from classically-efficient simulation techniques) in order to learn a function mapping noisy outputs to their correct noiseless values.
In this context, $\gsim$ could extend current classical simulation techniques that have been employed.

In a similar vein, $\mathfrak{g}$-sim has potential applications in the context of randomized benchmarking~\cite{knill2008randomized}. The greatest limitation in quantum computing is the effect of hardware errors in computations, both coherent and incoherent. There is thus a pressing need to comprehensively characterize the type and magnitude of errors present on quantum hardware. One approach to this is randomized benchmarking, which compares outputs of random gate sequences of increasing length to known expected results without resorting to standard process tomography. Recent work~\cite{helsen2022matchgate} has investigated the use of matchgate circuits for randomized benchmarking. Matchgates are closely related to $\mathfrak{g}_0$, and the underlying simulation schemes have much in common with $\mathfrak{g}$-sim. Much like the case of error mitigation, $\mathfrak{g}$-sim could be used to generate benchmarking data for non-matchgate circuits, expanding the scope of these techniques.

We remark that $\gsim$ has already played a central role in the study of classical simulability of certain variational quantum algorithms~\cite{cerezo2023does,angrisani2023learning,bermejo2024quantum,lerch2024efficient,schreiber2023classical,jerbi2023shadows,shao2023simulating,basheer2023alternating,shaffer2023surrogate,anschuetz2022efficient,mele2024efficient}. Here, it has been noted that while $\gsim$ can emulate the information processing capabilities of parametrized quantum circuits, its use still requires a quantum computer to obtain the components of the input state onto the irrep. The latter could necessitates a quantum computer for an initial data-acquisition phase, thus indicating that the whole algorithm is not fully classical end-to-end. It remains an open question for future research whether such ``quantum-enhanced'' simulations~\cite{cerezo2023does} can achieve a quantum advantage (if only polynomial), or whether $\gsim$ or other classical simulation techniques can be used for effective pre-training.

We also note the opportunity to further improve the compilation scheme in Sec.~\ref{sec:compilation}. Although its performance appears competitive with the Cartan decomposition approach of Ref.~\cite{kokcu2021fixed}, one disadvantage is that our approach requires re-optimization at each time step, whereas the former requires only one optimization. We believe that this limitation could be overcome by using a variational fast-forwarding ansatz \cite{cirstoiu2020variational}, which should work since the diagonalization unitary must necessarily be in $\mathcal{G}$. Furthermore, our scheme could trivially be extended to the compilation of evolution unitaries for time-dependent Hamiltonians. In the case of Hamiltonians with periodic Floquet driving, it should even be possible to construct a fast-forwardable solution, which to the best of our knowledge would not be possible with Cartan decomposition methods.

While the previous discussion was based on noiseless simulations, noisy simulations also have  many applications. 
Understanding the limitations imposed by noise and ways to mitigate them is primordial, and we expect the noisy simulations capabilities through $\gsim$ to play an important role in such research. Already, several of the studies can readily be extended to a noisy setup. These include the study of over-parametrization of Sec.~\ref{sec:overparametrization} that could be performed under Pauli noise channels providing further insights into noise effects when optimizing quantum circuits~\cite{garcia2023effects,duschenes2024characterization}. 
The pre-initializion  of Sec.~\ref{sec:pretraining} and the QML model of Sec.~\ref{sec:supervised_QML} could be re-trained including such noise, leading to more robust circuits. 

More generally tasks of state preparation could benefit from the inclusion of noise, with in particular the preparation of Hartree-Fock states~\cite{kivlichan2018quantum,arute2020hartree} that could be optimized to reflect more realistic conditions. Going further, one could envision noise-aware compilation, whereby one would aim at compiling a target unitary such that, depending on details of the noise, different circuit realizations would be identified through optimization. Adequacy of the scalable cost function used for the noiseless case~\eqref{eqn:adjointspaceunitarycostfunc} will need to be assessed for this noisy scenario. We leave this aspect for future works. These are all straightforward examples of use of noisy simulations with $\gsim$, but more applications could be expected.

As discussed earlier and developed further in the Appendices, several generalizations of $\gsim$ are possible. 
In particular, while presented here in the context of digital quantum computations, the methodologies can be easily adapted to analog computing. 
This opens up the possibility of exploring similar applications in this domain, and we anticipate that  Lie-algebraic simulations will be valuable for further investigations of the capabilities and constraints of analog and emerging digital-analog quantum platforms~\cite{lamata2018digital,parra2020digital,meitei2020gate,henriet2020quantum,daley2022practical,wurtz2023aquila}.

Finally, we recall that by studying the limitations of $\gsim$ and Wick-based techniques, we uncovered a deep connection between these classical simulation methods and quantum resources theories. Importantly, our results hint at the existence of different, non-equivalent, types of resource which make each of those methods fails. The exploration of this line of thought could lead to further understanding of what makes a quantum process truly ``quantum'', and thus deepen our understanding on the limitations of classical simulation strategies.

\section*{Acknowledgements}
The authors wish to thank Adrian Chapman, Nahuel L. Diaz, Tyson Jones, B\'{a}lint Koczor and Arthur Rattew for helpful technical conversations, and further thank Adrian Chapman for comments on the manuscript. MLG acknowledges the Rhodes Trust for the support of a Rhodes Scholarship. MLG was also  supported by the U.S. DOE through a quantum computing program sponsored by the Los Alamos National Laboratory (LANL) Information Science \& Technology Institute. ML acknowledges support by the Center for Nonlinear Studies at LANL.   ML and MC were supported by the Laboratory Directed Research and Development (LDRD) program of LANL under project numbers 20230049DR and 20230527ECR. LC was supported by the U.S. Department of Energy, Office of Science, Office of Advanced Scientific Computing Research through the Accelerated Research in Quantum Computing Program MACH-Q Project. FS was supported by the Laboratory Directed Research and Development (LDRD) program of
Los Alamos National Laboratory (LANL) under project number 20220745ER. This work was also supported by LANL's ASC Beyond Moore’s Law project and by by the U.S. Department of Energy, Office of Science, Office of Advanced Scientific Computing Research, under Computational Partnerships program. The authors acknowledge the use of the University of Oxford Advanced Research Computing (ARC) facility \cite{oxfordARC} in carrying out this work.

\clearpage
\newpage
\widetext
\appendix

\section*{Appendices for ``Lie-algebraic classical simulations for variational quantum computing''}

Here we present additional details for the main results in our manuscript. In Appendix~\ref{appendix:proofs}, we present proofs supporting the ability to perform noiseless and noisy simulations through $\gsim$.
In Appendix~\ref{appendix:efficient_impl}, we describe several techniques that significantly improve the efficiency of our implementation of $\gsim$. Further expanding these improvements, in Appendix \ref{appendix:gradients} we present efficient methods for computing gradients to enable optimization of circuits and dynamics with respect to observable cost functions. Then, Appendix \ref{appendix:ansatz} contains the ansatz circuits used throughout the work and further details on the benchmarking procedure used to evaluate the performance of our $\gsim$ implementation. In Appendix \ref{appendix:algebra_and_subalgebras}, we note several algebras of interest closely related to $\g_0$. In Appendix \ref{appendix:further_details_compilation}, we outline some subtleties of unitary compilation with $\gsim$. Finally, in Appendix \ref{appendix:implementing_classifier} we comment on practical implementation of a $\gsim$-trained phase classifier on a quantum device.

\section{Theorems and proofs}
\label{appendix:proofs}
\subsection{Representations of unitary evolution}
\label{appendix:unit_evolve}
\begin{lemma}[Invariance of the algebra]
Suppose $U\in\mathcal{G}$ and $iG_\alpha \in \g$. Then defining $\tilde{G}_\alpha=U^\dagger G_\alpha U$ for any $G_\alpha$, we have $i\tilde{G}_\alpha\in\mathfrak{g}$. That is, $i \g$ is an invariant subspace as per Definition~\ref{def:invariant}.
\label{lemma:group_operations}
\end{lemma}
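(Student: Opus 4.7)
The plan is to reduce to the single-exponential case and then invoke the Hadamard (BCH) expansion to write $U^\dagger G_\alpha U$ as a convergent series of nested commutators, each of which lies in $i\mathfrak{g}$ by closure of the Lie algebra under the bracket.

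First I would use Definition~\ref{def:dynamical_lie_group}, which asserts $\mathcal{G}=e^{\mathfrak{g}}$, so that any $U\in\mathcal{G}$ can be written as $U=e^{iA}$ for some $iA\in\mathfrak{g}$. (If one preferred not to rely on surjectivity of the exponential map, the same argument applies in turn to each factor of a finite product $U=\prod_l e^{iA_l}$ with $iA_l\in\mathfrak{g}$, since an invariance statement for a single exponential composes.)

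Next, I would apply the Hadamard lemma to the anti-Hermitian element $X=iA\in\mathfrak{g}$ and $Y=iG_\alpha\in\mathfrak{g}$:
\begin{equation}
e^{-X}\,Y\,e^{X}=\sum_{k=0}^{\infty}\frac{(-1)^{k}}{k!}\,\mathrm{ad}_{X}^{k}(Y),
\end{equation}
where $\mathrm{ad}_X^k(Y)=[X,[X,\ldots[X,Y]\ldots]]$ is the $k$-fold nested commutator. Because $\mathfrak{g}$ is a Lie algebra (closed under the bracket by Definition~\ref{def:dynamical_lie_algebra}) and both $X,Y\in\mathfrak{g}$, every term $\mathrm{ad}_X^{k}(Y)$ lies in $\mathfrak{g}$. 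Substituting back,
\begin{equation}
i\,U^\dagger G_\alpha U \;=\; e^{-iA}(iG_\alpha)e^{iA}\;=\;\sum_{k=0}^{\infty}\frac{(-1)^{k}}{k!}\,\mathrm{ad}_{iA}^{k}(iG_\alpha)\;\in\;\mathfrak{g},
\end{equation}
which is the claim $i\tilde{G}_\alpha\in\mathfrak{g}$.

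The only technical caveat is convergence and closedness of the series: since $\dim(\mathfrak{g})<\infty$ in our setting, $\mathfrak{g}$ is a closed subspace of $\mathfrak{su}(2^n)$, so any convergent sum of its elements remains in $\mathfrak{g}$, and the Hadamard series converges absolutely in operator norm. I expect this to be the only point worth mentioning explicitly; there is no real obstacle. A brief corollary remark is that the same argument shows invariance of $i\mathfrak{g}$ under left and right multiplication as well, justifying the use of $\irrep = i\mathfrak{g}$ throughout Sec.~\ref{sec:demo} and Sec.~\ref{sec:applications}.
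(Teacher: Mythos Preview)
Your proof is correct and follows essentially the same approach as the paper: write $U$ as a single exponential of a Lie-algebra element and expand the conjugation via the Hadamard (nested-commutator) series, then invoke closure of $\mathfrak{g}$ under the bracket. You are slightly more careful than the paper in explicitly noting convergence and closedness of the finite-dimensional subspace, and in remarking that a product-of-exponentials argument would avoid reliance on surjectivity of $\exp$.

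One small caution on your closing remark: the argument does \emph{not} show invariance of $i\mathfrak{g}$ under one-sided (left or right) multiplication by elements of $\mathcal{G}$---only under conjugation. In general $U G_\alpha \notin i\mathfrak{g}$. If you meant that both $U^\dagger G_\alpha U$ and $U G_\alpha U^\dagger$ lie in $i\mathfrak{g}$, that is true and follows by the same argument, but the phrasing ``left and right multiplication'' should be dropped.
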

\begin{proof}
    Since $U\in\mathcal{G}$, there exists some $iH\in\mathfrak{g}$ such that $U=e^{-iH}$. The exponential mapping is defined by
    \begin{equation}
        e^{-iH}=1-iH+\frac{(iH)^2}{2!}-\frac{(iH)^3}{3!}+\dots \,,
    \end{equation}
    and it follows that
    \begin{equation}
        \tilde{G}_\alpha=G_\alpha+i\left[H,G_\alpha\right]-\frac{1}{2}\left[H,\left[H,G_\alpha\right]\right]+\dots \,.
        \label{eqn:group_operation_nested_commutators}
    \end{equation}
    Since all the terms of Eq.~\eqref{eqn:group_operation_nested_commutators} are generated by nested commutators only involving elements of $\mathfrak{g}$ (up to appropriate factors of $i$ for each term), by Definition~\ref{def:dynamical_lie_algebra} of the Lie algebra  we have that $i\tilde{G}_\alpha\in\mathfrak{g}$.
\end{proof}

\begin{lemma}[Adjoint representation of unitary evolution, adapted from Appendix C in Ref.~\cite{somma2005quantum}]
    Suppose that $U\in\mathcal{G}$, and that $\{B^{(\lambda)}_\alpha\}_{\alpha=1}^{\dim(\irrep)}$ forms a Schmidt-orthonormal basis of $\irrep$.  
    Defining $\tilde{B}^{(\lambda)}_\alpha=U^\dagger B^{(\lambda)}_\alpha U$, we have 
    \begin{equation}
    \tilde{B}^{(\lambda)}_\alpha=\sum_{\beta} u_{\alpha\beta} B^{(\lambda)}_\alpha \, ,
    \label{eqn:group_operation_defn}
    \end{equation}
    where the corresponding matrix elements $u_{\alpha\beta}$ are those of the adjoint representation of $U$,
    \begin{equation}
        u_{\alpha\beta}=\left(\Phi_{\lambda}^{\operatorname{Ad}}(U)\right)_{\alpha\beta}.
    \end{equation}
    \label{lemma:adjoint_representation_group_operations}
\end{lemma}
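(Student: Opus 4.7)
The plan is to combine invariance of $\irrep$ under $\mathcal{G}$ with the Hadamard (BCH) expansion of the conjugation map, and then match the resulting series against the definition of $\Phi_{\lambda}^{\operatorname{Ad}}$ as the matrix exponential of $\Phi_{\lambda}^{\operatorname{ad}}$. First I would observe that Definition~\ref{def:invariant} (irreducibility is not actually needed here; only invariance) directly gives $\tilde B^{(\lambda)}_\alpha \equiv U^\dagger B^{(\lambda)}_\alpha U \in \irrep$ for every $U\in\G$. Since $\{B^{(\lambda)}_\beta\}_\beta$ is a Schmidt-orthonormal Hermitian basis of $\irrep$ and $\tilde B^{(\lambda)}_\alpha$ is itself Hermitian, the expansion $\tilde B^{(\lambda)}_\alpha = \sum_\beta c_{\alpha\beta} B^{(\lambda)}_\beta$ holds uniquely with $c_{\alpha\beta} = \Tr[B^{(\lambda)}_\beta \tilde B^{(\lambda)}_\alpha]\in\mathbb{R}$. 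The task thus reduces to identifying $c_{\alpha\beta}$ with $(\Phi_{\lambda}^{\operatorname{Ad}}(U))_{\alpha\beta}$.

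Next, writing $U=e^{iA}$ for some $iA\in\g$ and applying the Hadamard lemma, I would expand
\begin{equation}
U^\dagger B^{(\lambda)}_\alpha U \;=\; e^{-iA} B^{(\lambda)}_\alpha\, e^{iA} \;=\; \sum_{n=0}^{\infty} \frac{1}{n!}\,\operatorname{ad}_{-iA}^{\,n}\!\bigl(B^{(\lambda)}_\alpha\bigr) \,,
\end{equation}
where $\operatorname{ad}_{X}(Y)\equiv[X,Y]$. A short induction, leveraging the already-derived Eq.~\eqref{eqn:structure_factors} extended by linearity from the basis $\{iG_\gamma\}$ to the full element $iA\in\g$, shows that each nested commutator $\operatorname{ad}_{iA}^{\,n}(B^{(\lambda)}_\alpha)$ remains in $\irrep$ and is expanded in the Hermitian basis by the $n$-th power of the matrix representing $\operatorname{ad}_{iA}$. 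By construction (Definition~\ref{defn:adjoint_representation}), that matrix is precisely $\Phi_{\lambda}^{\operatorname{ad}}(iA)$ acting on the coefficient vector.

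Finally, summing the series converts the expansion into a matrix exponential, so that $c_{\alpha\beta}$ is the $(\alpha,\beta)$ entry of $\exp\!\left(-\Phi_{\lambda}^{\operatorname{ad}}(iA)\right)$. Invoking Definition~\ref{defn:adjoint_representation_unit}, which sets $\Phi_{\lambda}^{\operatorname{Ad}}(U)=e^{i\bar{\bm A}}$ with $\bar{\bm A}=\Phi_{\lambda}^{\operatorname{ad}}(A)$, then gives $c_{\alpha\beta}=(\Phi_{\lambda}^{\operatorname{Ad}}(U))_{\alpha\beta}$, completing the proof. The main obstacle I anticipate is purely bookkeeping: tracking the factor of $i$ and the minus sign that relate $\operatorname{ad}_{-iA}$ to $\Phi_{\lambda}^{\operatorname{ad}}(A)=\bar{\bm A}$ under the paper's conventions (in particular, whether the conjugation is $U(\cdot)U^\dagger$ or $U^\dagger(\cdot)U$, which flips the sign in the exponent). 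Once the sign convention is fixed consistently with Definition~\ref{defn:adjoint_representation_unit}, every other step is a routine application of the Hadamard lemma and linearity of the adjoint map.
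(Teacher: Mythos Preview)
Your proposal is correct and follows essentially the same logic as the paper: both arguments identify the infinitesimal conjugation action with the adjoint-representation matrix $\Phi_\lambda^{\operatorname{ad}}$ via the commutator structure, and then exponentiate. The only difference is in execution: the paper factors $U$ into a product of infinitesimal steps $U^{(\Delta)}=e^{-i\Delta H}$, linearizes each step to first order, and composes the limit, whereas you invoke the Hadamard lemma once and match the full nested-commutator series term-by-term to powers of $\Phi_\lambda^{\operatorname{ad}}(iA)$. Your route is slightly more direct (one series instead of a Trotter-style limit), but the two are standard equivalents of the same computation, and your anticipated bookkeeping of the sign in $\operatorname{ad}_{-iA}$ versus $\bar{\bm A}=\Phi_\lambda^{\operatorname{ad}}(A)$ is indeed the only point requiring care.
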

\begin{proof}
    Since $U\in\mathcal{G}$, there exists some $iH\in \mathfrak{g}$ such that $U=e^{-iH}$. Furthermore, $U$ may be decomposed as
    \begin{equation}
        U=\lim_{M\to\infty}\prod_{m=1}^{M} U^{(\Delta)},
        \label{eqn:infinitesimal_product}
    \end{equation}
    corresponding to infinitesimal steps $\Delta=1/M$ and infinitesimal unitaries $U^{(\Delta)}$ defined as
    \begin{equation}
        U^{(\Delta)} = e^{-i\Delta H} = 1-i\Delta H+\mathcal{O}(\Delta^2).
        \label{eqn:infinitesimal_unitary}
    \end{equation}
    Given the definition of the invariant space $\irrep$, and the fact that $U^{(\Delta)}\in \mathcal{G}$ we can always write 
    \begin{equation}
        {U^{(\Delta)}}^\dagger B^{(\lambda)}_\alpha U^{(\Delta)}=B^{(\lambda)}_\alpha - \sum_{\beta} v^{(\Delta)}_{\alpha\beta} B^{(\lambda)}_\beta.
        \label{eqn:infinitesimal_group_operation}
    \end{equation}
    Using Equations~\eqref{eqn:infinitesimal_unitary} and \eqref{eqn:infinitesimal_group_operation} we obtain
    \begin{equation}
        {U^{(\Delta)}}^\dagger B^{(\lambda)}_\alpha U^{(\Delta)} = B^{(\lambda)}_\alpha+i[\Delta H, B^{(\lambda)}_\alpha]+\mathcal{O}(\Delta^2)=B^{(\lambda)}_\alpha - \sum_{\beta} v^{(\Delta)}_{\alpha\beta} B^{(\lambda)}_\beta.
        \label{eqn:first_order_group_operation}
    \end{equation}
    We may discard vanishing terms in $\mathcal{O}(\Delta^2)$ since we are in the infinitesimal limit $M\to\infty$. Doing so, and taking $\Tr[B^{(\lambda)}_\gamma\dots]$ of both sides of Eq.~\eqref{eqn:first_order_group_operation} yields
    \begin{align}
        &i\Tr[B^{(\lambda)}_\gamma[\Delta H, B^{(\lambda)}_\alpha]]=-\sum_\beta v^{(\Delta)}_{\alpha\beta}\Tr[B^{(\lambda)}_\gamma B^{(\lambda)}_\beta] \\
        \implies& -\sum_k \Delta h_k \Tr[B^{(\lambda)}_\gamma,[G_k,B^{(\lambda)}_\alpha]] = i\Delta v^{(\Delta)}_{\alpha\gamma} \\
        \implies& v^{(\Delta)}_{\alpha\gamma}=i\Delta \sum_k h_k(\Phi^{\operatorname{ad}}_{\mathfrak{g}}(G_k))_{\alpha\gamma}=i\Delta \left(\Phi^{\operatorname{ad}}_{\mathfrak{g}}(H)\right)_{\alpha\gamma},
        \label{eqn:determining_adjoint_unitary}
    \end{align}
    where we have expanded $H=\sum_k h_k G_k$ for $h_k\in\mathbb{R}$ in terms of a Schmidt-orthonormal basis $\{G_k\}$ of $\g$ and used the Definition~\ref{defn:adjoint_representation} of the adjoint representation. Substituting the result of Eq.~\eqref{eqn:determining_adjoint_unitary} into Eq.~\eqref{eqn:infinitesimal_group_operation} reveals that the infinitesimal group operation is determined entirely by the adjoint representation $\Phi^{\operatorname{ad}}_{\mathfrak{g}}(H)$
    \begin{equation}
        {U^{(\Delta)}}^\dagger B^{(\lambda)}_\alpha U^{(\Delta)}=\sum_{\beta}\left(I-i\Delta\Phi^{\operatorname{ad}}_{\mathfrak{g}}(H)\right)_{\alpha\beta}B^{(\lambda)}_\beta=\sum_\beta\left(e^{-i\Delta\Phi^{\operatorname{ad}}_{\mathfrak{g}}(H)}\right)_{\alpha\beta}B^{(\lambda)}_\beta+\mathcal{O}(\Delta^2).
        \label{eqn:infinitesimal_operation_final_form}
    \end{equation}
    We may again discard terms in $\mathcal{O}(\Delta^2)$ since we are in the infinitesimal limit $M\to\infty$. Noting that the full unitary is obtained by composing infinitesimal steps (Eq.~\eqref{eqn:infinitesimal_product}), we obtain from Eq.~\eqref{eqn:infinitesimal_operation_final_form} that
    \begin{equation}
        \tilde{B}^{(\lambda)}_\alpha =  U^\dagger B^{(\lambda)}_\alpha U = \sum_\beta\left(e^{-i\Phi^{\operatorname{ad}}_{\mathfrak{g}}(H)}\right)_{\alpha\beta}B^{(\lambda)}_\beta=\sum_\beta\left(\Phi_{\lambda}^{\operatorname{Ad}}(U)\right)_{\alpha\beta}B^{(\lambda)}_\beta,
    \end{equation}
    which is exactly the desired result.
\end{proof}

\subsection{Evolution of product of observables}
\label{appendix:simul_product}
Here, we provide details about how product of observables are simulated with $\gsim$. When specializing to a product of either a single or two observables, we retrieve the setting of Results.~\ref{thm:basic_evolution} and~\ref{thm:second_order_correlator_evolution} of the main text, respectively. However, this encapsulates more general situations. 
In the following, we wish to evaluate the expectation value
\begin{equation}\label{eq:expectation_product}
    \langle O  \rangle := \Tr[O U \rho^{(\operatorname{in})} U^{\dagger} ], \quad \text{where}\; O = \prod_a O^{(a)} \quad \text{and}\; O^{(a)}=\sum_{\alpha}(\bm{w}^{(a)})_\alpha B^{(\lambda_a)}_\alpha.
\end{equation}
Each $O^{(a)}$ is supported by a single irrep $\mathcal{L}_{\lambda_a}$, and we recall that the unitary realized by the circuit is defined through
\begin{equation}
    U=\prod_{l=1}^L e^{-i\theta_{l}H_l} \,, \label{eqn:PeriodicStructureAnsatz_app}
\end{equation}
with all the generators $i H_l \in \g$ such that $U \in \mathcal{G}$. For notational brevity, we dropped the dependency on the circuit parameters $\bm{\theta}$ in our notation.

First note that from the Definition.~\ref{defn:adjoint_representation_unit} of the adjoint representation we have $\Phi_{\lambda}^{\operatorname{Ad}}(VV') = \Phi_{\lambda}^{\operatorname{Ad}}(V)\Phi_{\lambda}^{\operatorname{Ad}}(V')$ for any $V$ and $V' \in \mathcal{G}$. Hence, we can obtain the adjoint representation of the overall circuit through
\begin{equation}
    \Phi_{\lambda}^{\operatorname{Ad}}(U)=\left(\prod_{l=1}^Le^{-i\theta_{l} \bar{\bm{H}}_{l}}\right).
    \label{eqn:adjoint_whole_circuit}
\end{equation}
with $\bar{\bm{H}}_{l}\equiv \Phi_{\lambda}^{\operatorname{ad}}(H_{l})$ the adjoint representations of the generators $H_l$. 
Eq.~\eqref{eqn:adjoint_whole_circuit} specifies how elements $B_\alpha^{(\lambda)}$ of the irrep basis are transformed under conjugation by the circuit:
\begin{equation}
    U^\dagger B_\alpha^{(\lambda)} U = \sum_{\beta}\left(\Phi_{\lambda}^{\operatorname{Ad}}(U)\right)_{\alpha \beta}B_\beta^{(\lambda)}\,.
    \label{eqn:whole_basis_transform}
\end{equation}

Second, making use of the cyclicity of the trace and of the identity $U U^{\dag}=I$, we get
\begin{equation}
    \langle O \rangle
    =\Tr\left[\left( \prod_a \tilde{O}^{(a)} \right)\rho^{(\operatorname{in})}\right], \quad \text{where}\; \tilde{O}^{(a)} = U^{\dagger} O^{(a)} U.
    \label{eqn:O_almost_final}
\end{equation}
Furthermore, from Eq.~\eqref{eqn:whole_basis_transform}, we know that we can write \begin{equation}
    \tilde{O}^{(a)}=\sum_{\alpha}(\tilde{\bm{w}}^{(a)})_\alpha B^{(\lambda_a)}_\alpha, \quad \text{with}\; \tilde{\bm{w}} = \bm{w} \cdot \Phi_{\lambda}^{\operatorname{Ad}}(U).
\end{equation}
By combining this expression to Eq.~\eqref{eqn:O_almost_final} we are in measure to evaluate the expectation value. In the following we start by the case where $O$ is the product of a single (equivalent to Result.~\ref{thm:basic_evolution}) and two observables (equivalent to Result.~\ref{thm:second_order_correlator_evolution}) before addressing to the more general case.

\medskip
     \textbf{(Product of a single observable)} For the case where $O \in \irrep$ we obtain:
     \begin{equation}
         \langle O \rangle = \sum_{\alpha}(\tilde{\bm{w}})_\alpha \Tr\left[B^{(\lambda)}_\alpha \rho^{(\operatorname{in})} \right]  = \tilde{\bm{w}} \cdot \bm{e},
     \end{equation}
     where we have defined the vector of expectation values $\bm{e}$ such that $\left(\bm{e}\right)_\alpha\equiv\Tr[B_{\alpha}^{(\lambda)}\rho^{(\operatorname{in})}]$.
     
\medskip
          \textbf{(Product of two observables)} For $O=O^{(1)} O^{(2)}$ with $O^{(1)} \in \mathcal{L}_{\lambda_1}$ $O^{(1)} \in \mathcal{L}_{\lambda_2}$  we obtain
     \begin{equation}
         \langle O \rangle = \sum_{\alpha_1, \alpha_2}(\tilde{\bm{w}}^{(1)})_{\alpha_1} (\tilde{\bm{w}}^{(2)})_{\alpha_2} \Tr\left[B^{(\lambda_1)}_{\alpha_1} B^{(\lambda_2)}_{\alpha_2} \rho^{(\operatorname{in})} \right]  = \tilde{\bm{w}}^{(1)} \cdot \bm{E} \cdot \tilde{\bm{w}}^{(2)}.
     \end{equation}
     where we have defined the matrix of expectation values $\bm{E}$ such that $\bm{E}_{\alpha\beta}\equiv \Tr[B_{\alpha}^{(\lambda_1)} B_{\beta}^{(\lambda_2)}\rho^{(\operatorname{in})}]$.

\medskip
          \textbf{(Product of observables)} For the product of $M$ observables, with $a=1, \hdots, M$ in Eq.~\eqref{eq:expectation_product}, we get
     \begin{equation}
         \langle O \rangle = \sum_{\alpha_1,\hdots, \alpha_M}\left(\prod_{a=1}^M \tilde{\bm{w}}^{(a)}_{\alpha_a} \right)  \Tr\left[\left(\prod_{a=1}^M B_{\alpha_a}^{(\lambda_a)}\right) \rho^{(\operatorname{in})} \right]  = \sum_{\alpha_1,\hdots, \alpha_M} \bm{T}_{\alpha_1,\hdots,\alpha_M} \tilde{\bm{w}}_{\alpha_1} \hdots \tilde{\bm{w}}_{\alpha_M}, 
     \end{equation}\label{eqn:product_arbitrary}
     where we have defined the tensor of expectation values $\bm{T}$ such that $\bm{T}_{\alpha_1,\hdots,\alpha_M}\equiv \Tr[\left(\prod_{a=1}^M B_{\alpha_a}^{(\lambda_a)}\right) \rho^{(\operatorname{in})}]$.

To conclude, let us briefly comment on the different complexities entailed. We assume that the terms $e^{-i\theta_{l} \bar{\bm{H}}_{l}}$ appearing in Eq.~\eqref{eqn:adjoint_whole_circuit} are known (i.e., that the exponentiation has already been performed) and aim at assessing the complexities entailed when evaluating Eq.~\eqref{eqn:product_arbitrary}. (i) First, we can evaluate the distinct $\tilde{\bm{w}}^{(a)}$ through matrix-vector multiplication with vector dimensions $\dim(\mathcal{L}_{\lambda_a})$ for each of the irreps $\mathcal{L}_{\lambda_a}$ involved. (ii) Second, we need to evaluate a number of  $\prod_a \dim(\mathcal{L}_{\lambda_a})$ expectation values to obtain the tensor $\bm{T}$. (iii) Lastly, we need to contract the tensor of expectation values together with the vectors $\tilde{\bm{w}}^{(a)}$. Overall we see that the complexity scales as $\mathcal{O}(\max_a \dim(\irrepa)^M)$. This complexity is dominated by the steps (ii) and (iii) and scales exponentially with the number of observables involved in the product decomposition of $O$. 
 As mentioned in the main text, the decomposition of an observable to be simulated through $\gsim$ is not unique and different decompositions can yield different computational scalings.

\subsection{Evolution of observables under noisy channels}
\label{app:noise}
In this Appendix, we extend results of the previous section in the presence of noise in the circuit: In Sec.~\ref{app:noisy_one_irrep}, we formalize the noisy setting under consideration and address the case where the observable is supported by a single irrep. In Sec.~\ref{app:noisy_product_irrep}, we rather consider the case of product of observables, each supported in a single irrep, as we shall see this case requires more care.

\subsubsection{Noisy simulation with observables supported by a single irrep}\label{app:noisy_one_irrep}
In the presence of noise, the unitary of Eq.~\eqref{eqn:PeriodicStructureAnsatz_app} is now replaced by the channel
\begin{equation}\label{eqn:PeriodicStructureAnsatzNoisy}
    \tilde{\mathcal{U}} = \bigcirc_{l=1}^L (\tilde{\Lambda}_l \circ \mathcal{U}_l),\quad \text{with}\;\; \mathcal{U}_l(\cdot):=U_l\, \cdot \, U_l^{\dagger} \quad \text{and} \;\; U_l=e^{-i \theta_l H_l}.
\end{equation}
Accordingly, the state obtained after the noisy circuit evolution is now given by $\rho^{(\operatorname{out})} \equiv \tilde{\mathcal{U}} ( \rho^{(\operatorname{in})})$.
Each of the $\tilde{\Lambda}_l$ terms capture the noise affecting the $l$-th gate of the circuit and can decomposed as
\begin{equation}\label{eq:Kraus}
    \tilde{\Lambda}_l(\cdot) = \sum_k E_{k,l} \cdot E^{\dagger}_{k,l},
\end{equation}
in terms of Kraus operators $E_{k,l}$ that satisfy $\sum_k E_{k,l}^\dag E_{k,l} = I$. When all the $\tilde{\Lambda}_l$ are the identity channel we retrieve the unitary evolution specified in Eq.~\eqref{eqn:PeriodicStructureAnsatz_app}.
To capture the type of noise that can be simulated with $\gsim$ we recall the following definition from the main text.
\begin{definition}[Normalizer of $\irrep$]\label{def:normalizer_app}
Given an irrep $\irrep$, we define its normalizer as all the operators that leaves the subspace invariant under conjugation: 
\begin{equation}
    \stab := \{A \in \mathcal{L}\, | \, A^{\dagger} X A \in \irrep, \, \forall X \in \irrep \}.
\end{equation}
\end{definition}
We highlight that by definition of the irreps we have $\G \subset \stab$ for any $\irrep$. However, 
the normalizer may contain many more operators $A \notin \G$.
In analogy to Definition~\ref{defn:adjoint_representation}, we can represent the action by conjugation of elements of $\stab$ onto $\irrep$ in terms of $\dim(\irrep) \times \dim(\irrep)$ matrices:
\begin{definition}[Adjoint representation of $\stab$]
Given an  operator  $A$ in $\stab$, its  adjoint representation on the irrep $\irrep$ is obtained via the map $\Phi_{\lambda}^{\operatorname{N}}:\stab \mapsto \mathbb{R}^{\dim(\irrep)\times \dim(\irrep)}$ and is defined by
\begin{equation}        \left(\Phi_{\lambda}^{\operatorname{N}}(A)\right)_{\alpha \beta}\equiv  \Tr[B_\alpha^{(\lambda)}A^\dag B_\beta^{(\lambda)}A] \,.
\label{eqn:adjoint_representation_defn_stab}
\end{equation}
\label{defn:adjoint_representation_stab}
\end{definition}
By linearity and Definition~\ref{def:normalizer_app} we see that, if $E_{k,l} \in \stab$ for all $k$ in the Kraus decomposition of Eq.~\eqref{eq:Kraus} then 
\begin{equation}
    \forall X \in \irrep, \quad \tilde{\Lambda}_l(A)\in \irrep.
\end{equation}
That is, if a noise channel has a Kraus decomposition with all its Kraus operators belonging to the normalizer, then the action of this channel also leaves $\irrep$ invariant.
In such case, we can define the action of the noise channel through
\begin{equation}
\Phi_{\lambda}(\tilde{\Lambda}_l) \equiv  \sum_k \Phi_{\lambda}^{\operatorname{N}}(E_{k,l}) \,.
\end{equation}
One can readily verify that given an observable $O=\sum_\alpha (\bm{w})_\alpha B_\alpha^{(\lambda)}$, supported by a single irrep $\irrep$, we get that $\tilde{\Lambda}_l (O) = \sum_\alpha (\bm{w}')_\alpha B_\alpha^{(\lambda)}$ where we can obtain the new weights as $\bm{w}' = \Phi_{\lambda}(\tilde{\Lambda}_l)\cdot \bm{w}$.
We are now in measure to generalize Res.~\ref{thm:basic_evolution} to noisy simulations:
\begin{result}[Noisy simulation of observables in the $\lambda$-th irrep]
Consider a noisy circuit as per Eq.~\eqref{eqn:PeriodicStructureAnsatzNoisy} with noise channels $\tilde{\Lambda}_l$ having a Kraus decomposition as given in Eq.~\eqref{eq:Kraus} with operators $E_{k,l}\in \stab$ for all $k$ and $l$. Let $O$ be an observable with support in $\irrep$ such that $O=\sum_\alpha (\bm{w})_\alpha B_\alpha^{(\lambda)}$ for $\bm{w}\in \mathbb{R}^{\dim(\irrep)}$.  Then, given $\bm{e}^{(\operatorname{in})}_\lambda$, we can compute
\begin{equation}
    \langle O(\bm{\theta}) \rangle=\bm{w}\cdot\bm{e}^{(\operatorname{out})}_\lambda \,,
\end{equation}
with the vector of output expectation values obtained as
\begin{equation}
\bm{e}^{(\operatorname{out})}_\lambda=\left(\prod_{l=1}^L \Phi(\tilde{\Lambda}_l)e^{-i\theta_{l}\bar{\bm{H}}_{l}}\right)\cdot\bm{e}^{(\operatorname{in})}_\lambda, 
    \label{eqn:gsim_ansatz_evolution_noisy}
\end{equation}
where $\bar{\bm{H}}_{l}\equiv \Phi_{\lambda}^{\operatorname{ad}}(H_{l})$ and $\Phi(\tilde{\Lambda}_l)\equiv \sum_k\Phi_{\lambda}^{\operatorname{N}}(E_{k,l})$.
\label{thm:basic_noisy_evolution}
\end{result}
Notably, Result~\ref{thm:basic_noisy_evolution} enables us to perform \emph{exact} noisy simulations while retaining the same time and memory complexity as the noiseless case. This is in contrast with typical noisy simulations that either are (i) approximate and based on simulating many pure state trajectories, corresponding to random realization of the noise, or are (ii) exact but incur a quadratic increase in memory and computing requirements due to manipulations of density matrices. We note that the case of noiseless simulations of operators in many irreps (Result~\ref{thm:multi_evolution}) is readily ported to the noisy case in a similar fashion, and again do not incur additional overhead. However, as is now discussed the case of product of operators (Result~\ref{thm:second_order_correlator_evolution}) requires more care to be ported to the noisy setup  and incurs additional limitations.

\subsubsection{Simulation with products of observables each supported by a single irrep}
\label{app:noisy_product_irrep}

As per Eq.~\eqref{eq:expectation_product}, we now consider the case where $O = \prod_a O^{(a)}$ and $O^{(a)} \in \mathcal{L}_{\lambda_a}$ for any $a=1, \hdots, M$. The reason why the case with product of observables requires more care is because we cannot deal with each of the observables $O^{(a)}$ separately.
In particular, we see that while 
\begin{equation}\label{eq:factorization_unitary}
\mathcal{U}_l(O) = U^{\dagger} O U = U^{\dagger} O^{(1)} U U^{\dag} O^{(2)} \hdots U^{\dag} O^{(M)} U = \prod^M_{a=1} U^{\dagger} O^{(a)} U
\end{equation}
holds when the channel is unitary, the same is not true for arbitrary noise channel: 
\begin{equation}
\tilde{\Lambda}_l^{\dagger}(O) \neq \prod^M_{a=1} \tilde{\Lambda}^{\dagger}_l(O^{(a)}).
\end{equation}
Still, upon further restrictions of the noise channels, we can perform noisy simulation of product of observables with $\gsim$. In particular we now require that the noise channels acting on our circuits adopt the form:
\begin{equation}\label{eq:Kraus_unit}
    \tilde{\Lambda}_l(\cdot) = \sum_{k=1}^{K} p_{k,l} V_{k,l} \cdot V^{\dagger}_{k,l},
\end{equation}
with the $p_{k,l}$ valid distributions for each $l$ (i.e., $p_{k,l} \geq 0$ and $\sum_k p_{k,l}=1$) and where the $V_{k,l}\in \staba$ need to be unitaries that all belong to the normalizer of each $\irrepa$. For ease of notations, we have assumed the same number $K$ of unitaries involved for each noise channel. 

Eq.~\eqref{eq:Kraus_unit} is a restriction of the Kraus decomposition in Eq.~\eqref{eq:Kraus}, but still encompasses many noise channels of interest  such as the Pauli noise channels, whereby each of the $V_{k,l}$ is a Pauli string. Now we have that
\begin{equation}
\tilde{\Lambda}^{\dagger}_l(O) = \sum_k p_{k,l} \prod^M_{a=1} V^{\dagger}_{k,l} O^{(a)} V_{k,l},
\end{equation}
that resembles Eq.~\eqref{eq:factorization_unitary}, except for the weighted sum, and can be used for the purpose of simulations. 
In particular, denoting as $\mathcal{V}_{k,l}(\cdot):=V_{k,l}\, \cdot \, V_{k,l}^{\dagger}$, the expectation value in Eq.~\eqref{eq:expectation_product} can be written as
     \begin{equation}
     \label{eq:expectation_product_noisy}
         \langle O \rangle = \sum_{\vec{k}}p_{\vec{k}}\Tr\left[\left( \prod_a \tilde{O}_{\vec{k}}^{(a)} \right)\rho^{(\operatorname{in})}\right], \quad \text{where}\; \tilde{O}_{\vec{k}}^{(a)} = \bigcirc^1_{l=L} (\mathcal{U}^{\dagger}_l \circ \mathcal{V}^{\dag}_{\vec{k}_l,l}  ) (O^{(a)}).
     \end{equation}
     where we have defined $\vec{k}=\{0, \hdots, K\}^L$ a vector of noise trajectories, with entries $\vec{k}_l$ indicating the unitary $V_{\vec{k}_l, l}$ that has been applied at layer $l$, and $p_{\vec{k}} = \prod^L_{l=1} p_{\vec{k}_l,l}$ the probability associated to the trajectory. 
Denoting as $\Phi_{\lambda_a}^{\operatorname{N}}(V_{k,l})$ the adjoint representation of the $V_{k,l} \in \staba$, we get that \begin{equation}\label{eq:product_evol_coeffs}
    \tilde{O}_{\vec{k}}^{(a)}=\sum_{\alpha}(\tilde{\bm{w}}_{\vec{k}}^{(a)})_\alpha B^{(\lambda_a)}_\alpha, \quad \text{with}\; \tilde{\bm{w}}_{\vec{k}} = \bm{w} \cdot \prod_{l=1}^L \left (\Phi_{\lambda_a}^{\operatorname{N}}(V_{\vec{k}_l,l}) \Phi_{\lambda_a}^{\operatorname{Ad}}(U_l) \right).
\end{equation}
Overall, we can estimate Eq.~\eqref{eq:expectation_product_noisy} through the sampling of $S$ noise trajectories $\vec{k}$ with probability $p_{\vec{k}}$, and evaluating the corresponding expectation values through Eq.~\eqref{eq:product_evol_coeffs} and the noiseless results for product of observables detailed in Sec.~\ref{appendix:simul_product}. The variance of such estimates scales as $1/S$ and the complexity as $\mathcal{O}(S \max_a \dim(\irrepa)^M)$ as per Sec.~\ref{appendix:simul_product}.

\subsection{Analog computing}
\label{appendix:analog_computing}

So far, we have only considered the evolution of states under the action of quantum circuits (or corresponding mixed-unitary channels). Still, the principles underpinning $\gsim$ can equally be applied to the case where the system is \emph{continuously} driven (i.e., in the analog computing paradigm). In the following, we limit ourselves to the noiseless case where the observable of interest is supported by a single irrep, but stress that extensions to the case of product of observables in differerent irreps (as discussed earlier in Sec.~\ref{appendix:simul_product}) or to the noisy case (as discussed earlier in Sec.~\ref{app:noise} and with similar restrictions on the noise that can be supported) readily follow.

In the following, we consider evolution under a time-dependent Hamiltonian 
\begin{equation}
    \mathcal{H}(t)= \sum_h c_h(t) H_h,
    \label{eq:ctl_hamiltonian}
\end{equation} 
with constituents $i H_h \in \g$ and control functions $c_h(t)$.
Given an initial state $\rho^{(in)}$, the state of the system at time $t$ can be expressed as $\rho(t) = U(t)\rho^{(in)} U^{\dagger}(t)$ where $U(t) = \mathcal{T} \big[ {\rm exp} (-i \int^t_0\, \mathcal{H}(t') \, dt') \big]$ with $\mathcal{T}$ the time-ordering operator.
In analogy to simulation in the quantum circuit scenario, we are now tasked with evaluating the expectation value $\langle O \rangle \equiv {\rm Tr}[O \rho(T)]$ of an observable $O$ supported by $\g$ for the state $\rho(t=T)$ obtained at the end of the evolution.

Defining (in the Heisenberg picture) the time-dependent operator $O(t) \equiv U^{\dagger}(t) O U(t)$, we have that $\langle O \rangle ={\rm Tr}[O(T)\rho^{(\operatorname{in})}]$. Furthermore, the time evolution of $O(t)$ satisfies the ordinary differential equation (ODE):
\begin{equation}
    \frac{d}{dt} O(t) = i [\mathcal{H}(t), O(t)], 
    \label{eqn:ode}
\end{equation}
with initial condition $O(t=0)=O$.
From the Definition~\ref{defn:adjoint_representation}, it can be verified that 
\begin{equation}
    [\mathcal{H}(t), B_{\alpha}^{(\lambda)}] = \sum_\beta \big[ \adrep{\mathcal{H}(t) } \big]_{\alpha \beta}B_{\beta}^{(\lambda)},   
\end{equation}
with the matrix $\adrep{ \mathcal{H}(t) } = \sum_h c_h(t) \adrep{ H_h}$.

Hence, decomposing $O(t)=\sum_\alpha (\mathbf{o}(t))_\alpha B_{\alpha}^{(\lambda)}$ 
in the observable basis $\{B_{\alpha}^{(\lambda)}\}$ of the irrep $\irrep$, one can recast Eq.~\eqref{eqn:ode} as the ODE
\begin{equation}
    \frac{d}{dt} \mathbf{o}(t) = i \adrep{ \mathcal{H} } \mathbf{o}(t)
    \label{eqn:ode_adjoint}
\end{equation}
over the vector $\mathbf{o}(t)$ that has dimension $\dim(\irrep)$.

In summary, evaluating the expectation value $\langle O \rangle$ consists in two steps. 
First, solve the dynamics of $\mathbf{o}(t)$ from $t=0$ to $t=T$ according to Eq.~\eqref{eqn:ode_adjoint}. This is achieved by standard numerical ODE solvers, and only requires acting on a $\dim(\irrep)$ vector space. 
Then, upon evaluation of $\mathbf{o}(T)$, one can compute $\langle O \rangle = \mathbf{o}(T) \cdot \mathbf{e}^{(\operatorname{in})}$, where we recall that the vector $\bm{e}^{(\operatorname{in})}$ contains  the expectation values $\Tr[B_{\alpha}^{(\lambda)}\rho^{(\operatorname{in})}]$ of all $B_{\alpha}^{(\lambda)}$ with respect to the initial state.

Overall, this extends the scope of $\gsim$ to analog (and digital-analog) quantum computing scenarios. As an example of application, in Ref.~\cite{cote2022diabatic} such capabilities were used to evaluate (and optimize) continuous diabatic schedules for tasks of ground state preparations over system sizes of up to $n=39$ spins.

\section{Efficient implementation of $\gsim$}\label{appendix:efficient_impl}

The $\gsim$ framework outlined in Section \ref{sec:gsim_principles} allows scalable classical simulations for products of observables supported by irreps of $\G$ whenever $\dimg\in\mathcal{O}(\operatorname{poly}(n))$. In addition to extending the scope of previous Lie-algebraic simulation techniques~\cite{somma2005quantum,somma2006efficient}, another contribution of this work is to introduce new optimizations which substantially improve the efficiency of $\gsim$. In the following subsections, we detail three main improvements:
\begin{itemize}
    \item Pre-computing eigendecompositions of gate generator adjoint representations reduces the time complexity of $\gsim$ evolution from $\mathcal{O}(\dim(\irrep)^3)$ to $\mathcal{O}(\dim(\irrep)^2)$ (\ref{sec:efficient_implementation_gsim}).
    \item Relevant matrices are extremely sparse when $\irrep$ has a Pauli basis, allowing speedups by use of sparse methods, even for subalgebras whose generators are not sparse (\ref{appendix:sparsity}).
    \item Eigendecompositions of gates with Pauli generators can be computed in $\mathcal{O}(\dim(\irrep))$ in the representation of a Pauli basis, instead of the usual $\mathcal{O}(\dim(\irrep)^3)$ (\ref{appendix:eigendecompositions}).
\end{itemize}

In practice, we found such optimizations to significantly accelerate our simulations, and all the numerics presented in this work rely on a Python implementation of $\gsim$ incorporating them.

\subsection{Efficient evolution}
\label{sec:efficient_implementation_gsim}
Evaluating each of the $LK$ terms $e^{-i\theta\bar{\bm{H}}_k}$ appearing in Eq.~\eqref{eqn:gsim_ansatz_evolution} of the simulation routine has time complexity scaling as $\mathcal{O}(\dim(\irrep)^3)$ using standard matrix exponentiation routines. However, given the full-rank eigendecompositions
\begin{equation}
    \bar{\bm{H}}_k=\bar{\bm{Q}}_k\text{diag}(\epsilon_k^{(1)},\hdots,\epsilon_k^{(\dim(\irrep))})\bar{\bm{Q}}_k^T\label{eqn:hamiltonianeigendecomposition}
\end{equation}
of the (adjoint representation of the) gate generators $\bar{\bm{H}}_k$, the action of these parametrized gates is given as
\begin{equation}
    \Phi^{\operatorname{Ad}}_{\lambda}(e^{-i\theta H_k)}=\bar{\bm{Q}}_k\text{diag}(e^{-i\theta\epsilon_k^{(1)}},\dots,e^{-i\theta\epsilon_k^{(\dim(\irrep))}})\bar{\bm{Q}}_k^T,
    \label{eqn:spectralpropagators}
\end{equation}
which can be applied to a vector $\bm{e}\in\mathbb{R}^{\dim(\irrep)}$ with $\mathcal{O}(\dim(\irrep)^2)$ time complexity.
Notably, these decompositions only need to be performed once per algebra and per gate generator used, the number of which is typically no more than polynomial in the system size $n$ for relevant applications. 
Pre-computing these eigendecompositions na\"ively scales as $\mathcal{O}(\dim(\irrep)^3)$, however, in Appendix~\ref{appendix:eigendecompositions} we detail a procedure by which the eigendecompositions for any Pauli-basis irrep (e.g. $i\g_0$) can be computed in $\mathcal{O}(\dim(\irrep))$.
An explicit algorithm for simulation incorporating these pre-computed eigendecompositions is provided in Algorithm \ref{alg:evolution}.

\begin{algorithm}[H]
\caption{Efficient evolution of observable expectation values using $\mathfrak{g}$-sim.}\label{alg:evolution}
\begin{algorithmic}
\INPUT Circuit parameters $\bm{\theta}$, input observables $\bm{e}^{(\text{in})}$, circuit generator eigendecompositions (Eq.~\eqref{eqn:spectralpropagators}) $\bar{\bm{Q}}_k$ and $\epsilon_k^{(g)}$ 
\State $\bm{e}^{(\operatorname{out})} \gets \bm{e}^{(\text{in})}$
\For{$l$ in $1,\dots,L$}
    \For{$m$ in $1,\dots,M$}
        \State $\bm{e}^{(\operatorname{out})} \gets \bar{\bm{Q}}_{k}^T\bm{e}^{(\operatorname{out})}$
        \For{$g$ in $1,\dots,\dim(\irrep)$}
            \State $(\bm{e}^{(\operatorname{out})})_g \gets e^{-i\theta_{lk}\epsilon_{k}^{(g)}}(\bm{e}^{(\operatorname{out})})_g$
        \EndFor
        \State $\bm{e}^{(\operatorname{out})} \gets \bar{\bm{Q}}_{k}\bm{e}^{(\operatorname{out})}$
    \EndFor
\EndFor
\OUTPUT Vector $\bm{e}^{(\operatorname{out})}$ of basis observables in $\irrep$
\COMPLEXITY $\mathcal{O}(M\dim(\irrep)^2)$

\end{algorithmic}
\end{algorithm}

\subsection{Sparsity of adjoint representation}\label{appendix:sparsity}
When the relevant basis of $\irrep$ consists of Pauli strings and the gate generators are also Pauli strings (e.g. when performing simulations in $i\g_0$ where all gates have Pauli operator generators), the adjoint representations $\bar{\bm{H}}_l$ of the  gate generators and their corresponding parametrized gates are of an extremely sparse form, having fewer than $2\dim(\irrep)$ nonzero entries per $(\dim(\irrep) \times \dim(\irrep))$-dimensional matrix. Thus, substantial speedups can be obtained by leveraging sparse linear algebra libraries. This is the case for most simulations presented in this work.

However, we sometimes must apply gates whose generators are not Pauli strings, such as $e^{-i\theta\sum_{j=1}^{n-1}\sigma^x_j \sigma^x_{j+1}}$ for VQE on the LTFIM (Sec.~\ref{sec:LTFIM}), or $e^{-i\theta\sum_{j=1}^{n-1}\sigma^z_j \sigma^z_{j+1}}$ for QAOA (Sec.~\ref{sec:QAOA}). For gates like these, the adjoint representation of the generator does \textit{not} admit an efficient sparse representation, and will in general be dense. Figure~\ref{fig:gate_sparsity} depicts this difference in sparsity. 
\begin{figure}
    \centering
    \includegraphics[width=0.48\textwidth]{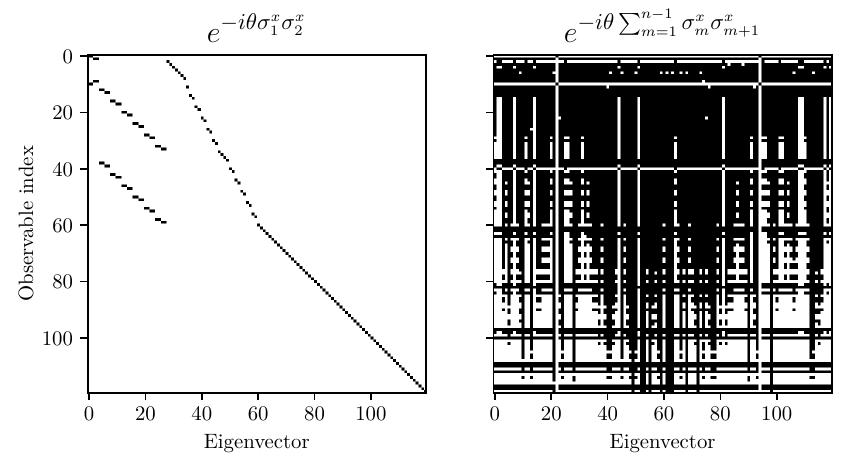}
    \caption{\textbf{Sparsity of gates generated by Pauli strings versus general gates.} We plot sparsity maps for the eigenvector matrix $\bar{\bm{Q}}$ where $\Phi^{\operatorname{Ad}}_{\lambda}(\exp(-i\theta H))=\bar{\bm{Q}}\text{diag}(e^{-i\theta\epsilon^{(1)}},\dots,e^{-i\theta\epsilon^{(\dimg)}})\bar{\bm{Q}}^T$, for $H=\sigma^x_1\sigma^x_2$ (left) and $H=\sum_{m=1}^{n-1} \sigma^x_m \sigma^x_{m+1}$ (right), for a system of $n=8$ qubits in the $\irrep=i\mathfrak{g}_0$ representation. White (black) entries in the map indicates (non) zero entries. The gate $\exp(-i\theta \sigma^x_1\sigma^x_2)$ has a sparse eigenvector matrix since it is generated by a single Pauli string, whereas the gate $\exp(-i\theta\sum_{m=1}^n \sigma^x_m \sigma^x_{m+1})$ has a dense eigenvector matrix since it is generated by a sum of Pauli strings. }
    \label{fig:gate_sparsity}
\end{figure}

For gates not generated by single Pauli strings, but rather sums of commuting Pauli strings, one can factorize the gate and still use sparse methods. For example
\begin{align}
    e^{-i\theta\sum_{j=1}^{n-1}\sigma^x_j \sigma^x_{j+1}} &= \prod_{j=1}^{n-1}e^{-i\theta \sigma^x_j \sigma^x_{j+1}} \\
    \implies \underbrace{\Phi^{\operatorname{Ad}}_{\lambda}\left(e^{-i\theta\sum_{j=1}^{n-1}\sigma^x_j \sigma^x_{j+1}}\right)}_{\text{dense}} &= \prod_{j=1}^{n-1}e^{-i\theta\overbrace{\Phi_{\lambda}^{\operatorname{ad}}(\sigma^x_j \sigma^x_{j+1})}^{\text{sparse}}},
\end{align}
since the adjoint representation is a homomorphism. However, in such cases the dense nature of the eigenvector matrix can make GPU-accelerated dense linear algebra methods highly appealing. For example, we consider tests on our hardware (a single core of an Intel Xeon Platinum 8268 for CPU methods vs a NVIDIA Tesla V100 for GPU methods), for our implementation of $\mathfrak{g}$-sim. We found that for systems with sparse gates only, GPU acceleration provided an overall speedup of a factor of 3-5 versus a single CPU core, whereas for systems with dense gates, GPU acceleration provided an overall speedup of a factor of 60-100. The choice of sparse or dense methods and CPU or GPU hardware must be made judiciously depending on problem type and the need for parallelizability.

\subsection{Efficient calculation and eigendecomposition of Pauli-basis adjoint representations}
\label{appendix:eigendecompositions}

For our Lie-algebraic simulations, we must be able to compute the adjoint representations (Definition \ref{defn:adjoint_representation}) of our gate generators $H_k$:
\begin{equation}
    \left(\bar{\bm{H}}_k\right)_{\alpha\beta}=-\Tr\left[H_k\left[G_\alpha,G_\beta\right]\right].
    \label{eqn:adjoint_generator}
\end{equation}
na\"ively computing an adjoint representation via Eq.~\eqref{eqn:adjoint_generator} would scale as $\mathcal{O}(\dim(\irrep)^2)$. However, if the basis operators $G_\alpha$ of $\irrep$ are Pauli strings
\begin{equation}
    G_\alpha = \bigotimes_{m=1}^n \sigma_m^{s_{\alpha, m}}, \quad s_{\alpha, m}\in \{x,y,z\},
\end{equation}
as is the case for $i\mathfrak{g}_0$, and $H_k$ is also a Pauli string, then for fixed $k$ and $\alpha$ ($\beta$), Eq.~\eqref{eqn:adjoint_generator} has a nonzero value for at most one value of $\beta$ ($\alpha$). That is, each row and column have at most one nonzero entry. Thus, one can construct the adjoint representation for $\bar{\bm{H}}_k$ in just $\mathcal{O}(\dim(\irrep))$ by iterating over $\alpha\in \{1,\dots,\dim(\irrep)\}$, inverting Eq.~\eqref{eqn:adjoint_generator} to find the Pauli string $\sigma$ such that $\Tr[H_k[G_\alpha,\sigma]]$ (which is possible since by assumption $H_k$ and $G_\alpha$ are Pauli strings), checking if $\sigma=G_\beta$ for some $\beta$ (which can be done in $\mathcal{O}(n)$ by using a hashmap), and populating the corresponding element if so.

Similarly, although na\"ively computing the eigendecomposition \eqref{eqn:hamiltonianeigendecomposition} would be $\mathcal{O}(\dim(\irrep)^3)$, one may compute it in just $\mathcal{O}(\dim(\irrep))$ in the case where the $H_k$ and $G_\alpha$ are Pauli strings. Since the $\bar{\bm{H}}_k$ are Hermitian and at most one element of each row/column is nonzero, the adjoint representations take the form (up to some permutation of rows and columns)
\begin{equation}
    \bar{\bm{H}}_k=2\left(\bigoplus_{m_1=1}^{(\dim(\irrep) - \operatorname{dim}(\ker(\bar{\bm{H}}_k)))/2} \bar{\bm{\sigma}}^y \right)\oplus\left(\bigoplus_{m_2=1}^{\operatorname{dim}(\ker(\bar{\bm{H}}_k))/2} \bar{\bm{0}}\right),
    \label{eqn:sparse_adjoint_form}
\end{equation}
where the factor of $2$ originates from the Pauli commutation relations, and $\bar{\bm{\sigma}}^y=\begin{bmatrix}0&-i \\ i & 0\end{bmatrix}$ and $\bar{\bm{0}}=\begin{bmatrix}0&0\\0&0\end{bmatrix}$ are the Pauli Y matrix and the zero matrix, respectively. Since the eigendecomposition of $\bar{\bm{\sigma}}^y$ is known, one may compute the eigendecomposition of $\bar{\bm{H}}_k$ in $\mathcal{O}(\dim(\irrep))$ by iterating over the $\bar{\bm{\sigma}}_y$ summands, populating the corresponding eigenvectors and eigenvalues, and using any set of vectors spanning $\ker(\bar{\bm{H}}_k)$ as the remaining eigenvectors with zero eigenvalues.

\section{Gradient calculations}
\label{appendix:gradients}
\subsection{Observable gradients}
\label{appendix:observable_gradients}
For VQA and QML applications, it is often beneficial to be able to efficiently calculate gradients $\grad_{\bm{\theta}}\langle O \rangle$ of observables $O=\sum_\alpha (\bm{w})_\alpha B_\alpha^{(\lambda)}$ for $\bm{w}\in\mathbb{R}^{\dim(\irrep)}$, $B_\alpha^{(\lambda)}\in\irrep$. Differentiating Eq.~\eqref{eqn:expectations_from_e_vec} with respect to $\theta_m$, we find
\begin{equation}
    (\grad_{\bm{\theta}} \langle O\rangle)_m=\frac{\partial\langle O\rangle}{\partial\theta_{m}}=\bm{w}^T\left(\frac{\partial \Phi_{\lambda}^{\operatorname{Ad}}(U(\bm{\theta}))}{\partial \theta_{m}}\right)\bm{e}_\lambda^{(\text{in})},
    \label{eqn:circuit_gradients}
\end{equation}
where we may compute
\begin{equation}
\frac{\partial \Phi_{\lambda}^{\operatorname{Ad}}(U(\bm{\theta}))}{\partial \theta_{m}}=-i\left(\bar{\bm{U}}_{m+1:M}\bar{\bm{H}}_{k}\bar{\bm{U}}_{1:m}\right),
\label{eqn:gradient}
\end{equation}
where $\bar{\bm{U}}_{a:b}\equiv e^{-i\theta_b \bar{\bm{H}}_{b}} e^{-i\theta_{b-1} \bar{\bm{H}}_{b-1}}\dots e^{-i\theta_a \bar{\bm{H}}_{a}}$, $m$ is a combined index for $l$ and $k$, and $M=LK$.
Na\"ively evaluating Eq.~\eqref{eqn:gradient} requires $\mathcal{O}(M^2)$ matrix-vector multiplications (each $\mathcal{O}(\dim(\irrep)^2)$), however by exploiting a recurrent property of the gradients, one may compute the gradient with only $\mathcal{O}(M)$ matrix-vector multiplications (i.e., with the same asymptotic complexity as simulating the circuit in the first place). This procedure, which follows the `reverse-mode' gradient calculation of Ref.~\cite{jones2020efficientcalculation} in conjunction with the efficient spectral approach of Algorithm \ref{alg:evolution}, is outlined in Algorithm \ref{alg:gradients}.
\begin{algorithm}[H]
\caption{Calculating gradients in $\mathfrak{g}$-sim.}\label{alg:gradients}
\begin{algorithmic}
\INPUT Circuit $\bm{\theta}$, input expectation value vector $\bm{e}^{(\text{in})}$, circuit generator adjoint representations $\bar{\bm{H}}_k$ and eigendecompositions $\bar{\bm{Q}}_k$, $\epsilon_k^{(g)}$, coefficients $\bm{w}$ of observable $O$
\State $\bm{\eta} \gets \bar{\bm{U}}\bm{e}^{(\text{in})}$  \Comment{Use the main loop of Algorithm \ref{alg:evolution}.}
\State $\bm{\phi} \gets \bm{w}$
\State $(\grad_{\bm{\theta}} \langle O\rangle)_M \gets -i\bm{\phi}^T \bar{\bm{H}}_{M}\bm{\eta}$
\For{$m$ in $N,\dots,2$}
    \State $\bm{\eta} \gets \bar{\bm{Q}}_{m} \bm{\eta}$
    \State $\bm{\phi} \gets \bar{\bm{Q}}_{m} \bm{\phi}$
    \For{$g$ in $1,\dots,\dim(\irrep)$}
        \State $\eta_g \gets e^{i\theta_{m}\epsilon_{m}^{(g)}}\eta_g$
        \State $\phi_g \gets e^{i\theta_{m}\epsilon_{m}^{(g)}}\phi_g$
    \EndFor
\State $\bm{\eta} \gets \bar{\bm{Q}}_{m}^T \bm{\eta}$
\State $\bm{\phi} \gets \bar{\bm{Q}}_{m}^T \bm{\phi}$

\State $(\grad_{\bm{\theta}} \langle O\rangle)_{m-1} \gets -i\bm{\phi}^T \bar{\bm{H}}_{m}\bm{\eta}$
\EndFor
\OUTPUT Gradient vector $\grad_{\bm{\theta}} \langle O\rangle $
\COMPLEXITY $\mathcal{O}(M\dim(\irrep)^2)$
\end{algorithmic}
\end{algorithm}
Higher order derivatives can also be computed, such as the Hessian
\begin{equation}
    (\grad_{\bm{\theta}}^2\langle O\rangle)_{jk}=\frac{\partial^2\langle O\rangle}{\partial \theta_j \partial \theta_k}=-\bm{h}^T \bar{\bm{U}}_{k+1:N}\bar{\bm{H}}_k\bar{\bm{U}}_{j:k}\bar{\bm{H}}_j\bar{\bm{U}}_{1:j} \bm{e}^{(\text{in})}.
    \label{eqn:hessian}
\end{equation}
However, since the circuit generators $\bar{\bm{H}}_k$ are typically non-invertible, Algorithm \ref{alg:gradients} cannot be generalized to evaluate Eq.~\eqref{eqn:hessian} and higher-order derivatives. Although still polynomial in complexity, evaluating a $P$-th order gradient for $P\geq 2$ will be a factor $\mathcal{O}(M^{P})$ slower than efficiently evaluating the gradient.

\subsection{Compilation of cost function gradients}
\label{appendix:compilation_gradients}
For unitary compilation (Sec.~\ref{sec:compilation}), the use of gradient-based optimizers requires the efficient calculation of $\partial \mathcal{L}_{\mathfrak{g}}/\partial \theta_m$. Differentiating Eq.~\eqref{eqn:adjointspaceunitarycostfunc}, we obtain
\begin{equation}
    \frac{\partial \mathcal{L}_{\mathfrak{g}}}{\partial \theta_m}=\frac{1}{\dimg}\operatorname{Im}\left[\Tr[ \bar{\bm{V}}^T\bar{\bm{U}}_{m+1:M}\bar{\bm{H}}_m\bar{\bm{U}}_{1:m}]\right],
    \label{eqn:Lgderivative}
\end{equation}
where we have used properties of the ansatz structure, given in Eq.~\eqref{eqn:PeriodicStructureAnsatz_app}, and $\bar{\bm{U}}^T_{n:m}\equiv \bar{\bm{U}}^T_n\bar{\bm{U}}^T_{n+1}\dots\bar{\bm{U}}^T_m$. Noting that the matrix terms in Eq.~\eqref{eqn:Lgderivative} have the same recurrent form as Eq.~\eqref{eqn:gradient}, we may compute the full gradient vector in $\mathcal{O}(N\dimg^3)$ by a straightforward modification of Algorithm \ref{alg:gradients} (with matrix-matrix multiplication instead of matrix-vector multiplication).

For efficient implementation, care should be taken with the order of operations when evaluating Eqs.~\eqref{eqn:adjointspaceunitarycostfunc} and \eqref{eqn:Lgderivative}, in particular to avoid any dense-dense matrix multiplication. 
For example, in Eq.~\eqref{eqn:Lgderivative}, one should store the products $(\bar{\bm{U}}_{m+1:m})^T\bar{\bm{V}}$ and $\bar{\bm{U}}_{1:m}$ as dense matrices, use sparse representations of the $\bar{\bm{U}}_m$ when applying the iterative step in Algorithm \ref{alg:gradients}, and use sparse representations of $\bar{\bm{H}}_m$ when evaluating each element of the gradient. We find that with this order of operations, the evaluation is substantially accelerated by the use of sparse-dense matrix multiplication algorithms, and further improved by the use of GPU acceleration (e.g. SpDM with cuSPARSE). Finally, the trace in Eq.~\eqref{eqn:Lgderivative} of the product of two dense matrices ($\bar{\bm{V}}^T\bar{\bm{U}}_{m+1:M}$ and $\bar{\bm{H}}_m\bar{\bm{U}}_{1:m}$) can be evaluated in a dense $\mathcal{O}(\dimg^2)$ operation as
\begin{equation}
    \Tr[ \bar{\bm{V}}^T\bar{\bm{U}}_{m+1:M}\bar{\bm{H}}_m\bar{\bm{U}}_{1:m}]=\sum_{\alpha\beta}\left[((\bar{\bm{U}}_{m+1:m})^T\bar{\bm{V}})\circ(\bar{\bm{H}}_m\bar{\bm{U}}_{1:m})\right]_{\alpha\beta},
\end{equation}
where $\circ$ denotes the Hadamard product. This avoids costly $\mathcal{O}(\dimg^3)$ dense-dense matrix multiplication entirely, which would otherwise be by far the most expensive operation in this procedure.

\subsection{Analog computing}\label{appendix:analog_computing_gradients}
Gradients of expectation values can also be obtained in the analog computing scenario that was laid out in Appendix~\ref{appendix:analog_computing}. 
While not providing full-fledged details of the implementation required, we specify the problem and point the reader toward dedicated references for implementation.

In the analog computing scenario, control of the system occurs at the Hamiltonian level through the control functions $c_h(t)$ appearing in Eq.~\eqref{eq:ctl_hamiltonian}. 
A first step towards defining the gradients of interest requires parametrizing these control functions in terms of a finite set of parameters $\bm{\theta}$ (i.e., $c_h(t) \rightarrow c_h(\bm{\theta}, t)$).
Upon such parametrization, and given an observable $O$, an optimization task would consist in identifying values of $\bm{\theta}$ that minimize the expectation value $\langle O \rangle$, with the gradients of interests denoted $\partial \langle O \rangle / \partial_{\bm{\theta}_m}$.

While similar in notations to the gradients of the circuit parameters in Eq.~\eqref{eqn:circuit_gradients}, we note that depending on the parametrization of the controls adopted, each of the control parameter may affect the control values along the whole evolution, and would appear particularly difficult to evaluate.
Still, progress have been made towards efficient evaluation of such gradients in the context of quantum optimal control and machine learning. 
In particular, methods presented in Refs.~\cite{machnes2018tunable,chen2018neural} generically enable such calculations and require to numerical solve an augmented ODE, but avoid the necessity of storing in memory the state of the system along each numerical step of its evolution.
Applications of these numerical methods have already found application in quantum optimal control problems~\cite{schafer2020differentiable,sauvage2022optimal} but to-date have remained limited to small system sizes due to the exponentially growth of the underlying Hilbert space. Incorporating them with $\gsim$ would circumvent such limitations for certain quantum dynamics.

\section{Review of Wick's theorem}\label{app:wick}
In this section, we review Wick's theorem from a Lie-algebraic perspective, which can be used to efficiently compute expectation values of arbitrary operators over the highest weight states of the circuit's Lie algebra~\cite{somma2005quantum,somma2006efficient}.  

Consider the Cartan-Weyl decomposition of $\g$, given by $\g=\mathfrak{h}\oplus \mathfrak{g}_+\oplus \mathfrak{g}_-$, where $\mathfrak{h}={\rm span}_\mathbb{R}i\{\widetilde{H}_k\}$ denotes the Cartan subalgebra, $\mathfrak{g}_+={\rm span}_\mathbb{R}i\{E_{\alpha' }\}$ and $\mathfrak{g}_-={\rm span}_\mathbb{R}i\{E_{-\alpha' }\}$, with $E_{\alpha' }$ and $E_{-\alpha' }$ being the sets of raising and lowering operators, respectively. Let us denote as $\ket{{\rm hw}}$ the highest weight state of the algebra. In what follows, we will study the task of simulating an expectation value  $\langle O(\bm{\theta}) \rangle \equiv\Tr[O U(\bm{\theta}) \dya{\rm{hw}} U^{\dagger}(\bm{\theta})]$. Note that more generally, we could also consider the case when the initial state is a generalized coherent state~\cite{gilmore1974properties,perelomov1977generalized,zhang1990coherent,delbourgo1977maximum}, i.e., a state of the form $\ket{\psi}=V\ket{{\rm hw}}$ with some (known) $V\in e^{\g}$ as the action of $V$ can be absorbed into $U(\thv)$. 

To begin, take $O\in i\g$. Then, as per Eq.~\eqref{eqn:adj_heis} we know that $U^{\dagger}(\bm{\theta})O U(\bm{\theta})$ will always be expressed as 
\begin{equation}
    U^{\dagger}(\bm{\theta})O U(\bm{\theta})=\sum_kc_{k}\widetilde{H}_k+\sum_{\alpha'} c_{\alpha '}E_{\alpha'}+c^*_{-\alpha '}E_{\alpha'}\,,
\end{equation}
where the coefficients $c_{k}$, $c_{\alpha '}$ and $c^*_{\alpha '}$ can be obtained from the adjoint representation of the unitary. Since\begin{equation}
    E_{\alpha'}\ket{{\rm hw}}=\bra{{\rm hw}}E_{-\alpha '}=0, \quad \widetilde{H}_k\ket{{\rm hw}}=\widetilde{h}_k\ket{{\rm hw}}\,,
\end{equation}
where $\widetilde{h}_k$ are the weights of $\ket{{\rm hw}}$, we can see that
\begin{equation}
    \langle O(\bm{\theta}) \rangle=\sum_kc_{k}\widetilde{h}_k\,.
\end{equation}
Note that up to this point, we have used the adjoint representation (just like in $\gsim$), but we have leveraged the fact that the expectation values of a highest-weight on the algebra elements can be readily computed via the Cartan-Weyl decomposition. This is the key idea behind Wick's theorem. 

Next, consider more general observables given by products of (polynomially-many) elements of the algebra such as $O=O_1 O_2$, with $O_1,O_2\in i\g$. These can belong to higher dimensional irreps, making a simulation via $\gsim$ more expensive according to Eq.~\eqref{theo:scaling}. However, we can again use the adjoint representation to evolve each element in the algebra as  $U^{\dagger}(\bm{\theta})O U(\bm{\theta})=U^{\dagger}(\bm{\theta})O_1 U(\bm{\theta})U^{\dagger}(\bm{\theta})O_2 U(\bm{\theta})$, which will lead to a summation of elements of the Cartan subalgebra, times raising and lowering root operators. Given that these act on the highest-weight, only a few of those terms will be non-zero. In the former case, only the terms $\langle \widetilde{H}_k\widetilde{H}_{k'}\rangle =\widetilde{h}_k\widetilde{h}_{k'} $, and $\langle E_{\alpha'}E_{-\alpha'} \rangle $ are non-zero. The latter can be  expressed in terms of components of the root vector (see~\cite{somma2005quantum}). 

More generally, the overall goal of Wick's theorem is to use the (known) commutation relations of the elements of the algebra to get all the lowering operator to the ``left'' and all the raising operators to the ``right'', as these will vanish when acting on the highest-weight state. While this commutation procedure could be computationally demanding in principle, in practice the properties of $\g$ can be used to greatly simplify the required calculations. For instance, when considering free-fermionic evolutions (see the main text), the highest weight, and generalized coherent states, correspond to  Gaussian states and the evaluation of expectation values over products of elements in the algebra can be mapped to the task of evaluating a Pfaffian~\cite{robledo2009sign}. Hence, the simulation of expectation values for free-fermionic evolutions becomes efficient when the initial state is a Gaussian, or a linear combination of few Gaussian states. 

When applying Wick's theorem to more general initial states, one would need to expand them into a linear combination of generalized coherent states, and apply Wick's theorem for each one, as well as for the cross terms.
In such case, the computational complexity scales with the number of generalized coherent states, and when such number is exponential, the overall cost becomes exponential, even for observables in $i\g$.

\section{Ansatz circuits and benchmarking}
\label{appendix:ansatz}
Details of all ansatz circuits used in this work are given in Table \ref{tab:ansatze}.
\subsection{Benchmark calculations}
\label{appendix:benchmarking}
For the compute time benchmarks in Fig.~\ref{fig:benchmarks}, we repeatedly simulate evolution of a state initialized to $\ket{\bm{0}}$ and transformed by a single layer ($L=1$) of the TFXY 2-local ansatz (see Table \ref{tab:ansatze}) using both our Python implementation of $\gsim$ and TensorFlow Quantum on a single core of an Intel Xeon Platinum 8268. We stress that unlike tensor network methods, the time complexity of applying a gate is independent of circuit depth, so we report our timings per gate. We collect timings using the \texttt{timeit} Python package in batches of repeated simulations (to minimize the influence of measurement distortion due to environmental factors) and average over batches. We bootstrap 95\% confidence intervals on these averages, but they are too small to be visualized in Fig.~\ref{fig:benchmarks}.

Memory requirements for state vectors are computed assuming double-precision arithmetic on a $2^n$-dimensional Hilbert space $\mathcal{H}$. Memory requirements for $\gsim$ are computed by profiling the actual memory footprint of the observable vector $\bm{e}$ and algebra data (adjoint representations of generators $\bar{\bm{H}}_k$ and their eigendecompositions).

\begin{table}
\centering
\begin{overpic}[abs,unit=1mm,width=\textwidth]{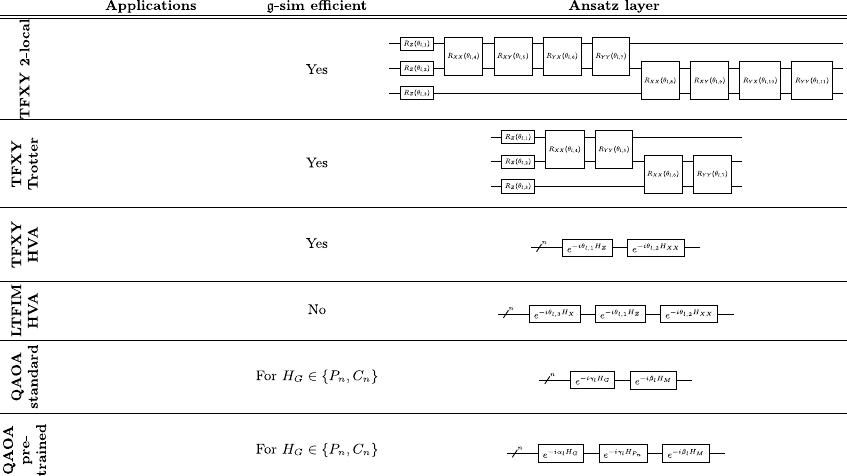}
\put(7,48){\parbox{2in}{
Benchmarking (\ref{sec:efficient_implementation_gsim}) \\ TFXY VQE (\ref{sec:overparametrization}) \\ Random compilation (\ref{sec:compilation_random}) \\ Phase classifier (\ref{sec:supervised_QML})
\vspace{0.45in} \\
Dynamics compilation (\ref{sec:compiling_dynamics}) 
\vspace{0.55in} \\
LTFIM pre-training (\ref{sec:LTFIM})
\vspace{0.4in} \\
LTFIM pre-training (\ref{sec:LTFIM})
\vspace{0.4in} \\
QAOA pre-training (\ref{sec:QAOA})
\vspace{0.45in} \\
QAOA pre-training (\ref{sec:QAOA}) } }
\end{overpic}
\caption{Layers of all variational ans\"{a}tze used in this work, with reference to their applications, whether or not they are classically efficient in $\gsim$ (for simulating dynamics of observables supported by $\mathfrak{g}$), and circuit diagrams for single layers of each ansatz. In diagrams where individual qubits are depicted, the example layer is given for $n=3$ for compactness.}
\label{tab:ansatze}
\end{table}

\section{Subalgebras of $\mathfrak{g}_0$}
\label{appendix:algebra_and_subalgebras}
Throughout this work, we use the algebra $\mathfrak{g}_0$, defined in Eq.~\eqref{eqn:g_tfxy}, which has dimension $\operatorname{dim}(\g_0) = n(2n-1)$ and encompasses many models of interest:
\begin{itemize}
    \item The Hamiltonian algebra for the XY model with open boundary conditions and free coefficients forms a proper subalgebra $\mathfrak{g}_{\text{XY}}\subset \mathfrak{g}_{0}$ with $\text{dim}(\mathfrak{g}_{\text{XY}})=n(n-1)$ \cite{kokcu2021fixed}, given by
    \begin{equation}
    \mathfrak{g}_{\text{XY}}={\rm span}_{\mathbb{R}}\left\langle\left(\{i\sigma^x_j \sigma^x_{j+1}\}_{j=1}^{n-1}\right)\cup \left(\{i\sigma^y_j \sigma^y_{j+1}\}_{j=1}^{n-1}\right)\right\rangle_{\operatorname{Lie}}.
    \label{eqn:g_xy}
    \end{equation}
    \item The Hamiltonian algebra for the transverse field Ising model (TFIM) with open boundary conditions and non-free coupling coefficients forms a proper subalgebra $\mathfrak{g}_{\text{TFIM}}\subset \mathfrak{g}_0$ with $\text{dim}(\mathfrak{g}_{\text{TFIM}})= n^2$ \cite{kazi2022landscape}, given by
    \begin{equation}
    \mathfrak{g}_{\text{TFIM}}={\rm span}_{\mathbb{R}}\left\langle \left(i\sum_{j=1}^{n-1}\sigma^x_j \sigma^x_{j+1}\right)\cup\left( i\sum_{j=1}^n\sigma^z_j\right)\right\rangle_{\operatorname{Lie}}.
    \label{eqn:g_tfim}
    \end{equation}
    \item For a path graph $P_n$ on n vertices, the circuit of the quantum approximate optimization algorithm~\cite{farhi2014quantum} has an associated algebra
    \begin{equation}
        \mathfrak{g}_{\text{QAOA,}P_n}={\rm span}_{\mathbb{R}}\left\langle \left(i\sum_{j=1}^{n-1}\sigma^z_j \sigma^z_{j+1}\right)\cup\left( i\sum_{j=1}^n\sigma^x_j\right)\right\rangle_{\operatorname{Lie}},
        \label{eqn:g_qaoa_path}
    \end{equation}
    which is isomorphic to the one of the TFIM, $\mathfrak{g}_{\text{QAOA,}P_n}\cong\mathfrak{g}_{\text{TFIM}}$. The two are related by a Hadamard transform.
\end{itemize}

\section{Further details on circuit compilation}
\label{appendix:further_details_compilation}
\subsection{Compiling small-$T$ global rotations}
\label{appendix:shallow_compilation}
In Sec.~\ref{sec:compilation_random}, we investigate the trainability of $\mathcal{L}_{\mathfrak{g}}$ for general random $V\in\mathcal{G}$. This includes both global and local dynamics (the respective locality is controlled by the choice of Hamiltonian generators $G_\alpha$ that are included in Eq.~\eqref{eqn:randomunitaries}). Our numerics suggest that compilation of local dynamics is scalable in system size $n$ and rotation angle (dynamics duration) $T$, with $\mathcal{L}_{\mathfrak{g}}$ vanishing exponentially in the number of iterations at all tested values for local dynamics. Similarly, for global dynamics, $\mathcal{L}_{\mathfrak{g}}$ vanishes exponentially in the number of iterations when $T$ is sufficiently large (beyond $T\approx 1$ for a normalized global Hamiltonian). However, for smaller values of $T$, a different convergence behavior is seen when compiling global dynamics. After a brief initial period where $\mathcal{L}_{\mathfrak{g}}$ vanishes exponentially in the number of iterations, a transition to very slow convergence is observed. This is demonstrated in Fig.~\ref{fig:shallow_global_compilation}. The reason for this phenomenon is not currently understood. 
\begin{figure}
    \centering
    \includegraphics[width=0.48\textwidth]{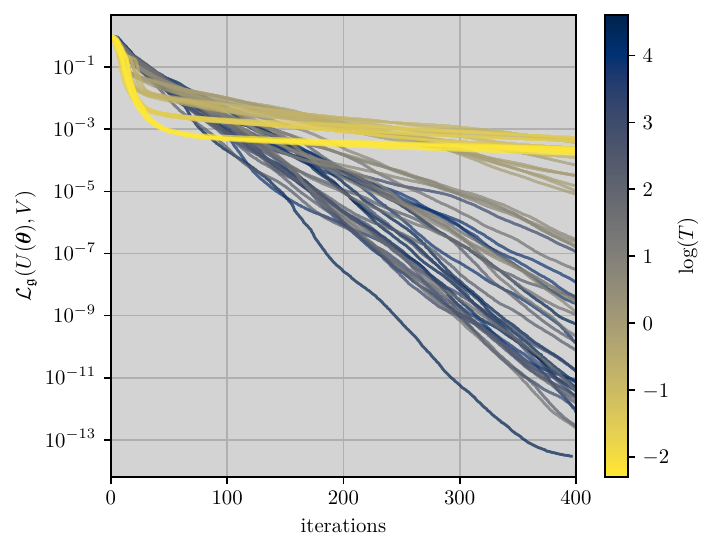}
    \caption{\textbf{Compiling short-time dynamics for global Hamiltonians.} For a system of $n=20$ qubits, we train an overparametrized (27-layer) ansatz to minimize $\mathcal{L}_{\mathfrak{g}}$, with a target unitary generated by a normalized global Hamiltonian. At larger values of $T$, the expected exponential convergence is observed. At $T\approx 1$, a transition is observed to a regime with poorer trainability.}
    \label{fig:shallow_global_compilation}
\end{figure}
\subsection{Center of $\mathcal{G}_{\operatorname{TFXY}}$}
\label{appendix:unfaithfulness}
In Section \ref{sec:compilation_faithfulness}, we observed that when $Z(\mathcal{G})$ is nontrivial (i.e., including non-identity terms), elements of $Z(\mathcal{G})$ can be compiled by the ansatz but cannot be distinguished by the adjoint-representation loss function $\mathcal{L}_{\g}$. Here we show that there is exactly one non-trivial element of $Z(\mathcal{G}_{\operatorname{TFXY}})$ up to a global phase.

We note that for a dynamical Lie algebra $\mathfrak{g}$ and its dynamical Lie group $\mathcal{G}=e^{\mathfrak{g}}$, we have $Z(\mathcal{G})=\mathfrak{g}' \cap \mathcal{G}$, where $\mathfrak{g}'\equiv \{A \in \mbb{C}^{d \times d} \mid \forall B \in \g, AB=BA \}$ is the commutant of $\mathfrak{g}$. Hence, given characterization of the commutant of the Lie algebra of interest, it is relatively easy to determine the center of its Lie group. 
For instance, we have $\mathfrak{g}'_{\operatorname{TFXY}}=\operatorname{span}\{I,(\sigma^z)^{\otimes n}\}$~\cite{kazi2022landscape}, and one can see that the non-identity element of the commutant $(\sigma^z)^{\otimes n}$ can be compiled up to a global phase, since $\sigma_z=ie^{-\frac{i\pi}{2}\sigma_z}=-ie^{-\frac{3i\pi}{4}\sigma_z}$. That is, $(\sigma^z)^{\otimes n}$ can be compiled with $\sigma_z$ rotations on each qubit, with angles $\pi/2$ or $\pi/4$, and thus belong to $\mathcal{G}_{\operatorname{TFXY}}$. This leaves us with exactly one (non-trivial) element of the group that cannot be detected by $\mathcal{L}_{\mathfrak{g}}$.

\section{Implementing the binary quantum-phase classifier on a quantum device}\label{appendix:implementing_classifier}
In Section \ref{sec:supervised_QML}, we described the training of a quantum-phase classifier, implemented via a variational ansatz circuit $U(\bm{\theta})$ defined in Eq.~\eqref{eqn:LTFIM_ansatz-2}. After training, a set of optimal parameters $\bm{\theta}^*$ is acquired, allowing classification of unseen states using the unitary $U_*\equiv U(\bm{\theta}^*)$. Straightforward implementation of the classifier on quantum hardware consists in applying the unitary $U_*$ to an unknown state, measuring the operator $O=\sigma^z_1$, and assigning a label according to Eq.~\eqref{eqn:assign_label}. However we could also replicate the model by means of measurements and post-processing only. Indeed, while Result~\ref{thm:basic_evolution} was framed as an evolution of expectation values, it can equivalently be seen as an (Heisenberg) evolution of the observable.

Given the adjoint representation $\bar{\bm{U}}_{*}=\Phi^{\operatorname{Ad}}_{\lambda}(U_*)$ of the trained circuit (computed in $\gsim$), and a description of the observable $O \in i\g$ as the vector $\bm{w}$ such that $O=\sum (\bm{w})_\alpha G_\alpha$, we have 
\begin{equation}
    O':=  U_*^{\dagger} O U_*=\sum_{\alpha=1}^{\dimg} (\bm{w}')_\alpha G_\alpha, \text{  with  } \bm{w}'= \bm{w}^{\rm T} \bar{\bm{U}}_{*}.
    \label{eqn:inference_observable}
\end{equation}
Hence, one recovers the classifier outputs as a weighted sum of the expectation values of the observables $G_\alpha$. 

Depending on the characteristics of the quantum device at hand (i.e., measurement versus gate fidelity and cycle times), one approach may be preferred to another. For the measurement-based one, one may further reduce the amount of measurements required by pruning the model (i.e., approximating Eq.~\eqref{eqn:inference_observable} with a subset of the terms involved). 
For instance, this could be achieved by discarding terms corresponding to small weights $|(\bm{w}')_\alpha|<\delta$ for some threshold $\delta$, or by discarding greater-than-$k$-local terms. 
In particular, the later could be combined with classical shadows~\cite{huang2020predicting} by means of random measurements (rather than direct measurements of expectation values of all the $G_\alpha$) ensuring small approximation error due to the bounded locality.

\begin{figure}
    \centering
    \includegraphics[width=0.35\textwidth]{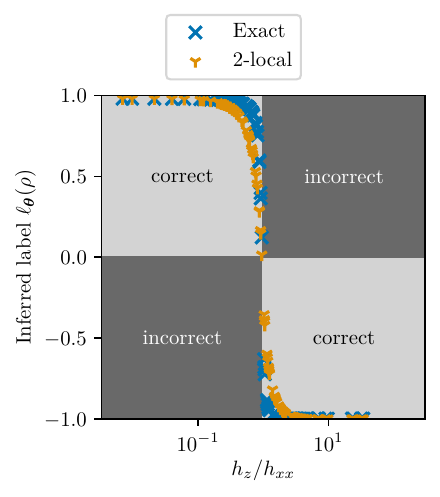}
    \caption{\textbf{Comparison of exact label inferences to their 2-local approximation.} We compare the exact inferred labels of the TFIM phase classifier on an unseen training dataset to a 2-local approximation. The 2-local approximation still achieves 100\% classification accuracy, although $\ell_{\bm{\theta}}(\rho)$ changes less rapidly near the phase transition.}
    \label{fig:shadow_classifier}
\end{figure}
To study further the effect of such locality-pruning, we compare the accuracy of a trained classifier to its 2-local approximation, for an undisguised ($T=0$) TFIM of $n=50$ qubits. 
As can be seen in Figure~\ref{fig:shadow_classifier}, despite slight differences in the output of the models, the approximated classifier retains full accuracy (with 100\% of the testing data correctly classified) that can be explained by the high locality of the TFIM order parameters.

\end{document}